\newcommand{\RN}[1]{  \textup{\uppercase\expandafter{\romannumeral#1}}}
\newtheorem{definition}{Definition}
\newtheorem{proposition}{Proposition}
\newtheorem{theorem}{Theorem}
\newtheorem{corollary}{Corollary}
\newtheorem{lemma}{Lemma}
\newtheorem{sublemma}{Sublemma}
\newtheorem{assumption}{Assumption}
\theoremstyle{remark}
\newtheorem{example}{Example}
\newcommand{\R}{\mathbb{R}}
\newcommand{\E}{\mathbb{E}}
\renewcommand{\P}{\mathbb{P}}
\newcommand{\Var}{\text{Var}}
\newcommand{\maxx}[1]{\max\{#1\}}
\numberwithin{equation}{section}
\numberwithin{definition}{section}
\numberwithin{theorem}{section}
\numberwithin{proposition}{section}
\numberwithin{lemma}{section}}
\DeclareMathOperator*{\argmax}{arg\,max}
\newcommand{\indep}{\perp\!\!\!\perp}
\title{The Value of Context: \\
Human versus Black Box Evaluators}
\author{Andrei Iakovlev and Annie Liang\thanks{Department of Economics, Northwestern University. We thank Modibo Camara, Krishna Dasaratha, Alex Frankel, Ben Golub, Kevin He, Xiaosheng Mu, Matthew Murphy, Jacopo Perego, Debraj Ray, and Marzena Rostek for helpful comments and suggestions.}}
\date{\today}
\begin{document}

\maketitle

\begin{abstract}
Machine learning algorithms are now capable of performing evaluations previously conducted by human experts (e.g., medical diagnoses). How should we conceptualize the difference between evaluation by humans and by algorithms, and when should an individual prefer one over the other? We propose a framework to examine one key distinction between the two forms of evaluation: Machine learning algorithms are standardized, fixing a common set of covariates by which to assess all individuals, while human evaluators customize which covariates are acquired to each individual. Our framework defines and analyzes the advantage of this customization---the \emph{value of context}---in environments with high-dimensional data. We show that unless the agent has precise knowledge about the joint distribution of covariates, the benefit of additional covariates generally outweighs the value of context. 
\end{abstract}

\section{Introduction}

\begin{quote} \footnotesize{``A statistical formula may be highly successful in predicting whether or not a person will go to a movie in the next week. But someone who knows that this person is laid up with a broken leg will beat the formula. No formula can take into account the infinite range of such exceptional events.''
\hfill --- Atul Gawande, \emph{Complications: A Surgeon's Notes on an Imperfect Science} }
\end{quote}

\normalsize
\bigskip

Predictions about people are increasingly automated using black-box algorithms. How should individuals compare evaluation by algorithms (e.g., medical diagnosis by a machine learning algorithm) with more traditional evaluation by human experts (e.g., medical diagnosis by a doctor)?

 One important distinction is  that black-box algorithms are standardized, fixing a common set of inputs by which to assess all individuals. Unless the inputs to the black box are  exhaustive, additional information can (in some cases) substantially modify the interpretation of those inputs that have been acquired. For example,  the context that a patient is currently fasting may change the interpretations of  ``dizziness'' and ``electrolyte imbalance,'' and the context that a job applicant is an environmental activist may change how a prior history of arrest is perceived. If these auxiliary characteristics are not specified as inputs in the algorithm, the individual cannot supply them.
 
 In contrast, individuals can often explain their unusual circumstances or characteristics to a human evaluator through conversation. Thus, even if the human evaluator  considers fewer inputs than a black box algorithm does, these inputs  may be better adapted to the individual being evaluated. The perception that humans are better able to take into account an individual's unique situation is a significant factor in  patient resistance to AI in healthcare \citep{Longonietal2019}. Our objective is to understand when, and to what extent, this difference between human and machine evaluation matters.

 Our contribution in this paper is twofold. First, we propose a theoretical framework that formalizes this distinction between human and black box evaluation. Second, we explain assumptions under which it will turn out that the the agent should prefer one form of evaluation over the other. We see our paper as a complement to a growing empirical literature that compares human versus black box evaluation. Here our goal is to conceptualize the difference between human and black box evaluators, and to clarify properties of the informational environment that are important for choosing between the two.

In our model, an agent is described by a binary covariate vector and a real-valued type (e.g., the severity of the agent's medical condition). The type can be written as a function of the covariates, which we henceforth call the \emph{type function}. Covariates are separated into standard covariates (e.g., medical history, lab tests, imaging scans) and nonstandard covariates (e.g., religious information, genetic data, wearable device data, and financial data). 

We suppose that the agent may know how the standard covariates are correlated with the type, but cannot distinguish between the predictive roles of the nonstandard covariates. Formally, the agent has a belief over the type function, and we impose two assumptions on the agent's prior. The first is a symmetry assumption  that says that the agent's prior over these functions is unchanged by permuting the labels and values of the nonstandard covariates. If we interpret the covariates as signals about the agent's type, then uncertainty about the type function corresponds to uncertainty about the signal structure (\`{a} la model uncertainty, e.g., \citet{ACY2015} and \citet{MorrisYildiz}). The second assumption fixes the unpredictability of the agent's type to be constant in the total number of covariates. We impose this because in many applications, machine learning algorithms have millions of inputs, and yet cannot predict the outcome perfectly. Thus our ``many covariates'' limit does not represent a situation in which the total of amount of information grows large, but rather one in which the type function can be arbitrarily complex. We view these two assumptions as useful conceptual benchmarks, but subsequently show that neither are essential for our main results (see Section \ref{sec:ExtendBreak} for details).

The agent's payoff is determined by his true type and an evaluation, which may be made either by a human evaluator or a black-box evaluator.
In either case the evaluation is a conditional expectation of the agent's type given the agent's standard covariates and some fraction of the agent's nonstandard covariates. But the sets of nonstandard covariates that are observed by the black box evaluator and the human evaluator differ in two ways.

First, the black box evaluator observes a larger fraction of the nonstandard covariates than the human evaluator does (since humans cannot process as much information as black box algorithms can). Second, the nonstandard covariates observed by the black box evaluator are a pre-specified set of algorithmic inputs, which  are fixed across individuals. For example, a designer of a medical algorithm may specify a set of inputs including (among others) blood type, BMI, and smoking status. The black box algorithm learns a mapping from those inputs into the diagnosis.  We view the human evaluator as instead uncovering nonstandard covariates during a conversation, where the specific path of questioning may vary across agents. Thus the human evaluator may end up learning about one individual's sleep schedule but another individual's financial situation.

Rather than modeling these conversations directly, we consider an upper bound on the agent's payoff under human evaluation, where the covariates that the human observes are the ones that maximize the agent's payoffs (subject to the human's capacity constraint). We say that the agent prefers the black box if  the agent's expected payoffs are higher under black box evaluation even compared to these \emph{best-case} conversations with the human.

This comparison essentially reduces to the question of whether the agent prefers an evaluator who observes a larger fraction of (non-targeted) nonstandard covariates about the agent, or an evaluator who observes a smaller but targeted fraction of nonstandard covariates.\footnote{This question is spiritually related to \citet{AkbarpourMalladiSaberi}'s comparison of the network diffusion value of a small number of targeted seeds versus a larger number of randomly selected seeds. Like them, we will find that a larger number of (non-targeted) inputs is superior, but the mechanisms behind these results are very different; in particular, network structure does not play a role in our results.} Towards this comparison, we first introduce a  benchmark, which is the expected payoff that the agent would receive if interacting with an evaluator who observes no nonstandard covariates. 
We define the \emph{value of context} to be the improvement in the agent's payoffs under best-case human evaluation, relative to this benchmark.  The value of context thus quantifies the  extent to which the agent's payoffs can possibly be improved when the evaluator observes nonstandard covariates suited to that agent.

 Our first main result says that under our assumptions on the agent's prior, the expected value of context vanishes to zero as the number of covariates grows large.  
Thus even though there may be realizations of the type function given which the value of context is large, in \emph{expectation} it is not.   The contrapositive of this result is that if the expected value of context is high in some application, it must be that our assumptions on the prior do not hold, i.e., the agent has some ex-ante knowledge about the predictive roles of the nonstandard covariates. 

We prove this result by studying the sensitivity of the evaluator's expectation to the set of covariates that are revealed. Intuitively, a large value of context requires that the evaluator's beliefs move sharply after observing certain nonstandard covariates. We show that the largest feasible change in the evaluator's beliefs can be written as the maximum over a set of random variables, each corresponding to the movement in the evaluator's beliefs for a given choice of covariates to reveal. The proof proceeds by first reducing this problem to studying the maximum of a growing sequence of (appropriately constructed) i.i.d.\ variables, and  then applying a result from \citet{chernozhukov2013gaussian} to show that this expected maximum concentrates on its expectation as the number of covariates grows large. We conclude by bounding this expectation and demonstrating that it vanishes.

We next use this result to compare the agent's expected payoff under human and black box evaluation, when the total number of covariates is sufficiently large. We show that when the agent prefers a more accurate evaluation---formally, when the agent's payoff is convex in the evaluation---the agent should (eventually) prefer an algorithmic evaluator with access to more covariates over a human evaluator to whom the agent can provide context. And when the agent's payoff is concave in the evaluation, the conclusion is (eventually) reversed. We view these conclusions as relevant not only in the many-covariates limit: We quantify the number of covariates that is needed for our result to hold, and show that it can be quite small. For example, if the agent's utility function satisfies mild regularity conditions, the Human evaluator observes 10\% of covariates, and the Black Box evaluator observes 90\%, of covariates, then our result holds as long as there are at least 14 covariates.

We subsequently strengthen our main results in two ways: First, we show that not only does the expected value of context vanish for each  agent, but in fact the expected \emph{maximum} value of context across agents also vanishes. Thus,  the expected value of context is eventually small for everyone in the population. Second, we  show that our main results extend when the agent and evaluator interact in a disclosure game, where the agent chooses which nonstandard covariates to reveal, and the evaluator makes inferences about the agent based on which covariates are revealed  (given the agent's equilibrium reporting strategy).

We conclude by examining the role of our assumptions about the agent's prior, and the extent to which our results depend on them. First, we study two variations of our main model, in which the symmetry assumption is relaxed: In the first, we suppose that there is a ``low-dimensional'' set of covariates that relevant for predicting the agent's type;  in the second, we suppose that the agent knows ex-ante the predictive role of certain nonstandard covariates. In both of these settings, our main results extend partially but can also fail: For example, if the set of relevant covariates is sufficiently small that they can be fully disclosed to the evaluator, then the expected value of context typically will not vanish. Next we show that our results extend also in a model in which the predictability of the agent's type is higher in environments with a larger number of covariates (thus relaxing our second asumption on the prior). Finally we provide an abstract learning condition under which our results extend: It is enough for the informativeness of each individual set of covariates to vanish as the total number of covariates grows large.  Together with our main results, these extensions clarify different categories of informational assumptions under which the expected value of context does or does not turn out to be high.

Our model is not meant to be a complete description of the differences between human and black box evaluation. For example, we do not consider human or algorithmic bias \citep{Kleinbergetal2017,gillis2021fairness}, explainability \citep{Yangetal}, preferences for empathetic evaluators, or the possibility that the human evaluator has access to information that is not available to the algorithm (e.g., for privacy protection reasons as in \citet{Agarwaletal2023}). We also suppose that both evaluators form correct conditional expectations, thus abstracting away from the possibility of algorithmic overfitting and of bounded human rationality (e.g., as considered in \citet{Spiegler} and \citet{JacksonHaghtalabProcaccia}).\footnote{The problem of overfitting, while practically important, is a function of how the algorithm is trained. We are interested here in intrinsic differences between the qualitative nature of human and black box evaluation, which are difficult to resolve by training the algorithm differently.} We leave extensions of our model that include these other interesting differences to future work.

\subsection{Related Literature}

Our paper is situated at the intersection of the literatures on learning (Section \ref{subsec:Learning}) and strategic information disclosure (Section \ref{subsec:InfoDisclosure}), where our analysis is primarily differentiated from the previous frameworks by our assumption that the agent has model uncertainty (see Section \ref{subsec:Learning}). Our paper is also inspired by a recent empirical literature that compares human and AI evaluation, which we review in Section \ref{subsec:HumanAI}.

\subsubsection{(Asymptotic) Learning} \label{subsec:Learning}

A large literature studies asymptotic learning and agreement across Bayesian agents \citep{BlackwellDubins}. Our main result (Theorem \ref{thm:Main}) can be viewed as bounding (in expectation) the differences in beliefs across Bayesian agents who are given different information. As in \citet{Vives}, \citet{GolubJackson}, \citet{LiangMu2020}, \citet{Hareletal}, and \citet{FrickIijimaIshii} among others, we quantify the rate of convergence in beliefs. The learning rates that we look at are, however, of a different nature from those studied previously. One important distinction is that these previous papers consider asymptotics as the total amount of information accumulates, while our analysis considers asymptotics with respect to a sequence of information structures that we show are increasingly less informative. A second important difference is that the classic learning models suppose that the agent updates to a signal with a known signal structure, while our agent has uncertainty over the signal structure (as in \citet{ACY2015} and \citet{MorrisYildiz}).  Our results characterize the informativeness of this signal in expectation, where the agent's model uncertainty takes a particular (and new) form motivated by the applications we have in mind.

Finally, our paper is related to \citet{DiTillioOttavianiSorensen}, which compares the informativeness of an unbiased signal to the informativeness of a selected signal whose realization is the maximum realization across i.i.d.\ unbiased signals. Again the key difference is our assumption of model uncertainty---that is, in  \citet{DiTillioOttavianiSorensen}, the signal structures that are being compared are deterministic and known, while in ours they are random and compared in expectation. In particular, our agent's prior belief over signal structures can have support on signal processes which are not i.i.d. (for example, it may be that the meaning of one signal is dependent on the meaning of another).

\subsubsection{Strategic Information Disclosure} \label{subsec:InfoDisclosure}

Several literatures study persuasion via strategic information disclosure.  Our model---in which the sender has private information about his type vector, and selectively chooses which elements to disclose to a naive receiver---is closest to models of disclosure of hard information \citep{Dye1985,GrossmanHart1980}, in particular \citet{Milgrom1981}.\footnote{A similar model of information is considered in \cite{glazer2004optimal} and \cite{anticselected}.} The key difference (which follows from our assumption of model uncertainty) is that our sender has uncertainty about how his reports are interpreted. Additionally, our focus is not on examining which incentive-compatible reporting strategy is optimal,\footnote{Indeed, in our main model we do not require choice of an incentive-compatible reporting strategy, since the receiver updates to the sender's disclosure as if it were exogenous information. This is primarily for convenience---we show in Section \ref{sec:Disclosure} that our results extend in a disclosure game.} but instead on asymptotic limits of belief manipulability as the number of components in the type vector grows large. This latter focus is special to our motivating applications.

Our model also has important differences from the other main strands of the persuasion literature. Unlike models of cheap talk \citep{crawford1982strategic}, our agent chooses between messages whose meanings are fixed exogenously (through the realization of the joint distribution relating covariates to the type) rather than in an equilibrium. Unlike the literature on Bayesian persuasion (\cite{KamenicaGentzkow}), our sender chooses which signal realization to share ex-post  from a finite set of signal realizations, rather than committing to a flexibly chosen information structure ex-ante.\footnote{Thus, for example, Bayes plausibility is not satisfied in our setting---the sender's expectation of the receiver's expectation of the state (following disclosure) is generally not the prior expectation of the state.} Indeed, our model gives the sender substantial power to influence the receiver's beliefs relative to this previous literature. It is perhaps surprising, then, that despite the lack of constraints imposed on the sender, we find that the sender is extremely limited in his influence. In our model, this emerges because the sender has a limited choice from a set of information structures, whose informativeness (we show) is vanishing in the total number of covariates.\footnote{The covariates in our model play a similar role to attributes, although the literature on attributes has focused on choice of which attributes to learn about (e.g., \citet{KLABJAN2014190} and \citet{LiangMuSyrgkanis2022}), rather than which attributes to disclose for the purpose of persuasion. An exception is   \citet{Bardhi}, who studies a principal-agent problem in which a principal selectively samples attributes to influence an agent decision.}

\subsubsection{Human vs AI Evaluation} \label{subsec:HumanAI}

Recent empirical papers compare the accuracy of human evaluation with AI evaluation,  finding that machine learning algorithms outperform experts in problems including medical diagnosis \citep{rajpurkar2017chexnet,jung2017simple,Agarwaletal2023}, prediction of pretrial misconduct \citep{Kleinbergetal2017,Angelovaetal}, and prediction of worker productivity \citep{Chalfinetal2016}. Nonetheless, many individuals continue to distrust algorithmic predictions \citep{Jussupowetal2020,bastani2022improving,Laietal2023}. These findings motivate our goal of understanding whether individuals should prefer human evaluators, and when instead the replacement of human evaluation with algorithmic evaluation is welfare-improving for users, as suggested in \citet{ObermeyerEmanuel} among others.

In principle, human decision-making guided by algorithmic predictions should be superior to either human or algorithmic prediction alone. In practice the evidence is more mixed, with the provision of algorithmic recommendations sometimes leading human decision-makers to less accurate predictions \citep{Hoffmanetal2017,Angelovaetal,Agarwaletal2023}.\footnote{Other papers instead consider algorithmic prediction tools that take human evaluation as an input, with greater success towards improving accuracy (e.g.,  \citet{raghu2019algorithmic}).}  The question of how to  aggregate human and machine evaluations is thus important but subtle, and depends on (among other things) whether human decision-makers understand the correlation between their information and that of the algorithm \citep{mclaughlin2022algorithmic,gillis2021fairness,Agarwaletal2023}. We abstract away from these complexities, focusing instead on (one aspect of) the more basic question of why human oversight is even necessary to begin with. We provide a tractable way of formalizing the advantage of human evaluation, and quantify the size of this advantage.

 \section{Model}

\subsection{Setup}
Agents are each described by a binary covariate vector $\bold{x}_n = (x_1, x_2, \dots, x_n)$ and a type $y \in \left[-\overline{y},\overline{y}\right]$ (where $0\le \overline{y} < \infty$), which are structurally related by the function
\[y = f(x_1, \dots,x_n).\]
We refer to $f$ henceforth as the \emph{type function}. The distribution over covariate vectors is uniform in the population.\footnote{All of our results extend for arbitrary finite-valued covariates.}

We refer to the  covariates indexed to $\mathcal{S} = \{1,\dots,s\}$ as \emph{standard} covariates and the covariates indexed to $\mathcal{N} = \{s+1, \dots, n\}$ as \emph{nonstandard} covariates.  

\begin{example}[Job Interview] \label{ex:Loan} Standard covariates describing a job applicant may include their work history, education level, college GPA, and the coding languages they know. Nonstandard covariates may include  their social media activity (e.g., number of followers, posts, likes), wearable device data (e.g., sleep patterns, physical activity level), and hobbies (e.g., whether they are active readers, whether they enjoy extreme sports).\end{example}

\begin{example}[Medical Prediction] \label{ex:Medical} Standard covariates describing a patient may include symptoms, prior diagnoses, family medical history, lab tests and imaging results. Nonstandard covariates may include the patient's religious practices, genetic data, wearable device data, and financial data.\footnote{See \citet{Acosta} for further examples of nonstandard patient covariates that may be predictive, but which are not currently used by clinicians for medical evaluations.} 
\end{example}

An \emph{evaluation} of the agent, $\hat{y} \in \left[-\overline{y},\overline{y}\right]$, is described in the following section. The agent has a Lipschitz continuous utility function $u: \left[-\overline{y},\overline{y}\right]^2 \rightarrow \mathbb{R}$, which maps the evaluation $\hat{y}$ and the agent's true type $y$ into a payoff. 

\begin{example}[Higher Evaluations are Better] \label{ex:PreferHigher} The agent's payoff is
\[u(\hat{y},y) =\phi(\hat{y})\]
for some increasing $\phi$. 
This corresponds, for example, to an agent receiving a desired outcome (e.g., a loan or a promotion) with probability increasing in the evaluation.
\end{example}

\begin{example}[More Accurate Evaluations are Better] \label{ex:PreferAccurate} The agent's payoff is
\[u(\hat{y},y) = -(\hat{y} - y)^2.\]
This corresponds to harms that are decreasing in the accuracy of the evaluation, e.g., medical prediction problems where more accurate evaluations are desired.     
\end{example}

\subsection{Evaluation of the agent} \label{sec:Evaluators}

There are two evaluators: a black box evaluator, henceforth Black Box (it), and a human evaluator, henceforth Human (she). Both form evaluations as an expectation of the agent's type $y$ given observed covariates, so we will introduce notation for these conditional expectations. For any covariate vector $\bold{x}_n=(x_1,\dots,x_n)$ and subset of nonstandard covariates $A \subseteq \mathcal{N}$, let
\begin{equation} \label{notation:C}
C_A(\bold{x}_n) = \{\tilde{x} \in \{0,1\}^n : \tilde{x}_i = x_i \quad \forall i \in \mathcal{S} \cup A\}
\end{equation}
be the set of all covariate vectors that agree with $\bold{x}_n$ on the covariates with indices in $\mathcal{S} \cup A$. Further define
\begin{equation}\label{eq:CondExp}\hat{y}^{f}_{\bold{x}_n}(A)= \frac{1}{\vert C_A(\bold{x}_n) \vert } \sum_{ x\in C_A(\bold{x}_n)} f(x)
\end{equation}
to be the conditional expectation of the agent's type given their standard covariates and their nonstandard covariates with indices in $A$. We use
\[U^f_{\bold{x}_n}(A) = u\left(\hat{y}^{f}_{\bold{x}_n}(A),y\right)\]
to denote the agent's payoff given this evaluation.

Both the human and black box evaluation take the form (\ref{eq:CondExp}), but the observed sets of nonstandard covariates $A$ are different across the evaluators.   Black Box observes the nonstandard covariates in the set $B=\{s+1,\dots, s+b_n\}$ where $b_n = \lfloor \alpha_b \cdot n  \rfloor$.\footnote{One can instead assume that these nonstandard covariates are selected uniformly at random. This will not affect the results of this paper.} 
Importantly, this set is held fixed across agents. So an individual with covariate vector $\bold{x}_n$ receives the evaluation $\hat{y}_{\bold{x}_n}^{f}(B)$ and payoff $U_{\bold{x}_n}^{f}(B)$ when evaluated by the Black Box.\footnote{It is not important for our results that $B$ is common across individuals; what we require is that any randomness in $B$ is independent of the agent's covariates and type. For example, if the set $B$ were drawn uniformly at random for each agent, our results would hold.}

Human differs from Black Box in two ways. First, Human has a capacity of $h_n = \lfloor \alpha_h \cdot n \rfloor$ nonstandard covariates per agent, where $\alpha_h < \alpha_b$ (i.e., Human cannot process as many inputs as Black Box).
Second, Human does not pre-specify which nonstandard covariates to observe, but rather learns these through conversation, and thus potentially observes different nonstandard covariates for each agent. For example, a doctor (evaluator) may pose different questions to different patients (agents) depending on their answers to previous questions. Or a job candidate (agent) might choose to disclose to an interviewer (evaluator)  certain nonstandard covariates that are favorable to him.  

Rather than modeling the complex process of a conversation, we study the quantity 
\begin{equation} \label{eq:UpperBound}
\max_{H \subseteq \mathcal{N}, \vert H \vert \leq \alpha_h\cdot n}  U_{\bold{x}_n}^f(H)
\end{equation}
which is the agent's payoff when the posterior expectation about his type is based on those $\alpha_h\cdot n$ or fewer covariates that are best for him. 

We can interpret this quantity as an upper bound for the agent's payoffs under certain assumptions. First, if the evaluator selects which covariates to observe, then (\ref{eq:UpperBound}) is an upper bound on the agent's possible payoffs across all possible evaluator selection rules. Second, if  covariates are disclosed by the agent, but the evaluator updates to the disclosed covariates as if they had been chosen exogenously, then again (\ref{eq:UpperBound}) represents an upper bound on the agent's possible payoffs.\footnote{\citet{JinLucaMartin} and \citet{Peregoetal} report that the beliefs of experimental subjects falls somewhere in between this naive benchmark and equilibrium beliefs, since subjects do not completely account for the strategic nature of disclosure.}

If however the covariates are disclosed by the agent in a disclosure game, and the evaluator accounts for the strategic nature of this disclosure, then whether (\ref{eq:UpperBound}) represents an upper bound will depend on what we assume that the agent knows at the time of disclosure.\footnote{This is roughly because the agent can potentially ``sneak in'' information about the other covariates via the covariates that are revealed.} We show in Section \ref{sec:Disclosure} that if the agent knows his entire covariate vector, then (\ref{eq:UpperBound}) need not upper bound every agent's payoffs. Nevertheless, we present a different quantity that does upper bound the maximum payoff that any agent can obtain in this disclosure game, and show that our main results extend when we replace (\ref{eq:UpperBound}) with this quantity. To streamline the exposition we focus on the prior two interpretations (in which the human evaluator either selects the covariates herself or updates to the agent's disclosures naively), and discuss disclosure games in Section \ref{sec:Disclosure}.

\subsection{Value of context}

A key input towards understanding the comparison between Human and Black Box is quantifying the extent to which individualized context improves the agent's payoffs.

\begin{definition}[Value of Context]
   The \emph{value of context} for an agent with covariate vector $\bold{x}_n$ and type $y=f(\bold{x}_n)$ is
\begin{equation*}
v(f,\bold{x}_n) = \max_{H \subseteq \mathcal{N}, \vert H \vert \leq \alpha_h n} U_{\bold{x}_n}^{f}(H) - U_{\bold{x}_n}^{f}(\varnothing)
\end{equation*}
i.e., the best possible improvement in the agent's utility when the evaluator additionally observes up to $\alpha_h\cdot n$ covariates for the agent.
\end{definition}

In general, the value of context depends on the type function $f$ as well as on the agent's own covariate vector $\bold{x}_n$.\footnote{The value of context given a specific function $f$ is spiritually related to   the communication complexity of $f$ \citep{Kushilevitz_Nisan_1996}.}

\begin{example}[High Value of Context] \label{ex:HighVoC}
    Let $u(\hat{y},y)=\hat{y}$, i.e., the agent's payoff is the evaluation. Suppose
$x_1$ is a standard covariate (observed no matter what), while $x_2, \dots, x_{100}$ are nonstandard covariates. The type $y$ is related to these covariates via the type function 
\[y = f(x_1, \dots, x_{100}) = \left\{ 
\begin{array}{cc}
c & \mbox{ if } x_1 = x_2\\
-c & \mbox{ if } x_1 \neq x_2
\end{array}\right.\]
For an agent who can reveal (up to) one covariate and whose covariate vector is $(1,1,\dots,1)$, the value of context is $c$, since revealing $x_2=1$ moves the expectation of his type from 0 to $c$. This example corresponds to settings in which some  nonstandard covariate substantially moderates the interpretation of a standard covariate. For such type functions $f$, it is important for the evaluator to observe the right nonstandard covariates, and so the value of context can be large.
\end{example}

\begin{example}[Low Value of Context]
Suppose the type function in the previous example is instead  $y=f(x_1)=x_1$ (leaving all other details of the example unchanged). Then the value of context is 0 for every agent. In this example, nonstandard covariates are irrelevant for predicting the type, so there is no value to the evaluator discovering the ``right'' covariates.
\end{example}

In what follows, we give the agent uncertainty about $f$ and characterize the agent's expected value of context and expected payoffs, integrating over the agent's belief about $f$.\footnote{If we interpret the covariates in our model as signals about the type, then the function relating covariates to type corresponds to the signal structure.
} 

We do this for two reasons.  First, in many applications it is not realistic to suppose that the agent knows $f$. For example, a patient who anticipates that a diagnosis will be based on an image scan of his kidney may recognize that there are  properties of the image that are indicative of whether he has the condition or not, but likely does not know what the relevant properties are, or how they determine the diagnosis.\footnote{In the case of a job interview, the function $f$ may reflect particular subjective preferences of the firm, which are initially unknown to the agent.}

Second, the case with uncertainty about $f$ turns out to yield a more elegant analysis than the one in which $f$ is known. That is, although the value of context for specific functions $f$ depends on details of that function and on the agent's own covariate vector, there is a large class of prior beliefs (described in the following section) for which it is possible to draw strong detail-free conclusions about the expected value of context.

\subsection{Model Uncertainty}

We impose two assumptions on the agent's prior belief about $f$. Together, these assumptions deliver a setting in which many different structural relationships between the covariates and the  type are possible (including both ones where the value of context is high and low), but ex-ante those relationships are not known.  

The first assumption says that while the agent may know how standard covariates impact the type, he  has no ex-ante knowledge about the roles of the nonstandard covariates. 

\begin{assumption}[Exchangeability] For every realization of the standard covariates $(x_1, \dots, x_s)$, the sequence
\begin{equation} \label{eq:Exchangeable}
(Y^n_1, \dots, Y^n_{2^{n-s}}) \equiv (f(x_1,\dots,x_s,x_{s+1}, \dots, x_n) : (x_{s+1},\dots,x_n) \in \{0,1\}^{n-s})
\end{equation}
is finitely exchangeable. \label{assp:Exchangeability}
\end{assumption}

The set $\{(f(x_1,\dots,x_s,x_{s+1}, \dots, x_n) : (x_{s+1},\dots,x_n) \in \{0,1\}^{n-s}\}$ ranges over all covariate vectors that share the standard covariate values $(x_1,\dots,x_s)$. Assumption \ref{assp:Exchangeability} says that the joint distribution of these types is ex-ante invariant to permutations of the labels and values of the nonstandard covariates. An agent whose prior satisfies Assumption \ref{assp:Exchangeability} is thus agnostic about how the nonstandard covariates impact the type.  

 While our assumption of no prior knowledge about the role of nonstandard covariates is strong, it is consistent with our interpretation of the nonstandard covariates as those covariates for which there is little historical data about correlations. For example, if it were known  that a higher GPA positively correlates with  on-the-job performance, but not how a large number of social media followers predicts performance, then we would think of GPA as a standard covariate and the number of social media followers as a nonstandard covariate.

Assumption \ref{assp:Exchangeability} does not restrict how the agent's prior varies with $n$, the number of covariates. In a model in which $x_1,x_2,\dots$ were drawn i.i.d.\ from a type-dependent distribution $F_y$, the total quantity of information about $y$ would increase in the number of covariates, and the evaluator's uncertainty about $y$ would vanish as $n$ grew large. This does not seem descriptive of real applications: credit scoring algorithms and healthcare algorithms use millions of covariates, but there remains substantial residual uncertainty about the agent's type. We take the opposite extreme in which the predictability of the agent's type is a primitive of the setting, which is held constant for all $n$. In our model, $n$ is not a measure of the total quantity of information, but instead moderates the richness of the informational environment and the potential complexity of the mapping $f$. Loosely speaking, as $n$ grows large,  the agent has a more extensive set of words to describe a fixed unknown $y$.\footnote{As $n$ grows large, the smallest possible informational size of each covariate (in the sense of \citet{McLeanPostlewaite}) vanishes. But we do not require each covariate to be equally informationally relevant in the realized function. So, for example, $f(x_1,\dots,x_n)=x_1$ can be in the support of the agent's beliefs for $n$ arbitrarily large (see Example \ref{ex:Binary}).}

\begin{assumption}[Constant Unpredictability of $Y$] Fix any realization of the standard covariates $(x_1, \dots, x_s)$, and  let 
$(Y^n_1, \dots, Y_{2^{n-s}}^n)$ be defined as in (\ref{eq:Exchangeable}) for each $n \in \mathbb{Z}_+$. Then for every pair $n'>\tilde{n}$, the sequence $(Y^{\tilde{n}}_1, \dots, Y^{\tilde{n}}_{2^{\tilde{n}-s}})$ and the truncated sequence $(Y^{n'}_1, \dots, Y^{n'}_{2^{\tilde{n}-s}})$ are identically distributed.  \label{assp:ConstantVarY} 
\end{assumption}

The statement of Assumption  \ref{assp:ConstantVarY}  formally depends on the ordering of types within the vector $(Y^{n'}_1, \dots, Y^{n'}_{2^{n'}})$, since this determines which types appear in the truncated sequence $(Y^{n'}_1, \dots, Y^{n'}_{2^n})$. But if we further impose Assumption \ref{assp:Exchangeability} (and we will always impose these assumptions jointly), then the ordering of types is irrelevant: That is, when Assumption \ref{assp:ConstantVarY} holds for one such ordering, it will hold for all orderings. 

\bigskip

It is important to note that in our model, Assumptions \ref{assp:Exchangeability} and \ref{assp:ConstantVarY} are placed \emph{ex-ante} on the agent's prior, and not \emph{ex-post} on the realized function $f$. For example, the function $f(x_1, \dots, x_n)=x_1$, which says that the only covariate that matters is $x_1$, is strongly asymmetric ($x_1$ is differentiated from the other covariates) and also  features a single ``large'' covariate (the realization of $x_1$ completely determines $y$). Our assumptions do not rule out the possibility of this function. Rather, they require that if this function is considered possible, then certain other functions are as well.\footnote{Assumption \ref{assp:Exchangeability} requires that for every permutation $\pi: \{0,1\}^n \rightarrow \{0,1\}^n$, the function $g_\pi$ satisfying $g_\pi(x_1,\dots,x_n)=f(\pi(x_1, \dots, x_n))$ is also in the support of the agent's beliefs.}

Simple examples of priors satisfying these two assumptions are given below.

\begin{example} \label{ex:Binary} Let $y\in \{0,1\}$, in which case the space of possible functions $f: \mathcal{X} \rightarrow \mathcal{Y}$ can be identified with $\{0,1\}^{2^n}$. Suppose that for each $n$, the agent has a uniform prior on the set of all functions $\{0,1\}^{2^n}$. Then Assumptions \ref{assp:Exchangeability} and \ref{assp:ConstantVarY} are satisfied.\end{example}

\begin{example} Suppose there is a distribution $F$ on $[-\overline{y},\overline{y}]$ such that for each $n$, 
\[\left(f(x_1,\dots,x_s,x_{s+1}, \dots, x_n) : (x_{s+1},\dots,x_n) \in \{0,1\}^{n-s}\right) \sim_{i.i.d.} F.\]
Then Assumptions \ref{assp:Exchangeability} and \ref{assp:ConstantVarY} are satisfied.
\end{example}

\bigskip

Priors that violate these assumptions include the following. 

\begin{example}[Only One Covariate is Relevant] \label{ex:OnlyOne}
The type is equal to the value of the nonstandard covariate $x_I$, where the index $I$ is drawn uniformly at random from $\mathcal{N}$. Then Assumption \ref{assp:Exchangeability}  fails.\footnote{Suppose $n=2$, and both covariates are nonstandard. Then under the agent's prior, $f \in \left\{\hat{f},\tilde{f}\right\}$ where $\hat{f}(1,1)=\hat{f}(1,0)=1$ while $\hat{f}(0,1)=\hat{f}(0,0)=0$, and $\tilde{f}(1,1)=\tilde{f}(0,1)=1$ while $\tilde{f}(1,0)=\tilde{f}(0,0)=0$. So the agent knows with certainty that $f(1,1)=1$ but $f(0,0)=0$, in violation of Assumption \ref{assp:Exchangeability}. This example is consistent with exchangeability in the labels of the nonstandard covariates, but not with exchangeability in their realizations.}
\end{example}

\begin{example}[Higher Values are Better] The value of $f(\bold{x}_n)$ is (independently) drawn from a uniform distribution on $[1,2]$ if $x_{s+1}=1$, and (independently) drawn from a uniform distribution on $[0,1]$ if $x_{s+1}=0$. Then Assumption \ref{assp:Exchangeability}  fails.
\end{example}

We view these two assumptions as useful conceptual benchmarks, but neither is necessary for our subsequent results. In Section \ref{sec:ExtendBreak}, we explore how far our main results generalize under different relaxations of Assumptions \ref{assp:Exchangeability} and \ref{assp:ConstantVarY}.
 
\subsection{Expected Value of Context}

We now define the expected value of context from the point of view of an agent who knows his covariate vector $\bold{x}_n$ but does not know the function $f$ (and hence also does not know his type $y=f(\bold{x}_n)$). As we show in Section \ref{sec:Max}, the assumption that the agent knows $\bold{x}_n$ is immaterial for the results. 

\begin{definition}[Expected Value of Context] For every $n \in \mathbb{Z}_+$ and covariate vector $\bold{x}_n \in \{0,1\}^n$, the \emph{expected value of context} is
\[
V(n,\bold{x}_n) = \mathbb{E}\left[v(f, \bold{x}_n)\right].
\] \label{def:ExpectedVoC}
\end{definition}

\noindent This quantity tells us the extent to which  context improves the agent's payoffs in expectation.

We similarly compare evaluators based on the expected payoff that the agent receives.

\begin{definition} \label{def:CompareEvaluators} Consider any agent with covariate vector $\bold{x}_n$. If
\begin{equation} \label{def:PreferA}
\mathbb{E}\left[\max_{H \subseteq \mathcal{N}, \vert H \vert \leq \alpha_h\cdot n} U_{\bold{x}_n}^{f}(H)\right] < \mathbb{E}\left[ U_{\bold{x}_n}^{f}(B)\right]
\end{equation}
then say that the agent \emph{prefers the black box evaluator}. And if
\begin{equation} \label{def:PreferH}
\mathbb{E}\left[\min_{H \subseteq \mathcal{N}, \vert H \vert \leq \alpha_h\cdot n} U_{\bold{x}_n}^{f}(H)\right] > \mathbb{E}\left[ U_{\bold{x}_n}^{f}(B)\right]
\end{equation}
then say that the agent \emph{prefers the human evaluator}. 
\end{definition}

These definitions correspond to a thought experiment in which (for example) a patient has a choice between being seen by a doctor or assessed by an algorithm. If the patient chooses the algorithm, his standard covariates and $\alpha_b \cdot n$ arbitrarily chosen 
nonstandard covariates will be sent to the algorithm. If the patient chooses the doctor, he will engage in a conversation with the doctor, where his standard covariates and $\alpha_h \cdot n$ selected nonstandard covariates will be revealed.  Which should the patient choose?

The first part of Definition \ref{def:CompareEvaluators} compares the agent's expected payoff under black box evaluation with the \emph{best-case} expected payoff under human evaluation, namely when the human evaluator observes those (up to) $\alpha_h\cdot n$ covariates that maximize the agent's payoffs. If the agent's expected payoff is nevertheless higher under black box evaluation even after biasing the agent towards the human in this way, we say that the agent \emph{prefers to be evaluated by the black box}. The second part of the definition compares the agent's expected payoff under black box evaluation with the \emph{worst-case} expected payoff under human evaluation, namely when the human evaluator observes those (up to) $\alpha_h\cdot n$ covariates that minimize the agent's payoffs. If the agent's expected payoff is lower under black box evaluation even after biasing the agent against the human in this way, then we say that the agent \emph{prefers to be evaluated by the human}.\footnote{In Section \ref{sec:Disclosure} we further discuss the extent to which these interpretations are valid when the evaluator also updates her beliefs to the selection of covariates.}

These are clearly very conservative criteria for what it means to prefer the human or the black box. In practice, we would expect the set of revealed covariates $H$ to be intermediate to the two cases considered in Definition \ref{def:CompareEvaluators}, i.e., that $H$ neither maximizes nor minimizes the agent's payoffs.\footnote{\citet{Angelovaetal} provide evidence that some judges condition on irrelevant defendant covariates when predicting misconduct rates.} But if we can conclude either that the agent prefers the black box evaluator or the human evaluator according to Definition \ref{def:CompareEvaluators}, then the same conclusion would hold for these more realistic models of $H$.

\section{Main Results}

 Section \ref{sec:ValueofExplanation} characterizes the expected value of context in a simple example. Section \ref{sec:ExpectedValue} presents our first main result, which says that the expected value of context vanishes to zero as the number of covariates grows large. Section \ref{sec:HvBB} compares human and black box evaluators.

\subsection{Example} \label{sec:ValueofExplanation}

Suppose there are two covariates, $x_1$ and $x_2$, both nonstandard. For each covariate vector $\bold{x} \in \{0,1\}^2$, define the random variable $Y_{\bold{x}} = f(\bold{x})$, where the randomness is in the realization of $f$. 

\begin{table}[H]
    \[\begin{array}{ccc}
X_1 & X_2 & Y_{\bold{x}}\\
0 & 0 & Y_{00} \\
0 & 1 & Y_{01} \\
1 & 0 & Y_{10} \\
1 & 1 & Y_{11}
\end{array}\]
\caption{The four possible covariate vectors and their associated types.}
    \label{tab:BinaryExample}
\end{table}

The agent has utility function $u(\hat{y},y)=\hat{y}$ and covariate vector $(1,1)$. Suppose Human observes up to one nonstandard covariate; then, there are three possibilities for what the evaluator observes.  If Human observes $x_1=1$,  her evaluation is 
\[Z_1 \equiv \frac{Y_{10}+Y_{11}}{2}.\]
If Human observes $x_2=1$,  her evaluation is 
\[Z_2 \equiv \frac{Y_{01}+Y_{11}}{2}.\] And if Human observes no nonstandard covariates, then her evaluation remains the unconditional average \[Z_{\varnothing} \equiv \frac{Y_{00}+Y_{01}+Y_{10}+Y_{11}}{4}.\]
So the expected value of context for this agent is
\begin{equation} \label{eq:VoC}
\mathbb{E}\left[\max\left\{ Z_\varnothing, Z_1,Z_2 \right\}  - Z_\varnothing\right].
\end{equation}

\bigskip

Suppose $n$ grows large with up to $h_n=\lfloor\frac{n}{2}\rfloor$ covariates observed. There are two opposing forces affecting the value of context. First, when $n$ is larger there are more distinct sets of covariates that can be revealed to Human, and hence the max in (\ref{eq:VoC}) is taken over a larger number of posterior expectations. This increases the value of context. On the other hand,  each $Z_k$ is a sample average, and the number of elements in this sample average also grows in $n$.\footnote{For example, observing $X_1=1$  with $n=2$  gives the evaluator a posterior expectation of $(Y_{10}+Y_{11})/2$, while the same observation gives the evaluator a posterior expectation of $(Y_{100}+Y_{101}+Y_{110}+Y_{111})/2$ if $n=3$.} By the law of large numbers, each $Z_k$  thus concentrates on its expectation (which is common across $k$) as $n$ grows large, so the difference between any $Z_k$ and $Z_{k'}$ grows small. What we have to determine is whether the growth rate in the number of subsets of nonstandard covariates (of size $\leq h_n$) is sufficiently large such that the maximum difference in evaluations across these sets is nevertheless asymptotically bounded away from zero. The answer turns out to be no.

\subsection{The Expected Value of Context} \label{sec:ExpectedValue}

Our main result says that for every agent, the expected value of context (as defined in Definition \ref{def:ExpectedVoC}) vanishes as $n$ grows large.

\begin{theorem} \label{thm:Main} Suppose Assumptions \ref{assp:Exchangeability} and \ref{assp:ConstantVarY} hold. Then for every covariate vector $\bold{x} \in \{0,1\}^\infty$, the expected value of context vanishes to zero as $n$ grows large, i.e.,
	\[\lim_{n\rightarrow\infty}V(n,\bold{x}_n)=0.\]
\end{theorem}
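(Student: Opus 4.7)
My plan is to use Lipschitz continuity to reduce the theorem to bounding the expected maximum of a family of mean-zero random variables, and then to control this maximum by combining the vanishing variance implied by exchangeability with a standard maximal inequality.

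First, the Lipschitz reduction. Since $u$ is $L$-Lipschitz, for every $H$ we have $|U^f_{\bold{x}_n}(H) - U^f_{\bold{x}_n}(\varnothing)| \le L\,|Z_H|$, where $Z_H := \hat{y}^f_{\bold{x}_n}(H) - \hat{y}^f_{\bold{x}_n}(\varnothing)$. Because $H=\varnothing$ is feasible in the max defining $v(f,\bold{x}_n)$, we have $v(f,\bold{x}_n)\ge 0$, hence
\[0 \;\le\; v(f,\bold{x}_n) \;\le\; L \max_{H \subseteq \mc{N},\, |H| \le \alpha_h n} |Z_H|,\]
and it suffices to show $\E\bigl[\max_H |Z_H|\bigr]\to 0$.

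Second, exchangeability plus constant unpredictability give an i.i.d.\ structure and a clean variance bound. Fix the standard covariate realization. Assumption \ref{assp:Exchangeability} makes $(f(x))_{x \in C_\varnothing(\bold{x}_n)}$ exchangeable, and Assumption \ref{assp:ConstantVarY} makes its finite-dimensional distributions consistent across $n$, so these finite sequences extend jointly to an infinite exchangeable sequence. De Finetti's theorem then yields a random measure $\mu$ supported on $[-\overline{y},\overline{y}]$ such that, conditional on $\mu$, the types are i.i.d.\ with variance $\sigma^2_\mu \le \overline{y}^2$. Splitting $C_\varnothing(\bold{x}_n)$ into $C_H(\bold{x}_n)$ and its complement, a direct calculation gives
\[\Var(Z_H\mid\mu) \;=\; \sigma_\mu^2\left(\frac{1}{|C_H(\bold{x}_n)|} - \frac{1}{|C_\varnothing(\bold{x}_n)|}\right) \;\le\; \overline{y}^2 \cdot 2^{|H|-(n-s)},\]
which decays exponentially in $n$ uniformly in $H$ since $|H|\le\alpha_h n$ with $\alpha_h<1$. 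Moreover, conditional on $\mu$ each $Z_H$ is a linear combination of bounded i.i.d.\ terms, hence sub-Gaussian with proxy of the same order.

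Third, apply a maximal inequality. The number of candidate subsets is $\sum_{k\le\alpha_h n}\binom{n-s}{k}\le 2^{n-s}$, so a standard sub-Gaussian maximal inequality (or the Chernozhukov--Chetverikov--Kato Gaussian approximation cited in the introduction) yields
\[\E\left[\max_H |Z_H|\;\middle|\;\mu\right] \;\le\; C\,\overline{y}\,\sqrt{(n-s)\cdot 2^{-(1-\alpha_h)n + s}},\]
which vanishes exponentially. Integrating out $\mu$ preserves the bound. The main obstacle is the tension between the exponential number of candidate subsets and the need for their maximum to nevertheless vanish; it is resolved precisely because $\alpha_h<1$, so the per-subset variance decays like $2^{-(1-\alpha_h)n}$ and dominates the entropic factor $(n-s)\log 2$ coming from the union bound. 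Exchangeability is what makes the variance calculation clean and uniform across $H$; Assumption \ref{assp:ConstantVarY} is what supplies the infinite de Finetti representation, allowing the $Z_H$ to be handled as linear combinations of i.i.d.\ terms.
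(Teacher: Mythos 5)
Your proposal is correct, and it takes a genuinely different and more economical route than the paper. The paper also reduces to bounding an expected maximum of posterior-expectation deviations via Lipschitz continuity and also passes through de Finetti to reduce to the conditionally i.i.d.\ case, so the skeleton is shared. But from there the paper's argument is considerably heavier: it first replaces the correlated sample averages $Z_k$ by independent copies $Z_k^{ind}$ (which requires a delicate first-order-stochastic-dominance comparison of maxima sharing common summands, Sublemmas \ref{fosd_g} and \ref{fosd_k}), then replaces these by equal-sized i.i.d.\ averages via a mean-preserving-spread argument and Jensen's inequality, then invokes the Gaussian approximation of \citet{chernozhukov2013gaussian} for the maximum, and finally applies \citet{Berman1964}'s bound on expected maxima of Gaussians; a separate argument (Proposition \ref{lem_abs}) is then needed to upgrade the one-sided bound $\E[\max_k Z_k-\mu]\to 0$ to the two-sided bound $\E[\max_k|Z_k-\mu|]\to 0$. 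You bypass all of this by observing that, conditional on the de Finetti directing measure, each $Z_H$ is a mean-zero linear combination of bounded i.i.d.\ terms with squared coefficients summing to exactly $1/|C_H(\bold{x}_n)|-1/|C_\varnothing(\bold{x}_n)|$ (your variance identity checks out), hence sub-Gaussian with proxy $O\bigl(2^{-(1-\alpha_h)n+s}\bigr)$, and that the standard sub-Gaussian maximal inequality requires no independence across the $Z_H$ and handles $\max_H|Z_H|$ directly by doubling the index set. This eliminates the entire decorrelation apparatus and yields an explicit finite-$n$ rate in one step; what it gives up relative to the paper is the distributional information (Kolmogorov-distance convergence of the normalized maximum to a Gaussian maximum) that the CCK route delivers as a byproduct, which the paper does not actually need for this theorem. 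Your treatment of the exchangeability assumptions is also sound: finite exchangeability at each $n$ (Assumption \ref{assp:Exchangeability}) together with the consistency across $n$ (Assumption \ref{assp:ConstantVarY}) is exactly what licenses the extension to an infinite exchangeable sequence and hence de Finetti, and the resulting bound is uniform in the directing measure since $\sigma^2_\mu\le\overline{y}^2$, so integrating it out is immediate.
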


Thus although the value of context may be substantial for certain type functions (such as in Example \ref{ex:HighVoC}), it does not matter on average across these functions when the agent's prior satisfies Assumptions \ref{assp:Exchangeability} and \ref{assp:ConstantVarY}. This also implies that for sufficiently large $n$, the provision of context does not ``typically" matter; that is, the probability that the agent  gains substantially from targeted information acquisition is small.

The core of the proof of Theorem \ref{thm:Main} is an argument that the extent to which context can change the evaluator's posterior expectation vanishes in the number of covariates. We outline that argument here. For each $n$, there are $K_n=\sum_{j=0}^{\lfloor \alpha_h  n \rfloor}\binom{n-s}{j}$  sets of $\alpha_h n$ (or fewer) nonstandard covariates that can be disclosed. We can enumerate and index these sets to $k=1, \dots, K_n$. Each set $k$ induces a posterior expectation $Z_k$ which is a sample average of random variables $Y_x \equiv f(x)$. The expected value of context (for this  utility function) is \[\E\left[\max_{1\le k\le K_n} Z_k \right]-\E[Z_\varnothing]\]
where $Z_\varnothing$ is Human's posterior expectation given observation of standard covariates only. Normalizing $E[Z_\varnothing]=0$, it remains to study properties of the first-order statistic  $\max_{1\le k\le K_n}Z_k$.

There are two  challenges to analyzing this quantity. First, the correlation structure of $Z_1, \dots, Z_{K_n}$ can be complex: The variables $Z_k$ are neither independent (because the same posterior expectation $Y_x$ can appear as an element in different sample averages $Z_k,Z_{k'}$) nor  identically distributed (because the sample averages are of different sizes depending on how many nonstandard covariates are revealed). The second challenge is that the length of the sequence $(Z_1,\dots, Z_{K_n})$ grows exponentially in $n$. Thus even though each term within the maximum eventually  converges to a normally  distributed random variable (with shrinking variance), the errors of each term may in principle accumulate in a way that the maximum grows large.

Our approach is to first construct new i.i.d.\ variables $Z^{iid}_k$, with the property that
\begin{equation} 
\mathbb{E}\left[\max\{Z_1, \dots, Z_{K_n}\}\right] \leq  \mathbb{E}\left[\max\{Z^{iid}_1, \dots, Z^{iid}_{K_n}\}\right]
\end{equation}
Applying a result from \cite{chernozhukov2013gaussian}, we show that
$\max_{1\le k\le K_n} Z_k^{iid}$ (properly normalized) converges to $\max_{1\le k\le K_n} Z_k^{Normal}$ in distribution, where (due to properties of our problem) $Z_k^{Normal}\sim_{iid} \mathcal{N}\left(0,\frac{1}{2^{n(1-\alpha_h)-s}}\right)$. Having reduced the analysis to studying the expected maximum of i.i.d.\, Gaussian variables, classic bounds apply to show that this quantity is no more than  \begin{equation} \label{eq:Bound} 
\frac{1}{2^{n(1-\alpha_h)-s}}\sqrt{\log(K_n)}.
\end{equation}
This display quantifies the importance of each of the two forces discussed in Section \ref{sec:ValueofExplanation}. First, as $n$ grows larger, the number of  posterior expectations $K_n = \sum_{j=0}^{\lfloor \alpha_h n\rfloor}\binom{n-s}{j} \leq 2^{n-s}$ grows exponentially in $n$, increasing the expected value of context. But second, as $n$ grows larger, each $Z_k$ concentrates on its expectation, where its variance, $\frac{1}{2^{n(1-\alpha_h)-s}}$, decreases exponentially in $n$. What the bound in display (\ref{eq:Bound}) tells us is that the exponential growth in the number of variables is eventually dominated by the exponential reduction in the variance of each variable,  yielding the result.

\bigskip

This proof sketch also clarifies the role of Assumption \ref{assp:Exchangeability}. As we show in Section \ref{sec:GeneralLemma}, the statement of the theorem extends so long as the evaluator's posterior expectation $Z_k$ concentrates on its expectation sufficiently quickly as $n$ grows large. Roughly speaking, this means that the informativeness of any specific set of covariates is decreasing in the total number of covariates. Thus the precise symmetry imposed by Assumption \ref{assp:Exchangeability} is not critical for Theorem \ref{thm:Main} to hold.

On the other hand, the conclusion of Theorem \ref{thm:Main} can fail  if the agent has substantial prior knowledge about how $y$ is related to the covariates. 

\begin{example} 
\label{ex:Structural} Let $s=0$, so that there are no standard covariates. Suppose that for each $n$,
\[y=f(x_1, \dots, x_n)= \frac{1}{n} \sum_{i=1}^n x_i \cdot U\]
where $U$ is a uniform random variable on $[0,1]$. This model violates Assumption \ref{assp:Exchangeability}, since it is known that higher realizations of the agent's covariates are good news about the agent's type. The conclusion of Theorem \ref{thm:Main} also does not hold: For any $n$, the evaluator's prior expectation is $\mathbb{E}[f(\bold{x}_n)] = 1/4$. But if $\lfloor \alpha \cdot n\rfloor$ covariates are revealed to be 1, the evaluator's posterior expectation is  equal to $\frac14 + \frac14 \frac{\lfloor \alpha n \rfloor}{n}$. So the expected value of context for an agent with $\mathbf{x}_n=(1,...,1)$ is asymptotically bounded away from zero.  
\end{example}

In Section \ref{sec:ExtendBreak} we explore several relaxations of Assumptions \ref{assp:Exchangeability} and \ref{assp:ConstantVarY}. Our first relaxation of Assumption \ref{assp:Exchangeability} supposes that there is a ``low-dimensional'' set of covariates that predictive of the agent's type, while the remaining covariates are irrelevant. The second relaxation supposes that there is a subset of nonstandard covariates whose effects are known. We also consider a relaxation of Assumption \ref{assp:ConstantVarY} where the evaluator's ability to predict $Y$ is increasing in the total number of covariates that define the type. We formalize these extensions of our main model and examine the extent to which Theorem \ref{thm:Main} extends.

\subsection{Human versus Black Box} \label{sec:HvBB}

We now turn to the question of when the agent should prefer the human evaluator and when the agent should prefer the black box evaluator.

\begin{assumption} \label{assp:Phi} The agent's expected utility can be written as $\mathbb{E}[\phi(\hat{y})]$ for some  twice continuously differentiable function $\phi$.\footnote{Restricting to utility functions that depend on a posterior mean is a common assumption in the literature on information design,  see e.g., \citet{KamenicaGentzkow},  \citet{Frankel} and \citet{DworczakMartini}.} Moreover, there exists $C <\infty $ such that
\[\frac{\sup_{\hat y\in[-\overline{y},\overline{y}]}|\phi'(\hat y)|}{\inf_{\hat y\in[-\overline{y},\overline{y}]}|\phi''(\hat y)|}< \frac{C}{2}.\]
\end{assumption}

Roughly speaking, the numerator describes the sensitivity of the function $\phi$ to the evaluation, and the denominator describes the curvature of the function $\phi$. The assumption thus says that the curvature of the function must be sufficiently large relative to its slope. While there is no formal relationship, the LHS is evocative of the coefficient of absolute risk aversion of the function $\phi$.\footnote{Recall that the coefficient of absolute risk aversion of the function $\phi$ is $-\frac{\phi'(\hat{y})}{\phi''(\hat{y}))}$.}

\begin{theorem} \label{prop:PreferHA}
	Suppose Assumptions \ref{assp:Exchangeability}-\ref{assp:Phi} hold, and let
 \begin{equation}
          \label{eq:N}
         N = \min\left\{n\in \mathbb{R}_+ :  (\alpha_b-\alpha_h)n -\frac{1}{2}\log_2(n)-1>\log_2(C)\right\}.
          \end{equation}
          Then:
 \begin{itemize}

     \item[(a)] If $\phi$ is strictly convex, the agent prefers the black box evaluator for all $n\geq N$.
          \item[(b)] If $\phi$ is strictly concave, the agent prefers the human evaluator for all $n\geq N$.
 \end{itemize}
\end{theorem}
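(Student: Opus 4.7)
The plan is to lower-bound Black Box's expected payoff and upper-bound the Human's best-case expected payoff, each relative to the common baseline $\E[\phi(\hat{y}_\emptyset)]$, and then to extract $N$ as the smallest $n$ at which the former exceeds the latter.

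Central to the Black Box bound is the observation that, under Assumption \ref{assp:Exchangeability}, $\E[\hat{y}_{\bold{x}_n}^f(A) - \hat{y}_\emptyset \mid \hat{y}_\emptyset] = 0$ for every $A \subseteq \mathcal{N}$: given the full average of the $f(x')$ over $C_\emptyset(\bold{x}_n)$, the conditional mean of any sub-average equals the full average by exchangeability. A second-order Taylor expansion of $\phi$ around $\hat{y}_\emptyset$ then annihilates the linear term in expectation, and for convex $\phi$ with $c := \inf|\phi''|>0$ I would obtain
\[\E[\phi(\hat{y}_{\bold{x}_n}^f(B))] - \E[\phi(\hat{y}_\emptyset)] = \tfrac{1}{2}\E\bigl[\phi''(\xi_B)(\hat{y}_{\bold{x}_n}^f(B) - \hat{y}_\emptyset)^2\bigr] \geq \tfrac{c}{2}(V_B - V_\emptyset),\]
writing $V_A := \Var(\hat{y}_{\bold{x}_n}^f(A))$ and using the Pythagorean identity $\E[(\hat{y}_{\bold{x}_n}^f(B) - \hat{y}_\emptyset)^2] = V_B - V_\emptyset$ (a consequence of the same orthogonality). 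Under Assumptions \ref{assp:Exchangeability}--\ref{assp:ConstantVarY}, $V_B - V_\emptyset$ is of order $2^{-((1-\alpha_b)n - s)}$, the reciprocal of Black Box's conditioning-cell size.

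For the Human, Lipschitz continuity with $L = \sup|\phi'|$ yields $\max_H[\phi(\hat{y}_{\bold{x}_n}^f(H)) - \phi(\hat{y}_\emptyset)] \leq L\cdot\max_H |\hat{y}_{\bold{x}_n}^f(H) - \hat{y}_\emptyset|$, and feeding this into the concentration argument underlying Theorem \ref{thm:Main} (applied to the doubled collection $\{Z_k, -Z_k\}$ to handle the absolute-value max, with the factor of two absorbed into constants) gives
\[\E[\max_H \phi(\hat{y}_{\bold{x}_n}^f(H))] - \E[\phi(\hat{y}_\emptyset)] \leq L\cdot \frac{\sqrt{\log K_n}}{2^{n(1-\alpha_h)-s}},\]
with $K_n \leq 2^{n-s}$ so $\log K_n \leq (n-s)\log 2$. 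Demanding that the Black Box lower bound exceed the Human upper bound and invoking $2L/c < C$ from Assumption \ref{assp:Phi} rearranges to $2^{(\alpha_b - \alpha_h)n} > 2C\sqrt{n}$, which is precisely (\ref{eq:N}), proving (a). For part (b), I would apply (a) to $-\phi$, which is convex and satisfies Assumption \ref{assp:Phi} with the same $C$; since this swaps $\max_H$ with $-\min_H$ and reverses the Black Box / Human inequality, one reads off $\E[\min_H \phi(\hat{y}_{\bold{x}_n}^f(H))] > \E[\phi(\hat{y}_{\bold{x}_n}^f(B))]$ for $n \geq N$.

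The main obstacle is the bookkeeping around the absolute value: Theorem \ref{thm:Main} is stated for the one-sided max $\E[\max_k Z_k]$, whereas the Lipschitz route requires $\E[\max_k |Z_k|]$, and keeping the resulting extra factor small enough to stay within the $+1 + \log_2 C$ on the right-hand side of (\ref{eq:N}) calls either for a symmetry argument based on the exchangeable prior or a careful re-run of the Chernozhukov--Chetverikov--Kato coupling on the signed collection.
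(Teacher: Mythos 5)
Your proposal is correct and follows essentially the same route as the paper's proof: a second-order Taylor expansion (whose linear term vanishes) lower-bounds the Black Box gain by $\inf|\phi''|$ times the posterior variance $\asymp 2^{-(1-\alpha_b)n}$, a Lipschitz bound plus the two-sided max-concentration estimate $\E[\max_k|Z_k-\mu|]\lesssim 2^{-(1-\alpha_h)n}\sqrt{\log K_n}$ upper-bounds the Human gain, and part (b) follows by applying part (a) to $-\phi$. The only cosmetic differences are that the paper expands around the constant prior mean $\mu$ rather than the random $\hat{y}_\varnothing$, and your ``main obstacle'' (the two-sided max) is already handled by the paper's Proposition \ref{lem_abs}, so no new coupling argument is needed.
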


The comparisons in this theorem apply  for reasonably small $N$. For example, let $C=100$, in which case the restriction in Assumption \ref{assp:Phi} is quite weak. Figure \ref{fig:N} fixes $\alpha_h=0.1$ and plots $N$ for different values of $\alpha_b$. If (say), the human evaluator observes 10\% of covariates while Black Box observes 90\%, then the comparisons in Theorem \ref{prop:PreferHA} hold for all $n \geq 14$.

\begin{figure}[h]
\begin{center}
\includegraphics[scale=0.4]{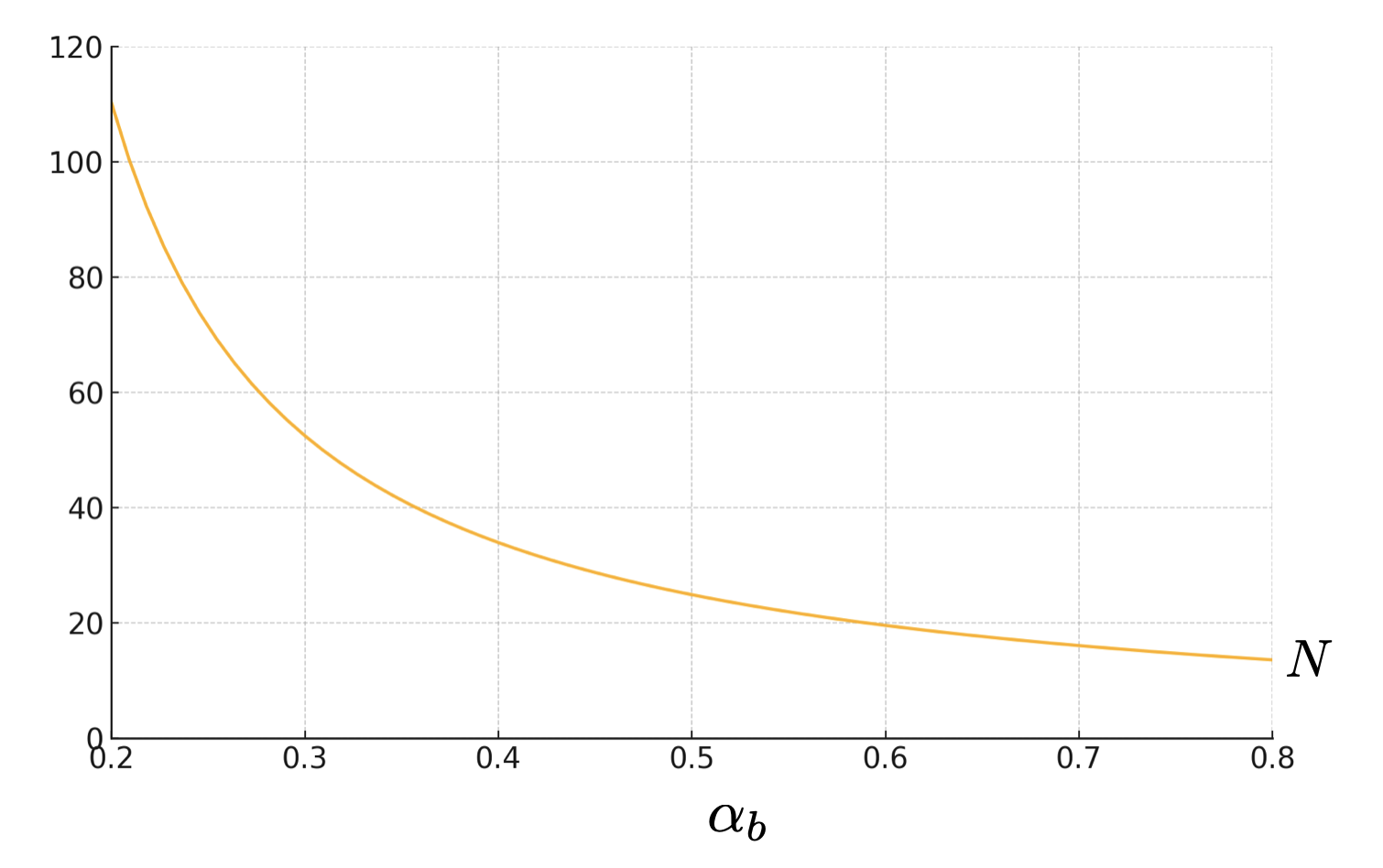}
    \end{center}
    \caption{Let $C=100$ and $\alpha_h=0.1$. Then the comparisons in Theorem \ref{prop:PreferHA} hold for all $n \geq N$ as depicted here.} \label{fig:N}
\end{figure}

The case of convex $\phi$ (Part (a)) corresponds to a preference for more accurate evaluations.\footnote{Consider any two sets of covariates $A \subset A'$ and let $\hat{y}_A$, $\hat{y}_{A'}$ be the corresponding posterior expectations. The distribution of $\hat{y}_{A'}$ (i.e., the posterior expectation that conditions on more information) is a mean-preserving spread of the distribution of $\hat{y}_A$. When $\phi$ is convex, the former leads to a higher expected utility. Such an agent ``prefers more accurate evaluations'' in the sense that giving the evaluator better information (in the standard Blackwell sense) leads to an improvement in the agent's expected utility.} Such an agent prefers for the evaluation to be based on more information (advantaging Black Box), but also prefers for the evaluation to be based on more relevant covariates (advantaging Human). We show that what eventually dominates is how many covariates the evaluators observe, not how they are selected; for an agent who prefers accuracy, this favors the Black Box.

Part (b) of Theorem \ref{prop:PreferHA} says that if instead $\phi$ is concave, then the agent should eventually prefer the human evaluator. We conclude this section with example decision problems that induce utility functions satisfying the conditions of either part of the theorem.

\begin{example} Suppose the agent receives a dollar wage equal to the evaluation, and is risk averse in money. Then his utility function is $u(\hat{y},y) = \phi(\hat{y})$ for some increasing and concave $\phi$, and Part (a) of Theorem \ref{prop:PreferHA} says that the agent prefers to be evaluated by the human.
\end{example}

\begin{example}
Suppose the agent's type is $y\in \{0,1\}$, and the evaluator chooses an action $a$ based on the observed covariates. The evaluator and agent share the utility function $-\mathbb{E}[(a-y)^2]$. The evaluator's optimal action is $a=\hat{y}$, and the agent's expected payoff given this action is
\[\mathbb{E}\left[-(\hat{y}-y)^2\right] = \mathbb{E}\left[-\left(\hat{y}(1-\hat{y})^2 + (1-\hat{y})(\hat{y})^2\right)\right] = \mathbb{E}\left[\phi(\hat{y})\right]\] where $\phi(\hat{y}) = \hat{y}^2-\hat{y}$ is convex. So Part (a) of Theorem \ref{prop:PreferHA} says that the agent eventually prefers evaluation by the black box evaluator. Although the conditions of Theorem \ref{prop:PreferHA} are no longer met when $y$ is not binary, we show in Appendix \ref{app:sqLoss} that the conclusion of Part (a) of Theorem \ref{prop:PreferHA} generalizes for arbitrary $y$ given the mean-squared error payoff function described in this example. 
\end{example}

\section{Extensions} \label{sec:Extensions}

We now show that we are able to strengthen our main results (Theorems \ref{thm:Main} and \ref{prop:PreferHA}) in the following ways. In Section \ref{sec:Max}, we show that not only does the expected value of context vanish for each individual agent, but in fact the expected maximum value of context across agents also vanishes. That is, in expectation the most that context can benefit \emph{any} agent in the population is small. From this, it is immediate that our main results also extend in a generalization of our model in which the agent has uncertainty over his covariate vector. In Section \ref{sec:Disclosure}, we show that our main results extend when the agent and evaluator interact in a disclosure game, wherein the evaluator updates his beliefs to the agent's strategic choice of what to disclose.

\subsection{Max value of context across agents} \label{sec:Max}

So far we have studied the the expected value of context for a single agent. If we instead ask whether the firm  should use human or algorithmic evaluation---for example, whether a hospital should automate diagnoses or rely on doctor evaluations---other statistics may also be relevant. For example, it may matter whether the value of context is large for any agent in the population (e.g., because a lawsuit regarding algorithmic error may be brought on the basis of harm to any specific individual \citep{Jha}). We thus study the expected maximum value of context, as defined below.

\begin{definition} For any $n \in \mathbb{Z}_+$, the \emph{expected maximum value of context} is
\[
V^{\text{max}}(n) = \mathbb{E}\left[\max_{\bold{x}_n \in \{0,1\}^n} v(f, \bold{x}_n)\right].
\]
\end{definition}

The following corollary says that this quantity also vanishes as $n$ grows large.

\begin{corollary} \label{corr:Max}
Suppose Suppose Assumptions \ref{assp:Exchangeability} and \ref{assp:ConstantVarY} hold.
Then the expected maximum value of context vanishes to zero as $n$ grows large, i.e.,
	$\lim_{n\rightarrow\infty}\overline{V}^{\text{max}}(n)=0.$
\end{corollary}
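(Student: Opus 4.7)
The plan is to extend the argument used in the proof of Theorem \ref{thm:Main} by enlarging the collection of posterior expectations over which we take the maximum. By Lipschitz continuity of $u$ in its first argument (with constant $L$),
\[
\max_{\mathbf{x}_n \in \{0,1\}^n} v(f,\mathbf{x}_n) \;\le\; L \cdot \max_{\mathbf{x}_n,\; H \subseteq \mathcal{N},\; |H|\le \alpha_h n} \bigl|\hat y^f_{\mathbf{x}_n}(H) - \hat y^f_{\mathbf{x}_n}(\varnothing)\bigr|,
\]
so it is enough to show that the expectation of the right-hand side vanishes. A crude union bound fails here because $\sum_{\mathbf{x}_n} V(n,\mathbf{x}_n) = 2^n V(n,\mathbf{x}_n)$ need not vanish even though each term does; a tighter maximal inequality is needed, and the natural candidate is exactly the one used in Theorem \ref{thm:Main}.

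I would enumerate the pairs $(\mathbf{x}_n, H)$, of which there are at most $2^n K_n$ with $K_n = \sum_{j=0}^{\lfloor\alpha_h n\rfloor}\binom{n-s}{j}$, and for each pair examine the centered random variable $\hat y^f_{\mathbf{x}_n}(H) - \hat y^f_{\mathbf{x}_n}(\varnothing)$. Under Assumption \ref{assp:Exchangeability}, this variable is a sample-average-style difference whose variance, by the same calculation as in the proof of Theorem \ref{thm:Main}, is of order at most $2^{-(n(1-\alpha_h)-s)}$. I would then replicate the three main steps from the proof of Theorem \ref{thm:Main}: first, dominate the expected maximum of these (possibly dependent) variables and their negations by the expected maximum of an i.i.d.\ version; second, apply the Gaussian comparison from \citet{chernozhukov2013gaussian} to reduce to the expected maximum of i.i.d.\ Gaussians with the same variance; and third, use the classical bound $\mathbb{E}[\max_{1\le i\le N}G_i]\le \sigma\sqrt{2\log N}$ to obtain an upper bound of order
\[
2^{-(n(1-\alpha_h)-s)/2}\cdot \sqrt{\log(2^{n+1} K_n)} \;\le\; 2^{-(n(1-\alpha_h)-s)/2}\cdot \sqrt{2 n\log 2\,(1+o(1))}.
\]

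The main obstacle, and the point that must be verified carefully, is that the outer maximum over $\mathbf{x}_n$ inflates the enumeration by an extra factor of $2^n$ relative to the proof of Theorem \ref{thm:Main}. This inflation enters only logarithmically via the Gaussian maximal inequality, producing an additional $\sqrt{n}$ factor, whereas the standard deviation continues to decay exponentially at the strictly positive rate $(1-\alpha_h)/2$. Since any polynomial factor in $n$ is eventually dominated by this exponential decay as long as $\alpha_h < 1$, the displayed upper bound vanishes as $n\to\infty$. Beyond this bookkeeping, the argument is structurally identical to that of Theorem \ref{thm:Main}, which is the reason the statement is presented as a corollary rather than a theorem.
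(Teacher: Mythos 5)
Your proposal follows essentially the same route as the paper's proof: both recognize that taking the maximum over agents as well as over disclosure sets only enlarges the index set of posterior expectations, that this enlargement enters the Gaussian maximal inequality only through a $\sqrt{\log(\cdot)}$ factor of order $\sqrt{n}$, and that this is swamped by the exponentially decaying standard deviation of each posterior expectation. The paper counts somewhat more tightly---for a fixed realization of the standard covariates, the distinct posterior expectations are indexed by pairs of (disclosure set, disclosed values), giving $\overline{K}_n=\sum_{j=0}^{h_n}\binom{n-s}{j}2^j\le 3^{n-s}$ rather than your cruder $2^nK_n$---but since only the logarithm of the count matters, either count suffices.

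The one place your write-up needs patching is the i.i.d.-domination step. The exchangeability (and, after the de Finetti reduction, i.i.d.) structure of the conditional expectations $Y_j$ is imposed only \emph{conditional on a fixed realization of the standard covariates}; across different standard-covariate realizations the joint distribution of the $Y_j$'s is unrestricted, so the analogue of Lemma \ref{prop:Ind} cannot be applied in one shot to your full collection of $2^nK_n$ variables, which mixes blocks with different standard covariates. The paper avoids this by running the maximal-inequality argument separately within each of the $2^s$ standard-covariate blocks and then bounding the expected maximum over blocks by the sum of the $2^s$ per-block expectations, each of which vanishes; since $s$ is fixed, this costs nothing. With that (easy) decomposition added, your argument goes through and is structurally identical to the paper's.
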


\noindent Thus, the expected value of context vanishes uniformly across agents in the population. This result immediately implies that Theorems \ref{thm:Main} and \ref{prop:PreferHA} extend in any generalization of our model in which the agent has uncertainty not only over $f$ but also over his own covariate vector $\bold{x}_n$.

\subsection{Strategic Disclosure} \label{sec:Disclosure}

So far we've remained agnostic as to whether the agent or evaluator chooses which nonstandard covariates are revealed, assuming  that in either case the evaluator  updates as if the covariates were revealed exogenously. We now consider a more traditional disclosure game, in which the agent chooses which nonstandard covariates are revealed, and the human evaluator updates her beliefs about the agent's type in part based on which covariates are chosen.

 For any fixed function $f$,  call the following an \emph{$f$-context disclosure game}:  There are two players, the agent and the evaluator. The function $f$ is common knowledge.\footnote{We do not interpret this assumption literally. At the other extreme where $f$ is unknown to the agent, there is no informational content in which covariates the agent chooses to reveal, as they are all symmetric from the agent's point of view.} 
The set of possible \emph{disclosures} $\mathcal{D}$ is the set of all pairs $(H, (x_i)_{i \in H})$ consisting of a set of nonstandard covariates $H \subseteq \mathcal{N}$ and values for those covariates. A disclosure $d = (H,(x'_i)_{i \in H})$ is \emph{feasible} for an agent with covariate vector $(x_1,\dots,x_n)$ if the disclosed covariate values are truthful, i.e., $x_i = x_i'$ for every $i \in H$.

The agent chooses a \emph{disclosure strategy}, which is a map
\[\sigma: \{0,1\}^n \rightarrow \mathcal{D}\] 
from covariate vectors to feasible disclosures. The agent then privately observes his covariate vector $\bold{x}_n$ and discloses $\sigma(\bold{x}_n)$. The evaluator observes this disclosure and chooses an action $\hat{y}$. That is, the evaluator's strategy is a function $\sigma_E: \mathcal{D} \rightarrow [-\overline{y},\overline{y}]$. The evaluator's payoff is $-(\hat{y}-y)^2$ and the agent's payoff is some function $u(\hat{y})$. 

In this section we focus on pure strategy Perfect Bayesian Nash equilibria (PBE) of this game, henceforth simply \emph{equilibria}. (A similar result holds for mixed strategy equilibria, which is demonstrated in the appendix.)   

\begin{definition}
Let $v^D(f,\bold{x}_n)$ denote the highest payoff that an agent with covariate vector $\bold{x}_n$ receives in any pure-strategy
equilibrium of the $f$-context disclosure game. The \emph{expected maximum value of context disclosure} is
  \[V^D(n) = \mathbb{E}\left[\max_{\bold{x}_n \in \{0,1\}^n} v^D(f,\bold{x}_n)\right].\]
\end{definition}

We show that the best payoff that an agent can receive in any pure strategy $f$-context equilibrium is bounded above by the maximum value of context across agents.

\begin{proposition} \label{prop:DisclosureBound} Suppose Assumptions \ref{assp:Exchangeability} and \ref{assp:ConstantVarY} hold. Then for all $n$,
\[V^D(n)\le V^{\text{max}}(n).\]
\end{proposition}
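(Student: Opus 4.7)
The plan is to establish the pointwise inequality $\max_{\bold{x}_n} v^D(f,\bold{x}_n) \le \max_{\bold{x}_n} v(f,\bold{x}_n)$ for every realization of $f$, from which the result follows by taking expectations. Fix any pure-strategy PBE $\sigma$, and let $\bold{x}_n^{*}$ attain the maximum equilibrium payoff. Write $d^{*} = \sigma(\bold{x}_n^{*})$ with associated nonstandard covariate set $H^{*} \subseteq \mathcal{N}$, $|H^{*}| \le \alpha_h n$, and let $c = \sigma_E(d^{*})$ be the evaluator's posterior after observing $d^{*}$.

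The key equilibrium observation is that every $\bold{x}_n' \in C_{H^{*}}(\bold{x}_n^{*})$ can feasibly disclose $d^{*}$, since these agents agree with $\bold{x}_n^{*}$ on every covariate in $H^{*}$. Equilibrium optimality then yields $u(\sigma_E(\sigma(\bold{x}_n'))) \ge u(c)$, while the maximality of $\bold{x}_n^{*}$'s payoff forces the reverse inequality, so every agent in $C_{H^{*}}(\bold{x}_n^{*})$ achieves the same top payoff. Restricting attention---WLOG since $v^D$ takes the supremum over pure-strategy PBE---to the equilibrium in which all these indifferent agents pool at $d^{*}$, one obtains $\sigma^{-1}(d^{*}) = C_{H^{*}}(\bold{x}_n^{*})$, and Bayes' rule then gives $c = \hat{y}^{f}_{\bold{x}_n^{*}}(H^{*})$. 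Consequently the maximum equilibrium payoff equals $u(\hat{y}^{f}_{\bold{x}_n^{*}}(H^{*}))$, which is bounded above by $\max_{\bold{x}_n, H : |H| \le \alpha_h n} u(\hat{y}^{f}_{\bold{x}_n}(H))$. Subtracting the baseline $U^{f}_{\bold{x}_n}(\varnothing)$ delivers the pointwise inequality, and passing to expectations yields $V^D(n) \le V^{\max}(n)$.

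The main technical obstacle is justifying the pooling step: merging indifferent agents in $C_{H^{*}}(\bold{x}_n^{*})$ into $\sigma^{-1}(d^{*})$ alters the cells $\sigma^{-1}(d')$ for $d' \neq d^{*}$, thereby changing their posteriors $\sigma_E(d')$ and potentially breaking incentive compatibility for agents outside $C_{H^{*}}(\bold{x}_n^{*})$. The cleanest route is to construct the pooling equilibrium directly, specifying sufficiently pessimistic off-path beliefs for any newly off-path disclosures so that incentive compatibility is preserved while $\sigma_E(d^{*}) = \hat{y}^{f}_{\bold{x}_n^{*}}(H^{*})$ remains the top posterior. A more delicate non-constructive alternative is to decompose the partition $\{\sigma^{-1}(d')\}$ along its intersections with $C_{H^{*}}(\bold{x}_n^{*})$ and exploit the fact that every cell intersecting $C_{H^{*}}(\bold{x}_n^{*})$ must share the top posterior $c$; one then shows that either $\hat{y}^{f}_{\bold{x}_n^{*}}(H^{*}) \ge c$ directly, or there is an ``escape'' agent $\bold{x}_n'' \in \sigma^{-1}(d') \setminus C_{H^{*}}(\bold{x}_n^{*})$ yielding a witness pair $(\bold{x}_n'', H_{d'})$ with $\hat{y}^{f}_{\bold{x}_n''}(H_{d'}) \ge c$. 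In either resolution the pointwise bound follows, and the expectation inequality then gives the proposition.
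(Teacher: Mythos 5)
Your proposal follows the same skeleton as the paper's proof: single out the on-path disclosure $d^*$ yielding the highest equilibrium payoff, note that $d^*$ is feasible for every covariate vector in the cylinder $C_{H^*}(\bold{x}_n^*)$, conclude that the receiver's posterior at $d^*$ must be the average of $f$ over that entire cylinder, i.e., $\hat{y}^{f}_{\bold{x}_n^*}(H^*)$, and hence bound the top equilibrium payoff by $\max_{\bold{x}_n} v(f,\bold{x}_n)$ pointwise in $f$ before taking expectations. The divergence---and the gap---is in how you get from ``every agent in the cylinder attains the top payoff'' to ``the cell of $d^*$ is the whole cylinder.''

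Your ``WLOG'' pooling step does not work as stated. Because $v^D(f,\bold{x}_n)$ is the highest payoff across \emph{all} pure-strategy equilibria, you must bound the top payoff in every equilibrium; you cannot restrict attention to a modified equilibrium in which the indifferent agents pool at $d^*$. Merging those agents into $\sigma^{-1}(d^*)$ changes the very posterior $\sigma_E(d^*)$ you are trying to bound (as well as the posteriors of the cells that lose members), so a bound on the pooling equilibrium transfers back to the original one only if you additionally show that its top payoff weakly dominates---which you do not. The paper needs no surgery on the strategy profile: it argues directly that, in the given equilibrium, any agent for whom $d^*$ is feasible but who sends a different disclosure would have a profitable deviation to $d^*$ (since $d^*$ attains the maximal on-path payoff), so $\sigma^{-1}(d^*)=C_{H^*}(\bold{x}_n^*)$ already holds and Bayes' rule delivers the cylinder average. (Both arguments implicitly require the maximizing disclosure to be essentially unique; with ties an indifferent agent could send another maximizer and fragment the cylinder---a genericity issue the paper addresses explicitly only in its mixed-strategy extension.) Your non-constructive ``escape agent'' alternative is the right instinct for the tie case but is only a sketch, and as written it compares posteriors by magnitude rather than through $u$, which is not assumed monotone.
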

Thus, applying Proposition \ref{prop:DisclosureBound} and Corollary \ref{corr:Max}, our previous results extend.

\section{Relaxing our Assumptions on the Prior} \label{sec:ExtendBreak}

As shown in Example \ref{ex:Structural}, our main results can fail if the assumption of symmetric uncertainty over the role of the nonstandard covariate values (Assumption \ref{assp:Exchangeability}) is broken. We now propose two relaxations of Assumption \ref{assp:Exchangeability} and one relaxation of Assumption \ref{assp:ConstantVarY}, and explore the extent to which our main result extends.  In Section \ref{subsec:Irrelevance}, we suppose that it is known ex-ante that some $r_n$ covariates are relevant, while the remaining $n-r_n$ are not, so that even as $n$ grows to infinity, the effective number of covariates potentially grows more slowly. In Section \ref{sec:KnownEffect} we allow the agent to have prior knowledge about the role of certain nonstandard covariates. In Section \ref{sec:NonConstant}, we consider a model in which the predictability of the agent's type is increasing in the total number of covariates. Finally, Section \ref{sec:GeneralLemma} provides an abstract condition on the learning environment under which our main results hold, which requires the evaluator's uncertainty about the agent's type to grow sufficiently fast in $n$.

\subsection{Not all covariates are relevant} \label{subsec:Irrelevance}

Under Assumption \ref{assp:Exchangeability}, it cannot be known ex-ante that some covariates are irrelevant for predicting the type. The assumption thus rules out settings such as the following.

\begin{example} The evaluator is a job interviewer. Although in principle there is an infinite number of covariates that can describe a job candidate, it is understood that not all of them are actually relevant to the job candidate's ability. That is, there is some potentially large (but not exhaustive) set of covariates that contain all of the predictive content about the candidate's ability, and the remaining covariates are either irrelevant for predicting ability, or are predictive only because they correlate with other intrinsically predictive covariates. 
\end{example}

If irrelevant covariates cannot be disclosed to the evaluator, then we return to  our main model with a smaller $n$ and our previous results extend directly.  The more novel case is the one in which it is known that $n-r_n$ covariates are irrelevant, but those covariates can still be disclosed to the evaluator (for example, because it is not commonly understood that they are irrelevant).\footnote{To see the difference, consider the case in which the agent simply wants the evaluator to hold a higher posterior expectation. The irrelevant covariates create noise, and for some realizations of $f$ it may be that disclosing an irrelevant covariate leads to a higher evaluation.}

To model this, we suppose there is a sequence of sets of \emph{relevant covariates} $(R_1,R_2, \dots)$ such that each $R_n$ includes the standard covariates in $\mathcal{S}$ and is of size $s+r_n$, where $r_n = \lfloor \alpha_r \cdot n \rfloor$ is the (known) number of relevant nonstandard covariates. We will moreover without loss index these sets so that the relevant covariates are $x_{s+1},x_{s+2},\dots,x_{s+r_n}$. The irrelevance of the remaining covariates is reflected in the following assumption, which says that, holding fixed the values of the relevant covariates, the values of the irrelevant covariates do not change the type.

\begin{assumption}[Irrelevance] There is a function $g(x_1, \dots,x_{s+r_n})$ such that
\[f(x_1,\dots,x_n) = g(x_1, \dots,x_{s+r_n})\]
for every $(x_1,\dots, x_n) \in \{0,1\}^n$.
\end{assumption}

We then modify Assumptions \ref{assp:Exchangeability} and \ref{assp:ConstantVarY} to apply  only to the relevant covariates.

\begin{assumption}[Exchangeability] For every realization of $(x_1, \dots, x_s)$, the sequence
\begin{equation}
(\widetilde{Y}^n_1, \dots, \widetilde{Y}^n_{2^n}) \equiv (g(x_1,\dots,x_s,x_{s+1}, \dots, x_{s+r_n}) : (x_{s+1},\dots,x_{s+r_n}) \in \{0,1\}^{r_n-s})
\end{equation}
is finitely exchangeable. \label{assp:ExchangeabilityRelevant}
\end{assumption}

\begin{assumption}[Constant Unpredictability of $Y$] For every realization of the standard covariates $(x_1, \dots, x_s)$ and every pair $n'>n$, the  sequence $(\widetilde{Y}^n_1, \dots, \widetilde{Y}_{2^{s+r_n}}^n)$ and the truncated sequence $(\widetilde{Y}^{n'}_1, \dots, \widetilde{Y}^{n'}_{2^{s+r_n}})$ are identical in distribution.  \label{assp:ConstantVarYRelevant} 
\end{assumption}

Our main model is otherwise unchanged---in particular, we allow the agent to disclose any of the $n-s$ nonstandard covariates, including those which are irrelevant.  
We show that our previous results extend so long as $\alpha_h < \alpha_r$.

\begin{proposition} \label{prop:Effective} Suppose Assumptions \ref{assp:ExchangeabilityRelevant} and \ref{assp:ConstantVarYRelevant} hold, where
$\alpha_h < \alpha_r$. Then for every covariate vector $\bold{x} \in \{0,1\}^\infty$ the expected value of context vanishes to zero as $n$ grows large.
\end{proposition}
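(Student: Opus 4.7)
The plan is to reduce Proposition \ref{prop:Effective} to Theorem \ref{thm:Main} applied to the sub-vector of relevant covariates. The key observation is that under the Irrelevance assumption, disclosing an irrelevant nonstandard covariate never changes the evaluator's posterior expectation. Concretely, for any disclosure $H \subseteq \mathcal{N}$, decompose $H = H_r \cup H_i$ with $H_r \subseteq \{s+1,\ldots,s+r_n\}$ and $H_i \subseteq \{s+r_n+1,\ldots,n\}$. Since $f(x) = g(x_1,\ldots,x_{s+r_n})$ does not depend on coordinates indexed above $s+r_n$, the averaging in \eqref{eq:CondExp} over the irrelevant coordinates of $C_H(\bold{x}_n)$ reproduces each $g$-value with equal multiplicity, so $\hat y^f_{\bold{x}_n}(H) = \hat y^f_{\bold{x}_n}(H_r)$. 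Consequently
\[
\max_{H\subseteq\mathcal{N},\,|H|\le h_n}\hat y^f_{\bold{x}_n}(H)
=\max_{H_r\subseteq\{s+1,\ldots,s+r_n\},\,|H_r|\le h_n}\hat y^f_{\bold{x}_n}(H_r),
\]
where the hypothesis $\alpha_h<\alpha_r$ ensures $h_n\le r_n$ for all large $n$, so the original capacity constraint never forces the agent to waste a slot on an element of $H_i$.

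The reduced problem is precisely the setting of Theorem \ref{thm:Main} applied to the type function $g:\{0,1\}^{s+r_n}\to[-\overline{y},\overline{y}]$, with $s$ standard covariates, $r_n$ nonstandard covariates, and Human capacity $h_n$. Assumptions \ref{assp:ExchangeabilityRelevant} and \ref{assp:ConstantVarYRelevant} are the exact analogues of Assumptions \ref{assp:Exchangeability} and \ref{assp:ConstantVarY} for $g$. Importing the quantitative bound established in the proof of Theorem \ref{thm:Main}, I would obtain
\[
V(n,\bold{x}_n)\;\le\; C\cdot\frac{\sqrt{\log K'_n}}{2^{(r_n - h_n)/2}},\qquad K'_n := \sum_{j=0}^{h_n}\binom{r_n}{j}\le 2^{r_n},
\]
for a constant $C$ depending only on the Lipschitz constant of $u$ and on $\overline{y}$. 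Since $\log K'_n\le r_n\log 2$ grows linearly in $n$ while $r_n-h_n\ge \lfloor\alpha_r n\rfloor-\lfloor\alpha_h n\rfloor$ grows linearly at the strictly positive rate $\alpha_r-\alpha_h$, the right-hand side decays exponentially, delivering the conclusion.

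The main obstacle I anticipate is confirming that the proof of Theorem \ref{thm:Main} genuinely applies to the reduced model, since Human's effective capacity ratio there is $h_n/r_n$, which only converges to $\alpha_h/\alpha_r\in(0,1)$ rather than being literally a fixed constant. Inspection of that proof---the reduction to a maximum of i.i.d.\ Gaussians via \citet{chernozhukov2013gaussian} followed by the classical $\sqrt{2\log K'_n}$ bound on the expected maximum---depends only on the pair $(K'_n,\, 2^{-(r_n - h_n)})$, and goes through verbatim provided $r_n-h_n\to\infty$ at a rate dominating $\log K'_n$, which is exactly what $\alpha_h<\alpha_r$ supplies. Equivalently, one can frame the argument as an application of the abstract learning condition in Section \ref{sec:GeneralLemma}: the reduction above shows that the informativeness of any single disclosure set in the reduced model vanishes sufficiently quickly in $n$, which is precisely the hypothesis required there.
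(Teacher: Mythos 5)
Your proof is correct and follows essentially the same route as the paper's: both reduce each posterior expectation to a sample average of the $2^{r_n}$ conditional types determined by the relevant covariates (so that each disclosed set's posterior has variance of order $2^{-(r_n-h_n)}$) and then rerun the Gaussian-maximum machinery from Theorem \ref{thm:Main}, with $\alpha_h<\alpha_r$ supplying the exponential decay that dominates the $\sqrt{\log K_n}$ growth. The only cosmetic difference is that you first collapse the maximum to the $K'_n=\sum_{j=0}^{h_n}\binom{r_n}{j}$ relevant-only disclosures, whereas the paper keeps all $K_n$ disclosures and simply notes that each one's posterior coincides with that of its relevant part; both counts are $2^{O(n)}$, so the resulting bound is the same.
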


The case where $\alpha_r < \alpha_h$ (violating the assumption of the result) corresponds to a setting in which the number of relevant covariates is so small that the agent can disclose all of them. For example, if a job candidate is convinced that only 10 nonstandard covariates are actually relevant for predicting his on-the-job ability, and all of these nonstandard covariates can be shared during a job interview, then our main results do not extend and we should think of the value of context as being potentially large. On the other hand, if the set of relevant covariates are low-dimensional relative to the total number of covariates, but are still too numerous to be fully revealed, then our main results do extend. 

This result suggests that whether human or black box evaluation is more appropriate should be determined in part based on whether the available signal is concentrated in a small number of covariates (favoring the human evaluator) or spread out across a large number of covariates (favoring the black box evaluator). The same application may transition between these regimes over time. For example, in a medical setting where black box diagnosis is highly accurate based on non-interpretable features of an image scan, it may not be possible to communicate sufficient information via any small number of covariates. But if the predictive features of the image are subsequently better understood and defined, then it may be that a small set of (newly defined) features does eventually capture all of the signal content, and can be fully disclosed in a conversation.

\subsection{Certain nonstandard covariates have known effects} \label{sec:KnownEffect}

Another possibility is that the agent knows how  certain nonstandard covariates are correlated with the type. 

\begin{example} The agent is a patient who resided around Chernobyl at the time of the Chernobyl nuclear disaster of 1986. The agent is being evaluated for potential thyroid conditions, and knows that this particular part of his history increases the probability of a thyroid condition.
\end{example}

Specifically suppose there is a set $K \subseteq \{1,\dots,n\}$ of covariate indices whose effects are known. The set $K$ includes the standard covariates, but possibly also includes some nonstandard covariates. Without loss, we can index these $x_1, \dots, x_K$. We weaken Assumption \ref{assp:Exchangeability} to the following:

\begin{assumption}[Exchangeability] For every realization of the covariates $(x_1, \dots, x_K)$,
\begin{equation}
(Y^n_1, \dots, Y^n_{2^n}) \equiv (f(x_1,\dots,x_K, x_{K+1}, \dots, x_n) : (x_{K+1},\dots,x_{n}) \in \{0,1\}^{n-\vert K \vert })
\end{equation}
is finitely exchangeable. \label{assp:ExchangeabilityKnown}
\end{assumption}

This assumption imposes exchangeability only over the nonstandard covariates whose effects are not ex-ante known. Clearly if $K$ is a strict superset of $\mathcal{S}$, then the expected value of context need not vanish under Assumptions \ref{assp:ConstantVarY} and \ref{assp:ExchangeabilityKnown}. A  simple example is the following.

\begin{example} Suppose there are no standard covariates, and $K= \{1\}$, i.e., the first nonstandard covariate has a known effect, where $f(\bold{x}_n) \sim U[-1,0]$ if $x_1=0$ and $f(\bold{x}_n) \sim U[0,1]$ if $x_1=1$. Suppose further that the agent's covariate vector satisfies $x_1=1$. Then the prior expectation of the agent's type is 0, but revealing $x_1 =1$ moves the posterior expectation to $1/2$. So the expected value of context does not vanish.
\end{example}

But if we modify the definition in (\ref{eq:CondExp}) to
\[\hat{y}^{f}_{\bold{x}_n}(A)= \mathbb{E}[Y \mid X_i = x_i \quad \forall i \in K\cup A]
\]
with $K$ replacing $\mathcal{S}$,
and again let $U^f_{\bold{x}_n}(A) = u\left(\hat{y}^{f}_{\bold{x}_n}(A),y\right)$, then the modified  expected value of context
\begin{equation*}
v(f,\bold{x}_n) = \max_{\substack{H \subseteq \mathcal{N}\backslash K \\ \vert H \vert \leq \alpha_h n}} U_{\bold{x}_n}^{f}(H) - U_{\bold{x}_n}^{f}(\varnothing)
\end{equation*}
evaluates the value of context beyond those covariates with known effects. The same proof shows that this expected value of context vanishes to zero as $n$ grows large. That is, beyond the value of context that is already clear to the agent based on private knowledge about his nonstandard covariates, the agent does not expect substantial additional gain from the remaining covariates.

\subsection{Information accumulates in $n$} \label{sec:NonConstant}

So far we have assumed that the predictability of $Y$ is constant in the number of covariates. This is not essential for our results. Suppose instead that 
\[y_n = f(x_1,\dots,x_n) +\varepsilon_n\]
where $\varepsilon_n$ is a mean-zero random variable that is independent of the covariates $(x_1,\dots,x_n)$. This describes a setting in which the covariates are not sufficient to reveal the agent's type, and there is  a residual unknown.

Our previous results extend directly when the distribution of $\varepsilon_n$ is the same for all $n$.
Another natural case is one in which $\Var(\varepsilon_n)$ decreases monotonically in $n$, with $\lim_{n\rightarrow \infty} \Var(\varepsilon_n)=0$; that is, in environments with a larger number of covariates, the agent's type is more predictable. In Appendix \ref{app:accum}
 we show that Theorem \ref{thm:Main} directly extends. We also show that the comparisons in parts (a) and (b) of Theorem \ref{prop:PreferHA} hold for sufficiently large $n$, under the following assumption:

 \begin{assumption} For each $n \in \mathbb{Z}_+$, let $\sigma_{\varepsilon,n}^2:=\Var(\varepsilon_n)$ and assume that $\frac{\varepsilon_n}{\sqrt{\sigma_{\varepsilon,n}^2}}$ admits a pdf, which we denote by $g_n$.     The sequence $\{g_n(0)\}_n$ is bounded.
\end{assumption}

Loosely speaking, our comparisons in Theorem \ref{prop:PreferHA} continue to hold so long as the variance of $\varepsilon_n$ does not increase too fast in $n$.

\subsection{Sufficient residual uncertainty} \label{sec:GeneralLemma}

 In this final section, we provide an abstract condition on the evaluator's learning environment, under which Theorem \ref{thm:Main} extends. 

For each $n$, let $\mathcal{D}_n$ denote the set of all disclosures respecting the human evaluator's capacity constraint, i.e., all pairs $(H, (x_i)_{i \in H})$ consisting of a set $H \subseteq \{s+1, \dots, n\}$ with $\lfloor \alpha_h \cdot n\rfloor$ or fewer nonstandard covariates, and values $(x_i)_{i \in H}$ for those covariates. Further define $\mathcal{D} = \cup_{n \geq 1} \mathcal{D}_n$ to be the set of all disclosures.   Similarly, for each $n$ let $\mathcal{F}_n$ be the set of all  type functions $f: \{0,1\}^n \rightarrow [-\overline{y},\overline{y}]$, and define $\mathcal{F} = \cup_{n \geq 1} \mathcal{F}_n$. An \emph{evaluation rule} is any family $\rho = (\rho_f)_{f \in \mathcal{F}}$ where each $\rho_f: \mathcal{D} \rightarrow[-\overline{y},\overline{y}]$ maps disclosures into evaluations for the given function $f$. Finally, fixing any update rule $\rho$, number of covariates $n$, and disclosure $d \in \mathcal{D}_n$, let 
\[Z_d^n = \rho_f(d)\]
be the random evaluation when $f$ is drawn from  $\mathcal{F}_n$ according to the agent's prior. 

We impose two assumptions below on the evaluation rule. The first says that the expected evaluation $Z_d^n$ is equal to the prior expected type $\mu \equiv \mathbb{E}[Y]$; the second says that the distribution of the evaluation concentrates on $\mu$ sufficiently fast as the number of hidden covariates $n$ grows large. Intuitively, the assumption requires that as the number of residual unknowns---i.e., the covariates which are predictive of the type, but are not revealed to the evaluator---grows large, the informativeness of any fixed disclosure becomes small.\footnote{In the limit with an uninformative disclosure, the distribution of the evaluation is degenerate at the prior expectation $\mu$ for any Bayesian updating rule.}

\begin{assumption}[Unbiased] $\E[Z_d^n]=\mu$ for every disclosure $d$. \label{assp:Unbiased}
    \end{assumption}

    \begin{assumption}[Fast Concentration]  For any sequence of feasible disclosures $(d_n)_{n \geq 1}$, 
     \[ Var(Z_{d_n}^n)=o\left(\frac{1}{K_n}\right)\]
     where $K_n=\sum_{j=0}^{\lfloor \alpha_h  n \rfloor}\binom{n-s}{j}$ is the number of unique sets $H \subseteq \{s+1, \dots, n\}$ with $\alpha_h n$ or fewer elements.
     \label{assp:Concentrate}
     \end{assumption}

These assumptions do not in general represent a weakening of our main model. Previously we studied the evaluation rule $\rho$ mapping each disclosure into the conditional expectation of the agent's type, and imposed Assumption \ref{assp:Exchangeability} on the agent's prior about $f$. In this model, the evaluation $Z_d^n$ for any disclosure $d =(H, (x_i)_{i \in H})$ could be represented  as a sample average consisting of $2^{n-s-\vert H \vert}$ elements. Assumption \ref{assp:Unbiased} is clearly satisfied (because the update rule is Bayesian), but one can select a sequence of disclosures $(d_n)$ such that $Var(Z_{d_n}^n) = \frac{1}{2^{n(1-\alpha_h)-s}}$ (see the proof of Theorem \ref{thm:Main} for details). Thus the speed of convergence demanded in Assumption \ref{assp:Concentrate} is not met when $\alpha_h$ is sufficiently large.

Nevertheless, Assumption \ref{assp:Concentrate} identifies the qualitative property of our main setting that gave us Theorem \ref{thm:Main}: residual uncertainty must have the power to overwhelm any information revealed through disclosure. Under these assumptions, our main result extends.

\begin{proposition}
Suppose Assumptions \ref{assp:Unbiased} and \ref{assp:Concentrate} hold. Then for every covariate vector $\bold{x} \in \{0,1\}^\infty$, the expected value of context vanishes to zero as $n$ grows large, i.e., 
\[\lim_{n \rightarrow \infty} V(n, \bold{x}_n)=0.\] \label{prop:non-bayesian} 
\end{proposition}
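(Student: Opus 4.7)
The plan is to exploit the Lipschitz continuity of the utility function $u$ to reduce bounding $V(n,\bold{x}_n)$ to bounding the expected maximum fluctuation of the evaluations $Z^n_d$ around the prior mean $\mu$, and then to control this expected maximum using a second-moment inequality combined with Assumption \ref{assp:Concentrate}.

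First I would fix any covariate vector $\bold{x}_n$ and write each feasible disclosure as $d(H)=(H,(x_i)_{i\in H})$ for $H\subseteq\mathcal{N}$ with $|H|\le \alpha_h n$. Since the maximum in the definition of $v$ can be taken over the extended family that includes $H=\varnothing$ (yielding $v\ge 0$), Lipschitz continuity of $u$ with constant $L$ in its first argument gives
\[
v(f,\bold{x}_n) \;\le\; L\max_H \bigl|\rho_f(d(H))-\rho_f(\varnothing)\bigr| \;=\; L\max_H \bigl|Z^n_{d(H)}-Z^n_\varnothing\bigr|.
\]
Using Assumption \ref{assp:Unbiased}, both $Z^n_{d(H)}$ and $Z^n_\varnothing$ have mean $\mu$, so the triangle inequality and monotonicity of expectations give
\[
V(n,\bold{x}_n) \;\le\; L\,\mathbb{E}\Bigl[\max_H |Z^n_{d(H)}-\mu|\Bigr] + L\,\mathbb{E}\bigl[|Z^n_\varnothing-\mu|\bigr].
\]

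Next I would apply Jensen's inequality followed by the crude union bound $\max_H (Z^n_{d(H)}-\mu)^2 \le \sum_H(Z^n_{d(H)}-\mu)^2$:
\[
\mathbb{E}\Bigl[\max_H |Z^n_{d(H)}-\mu|\Bigr] \;\le\; \sqrt{\sum_H \Var(Z^n_{d(H)})} \;\le\; \sqrt{K_n \cdot \max_H \Var(Z^n_{d(H)})},
\]
where the sum has $K_n$ terms. Likewise $\mathbb{E}[|Z^n_\varnothing-\mu|]\le\sqrt{\Var(Z^n_\varnothing)}$, which is $o(1)$ by applying Assumption \ref{assp:Concentrate} to the constant sequence $d_n=\varnothing$. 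So the remaining task is to show $K_n\cdot\max_H \Var(Z^n_{d(H)})\to 0$.

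The main obstacle—and the only subtle step—is upgrading the statement in Assumption \ref{assp:Concentrate} from ``every fixed sequence'' to ``uniform over all feasible disclosures at stage $n$.'' I would argue by contradiction: if $K_n\cdot\max_H\Var(Z^n_{d(H)})\not\to 0$, then along some subsequence there is a constant $c>0$ with $\max_H\Var(Z^n_{d(H)})\ge c/K_n$; selecting a maximizer $d^*_n$ at each such $n$ (and any feasible disclosure otherwise) produces a sequence $(d_n)$ violating $\Var(Z^n_{d_n})=o(1/K_n)$. This yields the uniform decay, plugging back gives $\mathbb{E}[\max_H|Z^n_{d(H)}-\mu|]=o(1)$, and hence $V(n,\bold{x}_n)\to 0$, as required.
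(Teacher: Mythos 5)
Your proof is correct, and it reaches the same quantitative endpoint as the paper --- a bound of order $\sqrt{\sum_{k}\Var(Z^n_{k})}=\sqrt{K_n\cdot o(1/K_n)}=o(1)$ --- but by a somewhat different and more self-contained route. The paper invokes the Arnold--Groeneveld (1979) bound on the expectation of the maximum of dependent random variables, which after applying Assumption \ref{assp:Unbiased} collapses to $\bigl|\E[\max_k Z^n_k]-\mu\bigr|\le\sqrt{(1-1/K_n)\sum_k\Var(Z^n_k)}$; you instead get $\E[\max_H|Z^n_{d(H)}-\mu|]\le\sqrt{\sum_H\Var(Z^n_{d(H)})}$ elementarily from $\max_H(\cdot)^2\le\sum_H(\cdot)^2$ plus Jensen. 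Your version buys three things: it avoids the external citation; it controls the two-sided deviation $\max_H|Z^n_{d(H)}-\mu|$ in one step, whereas the paper bounds only the one-sided quantity and defers the lower bound to an omitted symmetric argument; and it makes explicit two points the paper glosses over, namely the Lipschitz reduction from a general utility $u$ to the identity (the paper simply sets $u(x)\equiv x$) and the upgrade of Assumption \ref{assp:Concentrate} from ``every fixed sequence of disclosures'' to a bound uniform over all $K_n$ disclosures at stage $n$, which you correctly obtain by selecting the variance-maximizing disclosure at each $n$ and deriving a contradiction. The paper's appeal to Arnold--Groeneveld is marginally sharper (the $(1-1/K_n)$ factor) but this is immaterial to the limit.
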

\vspace{-6mm}
This result also clarifies that neither the precise symmetry imposed by Assumption \ref{assp:Exchangeability}, nor the assumption of Bayesian updating in our main model, are crucial for our main result.

\section{Conclusion} \label{subsec:Discuss}

One argument against replacing human experts with algorithmic predictions is that no matter how many covariates are taken as input by the algorithm, the number of potentially relevant circumstances and characteristics is still more numerous. In cases where some important fact is missed by a human evaluator, it is often possible to correct this oversight. There is no such safety net with a black box algorithm.

This is a compelling narrative, yet our results suggest that it may be less important than it initially seems. When there is a large number of nonstandard covariates that may matter for the prediction problem, but the agent does not know how these nonstandard covariates impact the type, then the expected value of disclosing additional information is small---even when we assume that the agent can identify the most useful covariates to disclose, and that the claims about these covariates are taken at face value.

In contrast, if the agent has substantial prior knowledge about the predictive roles of the nonstandard covariates, then our conclusion will not be appropriate. In particular, if there is a ``low-dimensional'' set of covariates that predict the type and can be fully disclosed (as in Example \ref{ex:OnlyOne}), or if there is a known structural relationship between covariates and the type (as in Example \ref{ex:Structural}), then the expected value to disclosing additional information may be large. We thus view  our results as revealing a link betwen the value of targeting information acquisition (beyond simply conditioning on large quantities of information) and the extent of prior ``structural information'' about the numerous covariates that can be brought up as explanations.

We conclude with two alternative interpretations of our model and results.

\textbf{Online versus offline learning.} In our model, a key distinction between human and black box evaluation is that the human can adapt which covariates are acquired based on other properties of the agent, while the black box cannot. This is an appropriate comparison of human and black box evaluators as they currently stand: The black box algorithms used to make predictions about humans are usually  supervised machine learning algorithms which are pre-trained on a large data set. But new black box algorithms, such as large language models, blur this  distinction, and future evaluations (e.g., medical diagnoses) may be conducted by black box systems with which the agent can communicate.

From this more forward-looking perspective, our results can be understood as comparing the merits of online versus offline learning. That is, how valuable is it to have the evaluator dynamically acquire information given feedback from the agent? Our result suggests that this is not important in expectation. For example, Part (a) of Theorem \ref{prop:PreferHA} implies that an agent who cares about accuracy should prefer a supervised machine learning algorithm trained on a large number of covariates over a conversation with ChatGPT that reveals a smaller number of covariates. 

\textbf{Value of human supervision of algorithms.}
While we have interpreted the $s$ standard covariates as a small set of covariates acquired by the human evaluator, an alternative interpretation is that they are the initial inputs to an algorithm. In this case, the expected value of context quantifies the sensitivity of the algorithm's predictions to the addition of further relevant inputs, e.g.,  as identified by a human manager. This interpretation is particularly relevant when we consider accuracy as the objective, in which case the value of context tells us how wrong the algorithm is compared to if the algorithm could be retrained on additional relevant inputs. Theorem \ref{thm:Main} says that while in certain cases additional inputs would lead to a  substantially more accurate prediction, under our symmetry assumption on the agent's prior this will not typically be the case.

\appendix

\section{Proof of Generalization of Theorem \ref{thm:Main}}

In a change of notation relative to the main text, we subsequently use $\bold{X}_n$ to denote the agent's covariate vector and $Y$ to denote the agent's type (leaving $\bold{x}_n$ and $y$ to denote realizations of these random variables). Moreover, rather than supposing that $Y$ is deterministically related to $\bold{X}_n$ via a function $f$, let $(\bold{X}_n,Y)\sim P^n$ where $P^n$ is unknown. We replace Assumptions \ref{assp:Exchangeability} and \ref{assp:ConstantVarY} with the following weaker assumption.

\begin{assumption} Fix any realization of the standard covariates $\bold{x}_{\mathcal{S}} \in \{0,1\}^s$. There is an infinitely exchangeable sequence $(\widetilde{Y}_1, \widetilde{Y}_2, \dots)$ such that for every $n\in \mathbb{N}$, the sequence 
\[\left(\mathbb{E}[Y \mid (X_1,\dots,X_n) = (\bold{x}_{\mathcal{S}},\bold{x}_{-\mathcal{S}})]\right)_{\bold{x}_{-\mathcal{S}} \in \{0,1\}^{n-s}}\]
has the same distribution as $(\widetilde{Y}_1, \dots, \widetilde{Y}_{2^n})$. \label{assp:ExchangeabilityGeneral}
\end{assumption}

That is, permuting the labels and/or values of the nonstandard covariates does not change the joint distribution of the conditional expectations of $y$. When $y$ is degenerate conditional on $\bold{x}_n$, Assumption \ref{assp:ExchangeabilityGeneral} reduces to our previous two assumptions. We will prove the following generalization of Theorem \ref{thm:Main}.

\begin{theorem} \label{thm:MainGeneral}
Suppose Assumption \ref{assp:ExchangeabilityGeneral} holds. Then for every covariate vector $\bold{x} \in \{0,1\}^\infty$, the expected value of context vanishes to zero as $n$ grows large, i.e.,
	$\lim_{n\rightarrow\infty}V(n,\bold{x}_n)=0$.
 \label{thm:value}
\end{theorem}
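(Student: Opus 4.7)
The idea is to exploit the stronger hypothesis of Assumption~\ref{assp:ExchangeabilityGeneral}---infinite exchangeability of the sequence of conditional expectations---via de Finetti's theorem, thereby reducing the problem to a conditionally i.i.d.\ setting where a single sub-Gaussian maximal inequality delivers the bound. First, by Lipschitz continuity of $u$ with some constant $L$,
\[v(f,\bold{x}_n) \leq L \cdot \max_{1 \leq k \leq K_n}\bigl|Z_k - Z_\varnothing\bigr|,\]
where the $Z_k$ are the posterior expectations of $Y$ under each of the $K_n = \sum_{j=0}^{\lfloor \alpha_h n \rfloor}\binom{n-s}{j}$ feasible nonstandard disclosures, and $Z_\varnothing$ is the posterior expectation given only the standard covariates. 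It therefore suffices to show that $\E[\max_k |Z_k - Z_\varnothing|] \to 0$.

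Next, fix a realization $\bold{x}_{\mathcal{S}}$ of the standard covariates. Assumption~\ref{assp:ExchangeabilityGeneral} provides an infinitely exchangeable sequence $(\widetilde{Y}_1, \widetilde{Y}_2, \dots)$ such that $(\E[Y \mid X_1,\dots,X_n = \bold{x}_{\mathcal{S}}, \bold{x}_{-\mathcal{S}}])_{\bold{x}_{-\mathcal{S}}}$ is distributed as its first $2^{n-s}$ terms. By de Finetti's theorem there exists a random probability measure $Q$ on $[-\overline{y},\overline{y}]$ such that, conditional on $Q$, the $\widetilde{Y}_i$ are i.i.d.\ with distribution $Q$. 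Each $Z_k$ is then the mean of $2^{n-s-|H_k|} \geq m_n := 2^{(1-\alpha_h)n - s}$ of the $\widetilde{Y}_i$, and $Z_\varnothing$ is the mean of all $2^{n-s}$ of them. Conditional on $Q$, Hoeffding's inequality makes each $Z_k - \mu_Q$ sub-Gaussian with variance proxy at most $\overline{y}^2/m_n$, where $\mu_Q := \int y\, dQ(y)$. The classical sub-Gaussian maximal inequality---which requires only marginal sub-Gaussianity and no independence among the $Z_k$---then yields
\[\E\!\left[\max_{1 \leq k \leq K_n}\,\bigl|Z_k - \mu_Q\bigr|\,\Big|\,Q\right] \;\leq\; \overline{y}\sqrt{\frac{2 \log(2K_n)}{m_n}},\]
and by the same Hoeffding bound $\E[|Z_\varnothing - \mu_Q|\mid Q] \leq \overline{y}/2^{(n-s)/2}$. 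Since $\log K_n \leq (n-s)\log 2$ grows linearly while $m_n$ grows exponentially (using $\alpha_h < 1$), both right-hand sides vanish as $n \to \infty$ uniformly in $Q$. Taking expectations over $Q$ and then over $\bold{x}_{\mathcal{S}}$, and using the triangle inequality $\E[\max_k |Z_k - Z_\varnothing|] \leq \E[\max_k |Z_k - \mu_Q|] + \E[|Z_\varnothing - \mu_Q|]$, we conclude $V(n,\bold{x}_n) \to 0$.

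The main obstacle is the arbitrary dependence among the $Z_k$: they are sample means over heavily overlapping subsets of the same pool $\{\widetilde{Y}_i\}$, so neither independence nor any clean correlation structure is available at the unconditional level. The key observation that resolves this is that the $\sqrt{\log K_n}$ sub-Gaussian maximal bound is insensitive to dependence, and the de Finetti reduction turns the complicated dependence induced by $f$ into a conditional i.i.d.\ structure under which each $Z_k$ is sub-Gaussian by a single application of Hoeffding. This bypasses the more delicate construction in Theorem~\ref{thm:Main}---where i.i.d.\ dominating variables $Z_k^{iid}$ are introduced and Gaussian approximation via \citet{chernozhukov2013gaussian} is invoked---because infinite exchangeability already gives a conditionally i.i.d.\ representation directly.
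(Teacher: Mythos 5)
Your proof is correct, and it takes a genuinely different route from the paper's. The paper first proves the result under an i.i.d.\ assumption on the conditional expectations (its Theorem \ref{thm:MainIID}) and only afterwards lifts to the exchangeable case via de Finetti plus dominated convergence; within the i.i.d.\ case it handles the dependence among the $Z_k$ by constructing stochastically dominating independent surrogates $Z_k^{ind}$ (a multi-step first-order-stochastic-dominance argument), equalizing sample sizes via a mean-preserving-spread/Jensen step to get i.i.d.\ surrogates $Z_k^{iid}$, invoking the Gaussian approximation of \citet{chernozhukov2013gaussian}, and finishing with a bound on the expected maximum of i.i.d.\ Gaussians; a separate argument (its Proposition \ref{lem_abs}) is then needed to upgrade the one-sided bound to $\E[\max_k|Z_k-\mu|]$. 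You instead apply de Finetti up front and observe that, conditional on the mixing measure $Q$, each $Z_k$ is a mean of at least $2^{n-s-h_n}$ conditionally i.i.d.\ bounded variables, hence marginally sub-Gaussian with variance proxy $\overline{y}^2/2^{n-s-h_n}$ by Hoeffding; since the $\sigma\sqrt{2\log(2K)}$ maximal inequality requires only marginal sub-Gaussianity, the entire FOSD/MPS construction and the Gaussian approximation become unnecessary, and you get the two-sided bound $\E[\max_k|Z_k-\mu_Q|]$ in one step, uniformly in $Q$. Your argument is shorter, more elementary, and yields essentially the same $\sqrt{\log K_n/2^{n-s-h_n}}$ rate that feeds into the quantitative threshold of Theorem \ref{prop:PreferHA}; what the paper's heavier machinery buys is distributional information about the maximum (convergence to the maximum of Gaussians) and the FOSD comparison lemma, which it reuses elsewhere (e.g., to bound $\P(\max_k Z_k<0)$ in the proof of Proposition \ref{lem_abs}), but none of that is needed for the statement at hand.
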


Towards this, we will first prove the conclusion under a strengthening of Assumption \ref{assp:ExchangeabilityGeneral}, where exchangeability is replaced by an assumption that conditional expectations are i.i.d.\ across the different possible completions of the agent's covariate vector.

\begin{assumption} Fix any realization of the standard covariates $\bold{x}_{\mathcal{S}} \in \{0,1\}^s$. Then there is a distribution $F$ such that for every $n\in \mathbb{N}$, the conditional expectations 
\[\mathbb{E}[Y \mid (X_1,\dots,X_n) = (\bold{x}_{\mathcal{S}},\bold{x}_{-\mathcal{S}})]\sim_{iid} F \]
across all vectors $\bold{x}_{-\mathcal{S}} \in \{0,1\}^n$.  \label{assp:IID}
\end{assumption}

\begin{theorem} \label{thm:MainIID}
Suppose Assumption \ref{assp:IID} holds. Then for every covariate vector $\bold{x} \in \{0,1\}^\infty$, the expected value of context vanishes to zero as $n$ grows large, i.e.,
	$\lim_{n\rightarrow\infty}V(n,\bold{x}_n)=0$. 
\end{theorem}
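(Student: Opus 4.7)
The plan is to specialize the CCK-based proof sketch of Theorem \ref{thm:Main} given in Section 3.2 to the i.i.d.\ setting of Assumption \ref{assp:IID}, where a much simpler Hoeffding-plus-union-bound argument suffices. First, by Lipschitz continuity of $u$ in its first argument with constant $L$ (the second argument $y=f(\bold{x}_n)$ is common across all terms in the max defining $v$),
\[v(f,\bold{x}_n) \le L\,\max_H \bigl|\hat y^f_{\bold{x}_n}(H) - \hat y^f_{\bold{x}_n}(\varnothing)\bigr| \le L\!\left(\max_H|Z_H-\mu|+|Z_\varnothing - \mu|\right),\]
where $Z_H := \hat y^f_{\bold{x}_n}(H)$ and $\mu := \mathbb{E}[Y]$. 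Since $Z_\varnothing$ is an average of $2^{n-s}$ i.i.d.\ bounded variables under Assumption \ref{assp:IID}, $\mathbb{E}|Z_\varnothing - \mu|\to 0$ by the law of large numbers, so it suffices to show $\mathbb{E}[\max_H |Z_H - \mu|]\to 0$.

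Second, fix the realization $\bold{x}_{\mathcal{S}}$ of standard covariates and let $Y_{\tilde x} := \mathbb{E}[Y\mid \bold{X}_n=\tilde x]$. Under Assumption \ref{assp:IID} these are i.i.d.\ with a common distribution $F$ supported on $[-\overline{y},\overline{y}]$ and having mean $\mu$. For each $H$ with $|H|\le \lfloor \alpha_h n\rfloor$,
\[Z_H = \frac{1}{|C_H(\bold{x}_n)|}\sum_{\tilde x\in C_H(\bold{x}_n)} Y_{\tilde x}\]
is a sample average of $|C_H| = 2^{n-s-|H|} \ge m_n := 2^{n(1-\alpha_h)-s}$ i.i.d.\ bounded variables with mean $\mu$. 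Hoeffding's inequality applied to each $Z_H$ together with a union bound over the $K_n = \sum_{j=0}^{\lfloor\alpha_h n\rfloor}\binom{n-s}{j}\le 2^{n-s}$ disclosure sets yields, for every $t>0$,
\[\P\!\left(\max_H |Z_H-\mu|\ge t\right) \le 2^{n-s+1}\exp\!\left(-\frac{m_n t^2}{2\overline{y}^2}\right).\]
The exponent decays doubly exponentially in $n$ (it is of order $-2^{n(1-\alpha_h)}$) while the prefactor grows only singly exponentially, so the right-hand side vanishes for each fixed $t>0$ as long as $\alpha_h < 1$. Combined with the envelope bound $|Z_H-\mu|\le 2\overline{y}$, integrating the tail gives $\mathbb{E}[\max_H |Z_H - \mu|]\le t + o(1)$ for arbitrary $t > 0$, which yields the desired conclusion.

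The main obstacle one might anticipate is the correlation structure among the $Z_H$'s: they share underlying i.i.d.\ samples $\{Y_{\tilde x}\}$, and, as emphasized in the paper's proof sketch of Theorem \ref{thm:Main}, the $Z_H$ are neither independent nor identically distributed. In the exchangeable (non-i.i.d.) setting of Theorem \ref{thm:MainGeneral}, the Chernozhukov--Chetverikov--Kato Gaussian coupling seems to be the natural tool for handling this dependence sharply. Under Assumption \ref{assp:IID}, however, the Hoeffding bound cleanly sidesteps the dependence because each individual $Z_H$ is already an average of \emph{unconditionally} i.i.d.\ variables, and the $\log K_n = O(n)$ slack introduced by the union bound is of lower order than $m_n$'s exponential growth. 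The only care needed is in the variance-and-sample-size bookkeeping, and in confirming the reduction in the first paragraph treats $y=f(\bold{x}_n)$ consistently across the max defining $v$, both of which are routine.
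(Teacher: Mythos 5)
Your proof is correct, but it takes a genuinely more elementary route than the paper's. The paper proceeds in four stages: it first replaces the correlated posterior expectations $Z_k$ by independent copies $Z_k^{ind}$ via an iterated first-order-stochastic-dominance argument (Lemma \ref{prop:Ind} and its sublemmas), then replaces these by i.i.d.\ copies $Z_k^{iid}$ with the smallest common sample size via a mean-preserving-spread/Jensen step, then invokes Corollary 2.1 of \citet{chernozhukov2013gaussian} to couple the maximum with a maximum of i.i.d.\ Gaussians, and finally applies \citet{Berman1964}'s bound $\sigma\sqrt{\log K_n}$; a separate Proposition \ref{lem_abs} is then needed to upgrade the one-sided bound on $\E[\max_k Z_k-\mu]$ to the two-sided bound on $\E[\max_k|Z_k-\mu|]$. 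You sidestep all of this: since each $Z_H$ is individually an average of at least $2^{n(1-\alpha_h)-s}$ i.i.d.\ bounded variables under Assumption \ref{assp:IID}, Hoeffding applies to each one, and the union bound over $K_n\le 2^{n-s}$ sets is indifferent to the dependence among them---the doubly-exponential decay of the individual tails swamps the singly-exponential number of sets. Your route also delivers the two-sided (absolute-value) control in one step and yields the same $\sqrt{\log K_n/2^{n(1-\alpha_h)-s}}$ rate, so nothing quantitative is lost for this theorem. What the paper's heavier machinery buys is reusable structure elsewhere: the stochastic-dominance comparison with $Z_k^{ind}$ is invoked again in the proof of Proposition \ref{lem_abs} (to bound $\P(\max_k Z_k<0)$) and the whole chain is recycled in Proposition \ref{prop:Effective}, and the Gaussian coupling gives distributional (not just first-moment) information about the maximum. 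Two small points of hygiene: the constant $\mu$ should be read as the mean of the distribution $F$ associated with the fixed realization $\bold{x}_{\mathcal S}$ (i.e., $\E[Y\mid \bold{x}_{\mathcal S}]$), which is what both you and the paper implicitly condition on throughout; and the degenerate case $\overline{y}=0$ should be dispatched separately, as the paper does in a footnote.
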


Sections \ref{app:OutlineProof}-\ref{app:ConcludeProofMain} prove Theorem \ref{thm:MainIID}, and Section \ref{app:IIDtoExchange} shows that Theorem \ref{thm:MainIID} implies Theorem \ref{thm:MainGeneral}.

\subsection{Outline for Proof of Theorem \ref{thm:MainIID}} \label{app:OutlineProof}

Fix any realization $(x_1, \dots, x_s)$ of the agent's standard covariates. After observing $(x_1, \dots, x_s)$, the evaluator assigns positive probability to the $2^{n-s}$ covariate vectors whose first $s$ entries are equal to $(x_1, \dots, x_s)$. Let these covariate vectors be indexed by $\bold{x}^j$ where $j=1, \dots, 2^{n-s}$, and define
\[Y_j \equiv \mathbb{E}_{P^n}\left[Y \mid (X_1, \dots, X_n) = \bold{x}^j\right]\]
to be the (random) expected type given covariate vector $\bold{x}^j$.
By assumption that the marginal distribution over covariate vectors is uniform, the evaluator's posterior expectation of the agent's type after observing the agent's standard covariates is
\[\widehat{Y}(\varnothing,\bold{x}_n)= \frac{1}{2^{n-s}} \sum_{i=1}^{2^{n-s}} Y_j \equiv Z^n_\varnothing.\]
There are $K_n=\sum_{k=0}^{h_n}\binom{n-s}{k}$ subsets of  $\{s+1,\dots, n\}$ that contain $h_n$ or fewer elements. Enumerate these sets as $H_1, \dots, H_{K_n}$. For each $H_k$, let 
\[S_k = \left\{j \, : \, \bold{x}^j \in C_{H_k}(\bold{x}_n)\right\}\] be the set of indices for those covariate vectors  $\bold{x}^j$ that agree with the agent's covariate vector $\bold{x}_n$ in entries $(1,\dots,s) \cup H_k$ (where $C_{H_k}(\bold{x}_n)$ is as defined in (\ref{notation:C})). After observing the agent's nonstandard covariates in the set $H_k$, the evaluator's posterior expectation about the agent's type  is
 \[\widehat{Y}(H_k,\bold{x}_n) = \frac{\sum_{j\in S_k}Y_j}{|S_k|} \equiv Z_k.\]
Although the distributions of the random variables $Z_k$ vary across $n$, we suppress this dependence in what follows to save on notation.
The remainder of the proof proceeds by first showing that in expectation the possible increase in the evaluator's posterior expectation over the prior expectation $\mu \equiv \mathbb{E}[Y]$ is vanishing. 

\begin{proposition} \label{prop:Z}
$\lim_{n \rightarrow \infty} \E[\max_{1\le k\le K_n} Z_k - \mu]=0.$
\end{proposition}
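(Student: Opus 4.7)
My plan is to execute the outline in Section 3.2: construct an i.i.d.\ dominator, apply the Chernozhukov--Chetverikov--Kato (CCK) Gaussian approximation, and then use a classical Gaussian extreme-value bound. Center at $\mu$ (without loss of generality $\mu=0$). Each $Z_k$ is the mean of $|S_k|=2^{n-s-|H_k|}\ge |S_{\min}|:=2^{n-s-h_n}$ i.i.d.\ copies of a bounded variable $Y_1$ with variance $\sigma^2$. The two obstacles are that different $Z_k,Z_{k'}$ share summands (so the family is dependent) and that the sample sizes $|S_k|$ vary with $|H_k|$.

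The main step is to dominate $\mathbb{E}[\max_k Z_k]$ by an i.i.d.\ maximum. Introduce variables $Z_k^{iid}$, each an average of $|S_{\min}|$ fresh independent copies of $Y_1$, so that the $Z_k^{iid}$ are i.i.d.\ with variance $\sigma^2/|S_{\min}|$. Since the $Y_j$ are i.i.d., $\mathrm{Cov}(Z_k,Z_{k'})=\sigma^2|S_k\cap S_{k'}|/(|S_k||S_{k'}|)\ge 0$, so
\[
\mathbb{E}[(Z_k-Z_{k'})^2]=\sigma^2\bigl(\tfrac{1}{|S_k|}+\tfrac{1}{|S_{k'}|}\bigr)-2\,\mathrm{Cov}(Z_k,Z_{k'})\le \tfrac{2\sigma^2}{|S_{\min}|}=\mathbb{E}[(Z_k^{iid}-Z_{k'}^{iid})^2].
\]
For Gaussian processes this is exactly the Sudakov--Fernique hypothesis and yields the desired comparison of expected maxima. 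To transfer this to the bounded non-Gaussian setting, I invoke \citet{chernozhukov2013gaussian}, which shows that $\max_k Z_k$ and $\max_k Z_k^{iid}$ (suitably normalized) are uniformly close in Kolmogorov distance to the corresponding Gaussian maxima, with approximation error $o(1)$: the $Y_j$ are bounded and $\log K_n\le (n-s)\log 2$ is only polynomial in $n$, which comfortably meets CCK's moment and growth conditions.

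Having reduced to $K_n$ i.i.d.\ $\mathcal{N}(0,\sigma^2/|S_{\min}|)$ variables, the standard bound gives $\mathbb{E}[\max_k Z_k^N]\le \sigma\sqrt{2\log K_n/|S_{\min}|}$, and since $\log K_n\le(n-s)\log 2$ and $|S_{\min}|=2^{n(1-\alpha_h)-s}$, this is $O\bigl(\sqrt{n}\,2^{-n(1-\alpha_h)/2}\bigr)\to 0$ because $\alpha_h<1$. The negative part is handled separately: $\max_k Z_k\ge Z_\varnothing$, so $(\max_k Z_k-\mu)_-\le |Z_\varnothing-\mu|$, and the right side goes to $0$ in $L^1$ because $Z_\varnothing$ is an average of $2^{n-s}$ i.i.d.\ bounded terms. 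Combining these gives $\mathbb{E}[\max_k Z_k-\mu]\to 0$.

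I expect the main obstacle to be the second step: cleanly transferring the Sudakov--Fernique Gaussian comparison to the bounded non-Gaussian setting while keeping the CCK approximation error below the $\sqrt{\log K_n/|S_{\min}|}$ Gaussian rate. If the direct Sudakov--Fernique route is technically awkward, a backup is to bypass the comparison step and apply Hoeffding's inequality directly to each $Z_k$: $P\bigl(\max_k(Z_k-\mu)>t\bigr)\le K_n\exp\bigl(-|S_{\min}|t^2/(2\overline{y}^2)\bigr)$, and integrating this tail yields the same $O(\sqrt{\log K_n/|S_{\min}|})$ rate, which still vanishes since $\alpha_h<1$.
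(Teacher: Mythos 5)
Your backup argument (Hoeffding plus a union bound) is a complete and correct proof of Proposition \ref{prop:Z}, and it is genuinely more elementary than the paper's. Under the i.i.d.\ assumption in force here, each $Z_k-\mu$ is an average of $|S_k|\ge 2^{n-s-h_n}$ independent variables bounded in $[-\overline{y},\overline{y}]$, so $\P(Z_k-\mu>t)\le \exp\left(-2^{n-s-h_n}t^2/(2\overline{y}^2)\right)$ uniformly in $k$; a union bound over the $K_n\le 2^{n-s}$ disclosure sets (valid regardless of the dependence among the $Z_k$) and integration of the tail give $\E[(\max_k Z_k-\mu)_+]=O\left(\sqrt{\log K_n}\cdot 2^{-(n-s-h_n)/2}\right)\to 0$, and your observation that $\max_k Z_k\ge Z_\varnothing$ disposes of the negative part since $\E[Z_\varnothing]=\mu$ and $\Var(Z_\varnothing)\to 0$. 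The paper instead performs the domination at the level of the random variables themselves: an iterative first-order stochastic dominance argument (Sublemma \ref{fosd_g}) decouples the shared summands, a mean-preserving-spread/Jensen step equalizes the sample sizes, and only then is the Gaussian approximation of \citet{chernozhukov2013gaussian} applied to a clean i.i.d.\ family, followed by Berman's bound. What that heavier machinery buys is reuse: the FOSD domination of $(Z_1,\dots,Z_{K_n})$ by independent copies is invoked again in the proof of Proposition \ref{lem_abs} to control $\P(\max_k Z_k<0)$. For Proposition \ref{prop:Z} itself (and for the explicit rate cited later in the proof of Theorem \ref{prop:PreferHA}), your route delivers the same $O(\sqrt{\log K_n/|S_{\min}|})$ bound with far less work.

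Your primary route, by contrast, has exactly the gap you anticipated, and it is worth naming precisely. Sudakov--Fernique is a statement about Gaussian vectors, so your increment comparison only operates after the \emph{correlated, unequal-sample-size} family $(Z_1,\dots,Z_{K_n})$ has been replaced by a Gaussian surrogate with the same covariance---and it is this approximation step, not the comparison, that is the obstacle. To fit the original family into the CCK framework one must write each $Z_k$ as a normalized sum over the common index set of all $2^{n-s}$ hidden covariate vectors, which produces summands that are identically zero off $S_k$ and of magnitude up to order $2^{h_n}$ on $S_k$; the resulting envelope and variance conditions in CCK are not satisfied without substantial reworking (whereas after the paper's reduction to equal-size i.i.d.\ averages the envelope is just $\overline{y}$ and the conditions are immediate). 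This is precisely why the paper does the domination before, not after, the Gaussian approximation. So treat the Hoeffding argument as the proof rather than the fallback.
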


\noindent This is subsequently strengthened to the statement that the expected maximum absolute difference between $ Z_k$ and $\mu$ converges to zero.

\begin{proposition}
$\lim_{n \rightarrow \infty} \E[\max_{1\le k\le K_n} \vert Z_k -\mu \vert]=0.$
\label{lem_abs}
\end{proposition}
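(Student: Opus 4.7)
The plan is to derive Proposition \ref{lem_abs} from Proposition \ref{prop:Z} applied in both sign directions, anchored by the law of large numbers.

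Because $\varnothing$ is among $H_1,\ldots,H_{K_n}$, the grand sample mean $Z_\varnothing = 2^{-(n-s)}\sum_{j=1}^{2^{n-s}} Y_j$ is itself one of the $Z_k$, so $\min_k Z_k \le Z_\varnothing \le \max_k Z_k$. Combining this sandwiching with the triangle inequality gives the pointwise bound
\[
\max_{1\le k\le K_n}|Z_k - \mu| \;\le\; (\max_{k} Z_k - Z_\varnothing) \;+\; (Z_\varnothing - \min_{k} Z_k) \;+\; |Z_\varnothing - \mu|,
\]
with every summand on the right nonnegative. Taking expectations and using $\E[Z_\varnothing]=\mu$, the three terms simplify respectively to $\E[\max_k(Z_k-\mu)]$, $\E[\max_k(\mu - Z_k)]$, and $\E|Z_\varnothing - \mu|$. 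It therefore suffices to show that each of these vanishes as $n\to\infty$.

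Proposition \ref{prop:Z} handles the first directly. For the third, Assumption \ref{assp:IID} makes $Z_\varnothing$ the mean of $2^{n-s}$ i.i.d.\ $[-\bar y,\bar y]$-valued random variables with expectation $\mu$, so $\E|Z_\varnothing - \mu|\le \sqrt{\Var(Y_1)/2^{n-s}} \to 0$. For the second, I reapply Proposition \ref{prop:Z} to the reflected variables $\widetilde Y_j := -Y_j$: these remain i.i.d.\ on $[-\bar y,\bar y]$, have mean $-\mu$, and induce posteriors $\widetilde Z_k = -Z_k$. The proposition then yields $\E[\max_k \widetilde Z_k - (-\mu)] \to 0$, i.e.\ $\E[\max_k(\mu - Z_k)] \to 0$, completing the chain.

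The only technical point requiring care is that Proposition \ref{prop:Z} be genuinely distribution-free on $[-\bar y,\bar y]$, so that its hypothesis is preserved under the reflection $Y_j\mapsto -Y_j$. Its proof routes through an i.i.d.\ coupling and the Gaussian comparison of \citet{chernozhukov2013gaussian}, both of which rely only on boundedness and the sample-mean form of the $Z_k$; these features are invariant under sign change, so the reuse is legitimate and no further work is needed.
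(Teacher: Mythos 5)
Your proof is correct, and it reaches the conclusion by a cleaner decomposition than the paper's. Both arguments share the same engine---Proposition \ref{prop:Z} applied once to the $Y_j$ and once to the reflected variables $-Y_j$ (the paper makes exactly the same reflection move for the $\min$ term, and your closing remark that the proof of Proposition \ref{prop:Z} is invariant under $Y\mapsto -Y$ is the same justification the paper relies on). Where you differ is in how the absolute value is split. The paper writes $\E[\max_k|Z_k-\mu|]\le \E[\max\{\max_k(Z_k-\mu),0\}]+\E[\max\{-\min_k(Z_k-\mu),0\}]$ and then must relate $\E[\max\{\max_k(Z_k-\mu),0\}]$ to $\E[\max_k(Z_k-\mu)]$ via a conditioning identity; this forces an extra probabilistic estimate, namely that $\P(\max_k(Z_k-\mu)<0)\to 0$, which the paper obtains by invoking the first-order stochastic dominance of the independent copies $(Z_k^{ind})$ from Section \ref{proof:Step1} and the product bound $\prod_k \P(Z_k^{ind}<\mu)$. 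You sidestep that entirely by anchoring at $Z_\varnothing$, which is itself one of the $Z_k$ (the empty set is included since $K_n$ counts $j=0$), so that $\max_k Z_k - Z_\varnothing$ and $Z_\varnothing-\min_k Z_k$ are nonnegative pointwise and no truncation at zero is needed; the price is a residual term $\E|Z_\varnothing-\mu|$, which vanishes by the elementary bound $\E|Z_\varnothing-\mu|\le\sqrt{\Var(Y_1)/2^{n-s}}$ under Assumption \ref{assp:IID}. Your route is more elementary at this step (it uses nothing beyond Proposition \ref{prop:Z}, linearity of expectation, and an $L^2$ bound), while the paper's route keeps the statement in a form that does not single out $Z_\varnothing$; both are valid, and yours arguably generalizes more transparently since it does not lean on the independence structure a second time.
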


\noindent And finally we apply the above proposition to demonstrate the  conclusion of the theorem, i.e., that 
\[\lim_{n \rightarrow \infty} V(n) = \lim_{n \rightarrow \infty} \mathbb{E}\left[\max_{1 \leq k \leq K_n} u(Z_k,Y)\right] - \mathbb{E}\left[u\left( Z^n_\varnothing,Y\right) \right] = 0\]
Thus in expectation the possible increase in the agent's payoff also vanishes. We suppress dependence of $V$ on the covariate vector $\mathbf{x}_n$ in what follows, writing simply $V(n)$.

\subsection{Proof of Proposition \ref{prop:Z}}

\emph{Statement of the proposition: $\lim_{n \rightarrow \infty} \E[\max_{1\le k\le K_n} Z_k-\mu]=0.$}
\bigskip

\noindent The quantity $\E[\max_{1\le k\le K_n} Z_k]$ is the expected first-order statistic of a sequence of non-i.i.d.\ variables $Z_1, \dots, Z_{K_n}$. The proof is organized as follows. In Sections \ref{proof:Step1} and \ref{proof:Step2}, we define  i.i.d.\ variables $Z^{iid}_k$ with the property that
\begin{equation} 
\mathbb{E}\left[\max\{Z_1, \dots, Z_{K_n}\}\right] \leq  \mathbb{E}\left[\max\{Z^{iid}_1, \dots, Z^{iid}_{K_n}\}\right].
\end{equation}
In Sections \ref{proof:Step3} and \ref{proof:Step4}, we show that the RHS of the above display converges to $\mu$ as $n$ grows large.

\subsubsection{Replacing $Z_k$'s with independent variables $Z_k^{ind}$} \label{proof:Step1}

In general, disclosures $k$ and $k'$ may lead to  posterior expectations $Z_k$ and $Z_{k'}$ that are correlated due to the presence of the same $Y_i$'s across the different sample averages. We first show that replacing these $Z_k$'s with properly defined independent random variables weakly increases the value of context.

\begin{definition}
    For each $1\leq k\leq K_n$  define
 \begin{equation} \label{def:Zind}
 Z_k^{ind}=\frac{\sum_{j=1}^{\vert S_k \vert} Y_j^k}{|S_k|}
 \end{equation}
where $Y_j^k\sim_{iid} F$, so that each $Z_k^{ind}$ has the same distribution as $Z_k$, but the vector $(Z_1^{ind}, \dots, Z_K^{ind})$ is mutually independent. 
\end{definition}
\begin{lemma} Let 
\[V_n \equiv \E[\maxx{Z_1,...,Z_{K_n}}]\]
and
\[V^{ind}_n \equiv \E[\maxx{Z_1^{ind},...,Z_{K_n}^{ind}}].\]
Then $V_n \leq V^{ind}_n$ for all $n \in \mathbb{Z}_+$. \label{prop:Ind}
\end{lemma}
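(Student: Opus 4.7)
The plan is to identify the positive dependence structure of $(Z_1, \dots, Z_{K_n})$ and invoke a classical association inequality, since the correlations among the $Z_k$'s arise solely from sharing common summands $Y_j$.

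First, I would note that under Assumption \ref{assp:IID}, the random variables $Y_1, \dots, Y_{2^{n-s}}$ are i.i.d.\ with common distribution $F$. Each $Z_k = |S_k|^{-1} \sum_{j \in S_k} Y_j$ is an affine, coordinatewise non-decreasing function of the vector $(Y_1,\dots,Y_{2^{n-s}})$. By the classical theorem of Esary--Proschan--Walkup, non-decreasing functions of independent random variables are associated, so $(Z_1,\dots,Z_{K_n})$ is a vector of associated random variables.

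Next, I would apply the defining inequality for associated random variables: for every $t\in\mathbb{R}$,
\[
P(Z_1\le t,\dots,Z_{K_n}\le t)\ \ge\ \prod_{k=1}^{K_n} P(Z_k\le t).
\]
Since by construction each $Z_k^{ind}$ has the same marginal distribution as $Z_k$ and the $Z_k^{ind}$ are mutually independent, the right-hand side equals $P(\max_k Z_k^{ind}\le t)$. Therefore $\max_k Z_k$ is stochastically dominated by $\max_k Z_k^{ind}$. Because all variables involved are bounded in $[-\overline{y},\overline{y}]$, this stochastic dominance transfers to expectations, giving $V_n\le V_n^{ind}$.

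The main obstacle is not computational but conceptual: one must recognize that association (rather than, e.g., a Gaussian comparison tool such as Slepian's lemma, which is unavailable because $F$ is arbitrary) is the right mechanism here. Once one observes that the entire dependence between the $Z_k$'s is generated by co-monotone summation of independent inputs, the Esary--Proschan--Walkup framework applies essentially for free, and the inequality reduces to the standard FKG-type product bound on joint lower tails. No further control on $F$ or on the overlap structure of the sets $S_k$ is needed.
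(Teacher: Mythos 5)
Your proof is correct, and it takes a genuinely different route from the paper's. You observe that each $Z_k=|S_k|^{-1}\sum_{j\in S_k}Y_j$ is a coordinatewise nondecreasing function of the independent inputs $(Y_1,\dots,Y_{2^{n-s}})$, so by Esary--Proschan--Walkup the vector $(Z_1,\dots,Z_{K_n})$ is associated; the lower-orthant inequality for associated variables then gives
\[
\P\Big(\max_{1\le k\le K_n} Z_k\le t\Big)\;\ge\;\prod_{k=1}^{K_n}\P(Z_k\le t)\;=\;\P\Big(\max_{1\le k\le K_n} Z_k^{ind}\le t\Big)
\]
for every $t$, hence $\max_k Z_k^{ind}$ first-order stochastically dominates $\max_k Z_k$, and boundedness of the $Y_j$'s converts this into the inequality of expectations. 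One small wording point: that orthant inequality is a standard \emph{consequence} of association (proved by induction using the covariance inequality for the nonincreasing indicators $\mathbbm{1}\{Z_k\le t\}$), not its definition, but this is classical and not a gap. The paper instead proves the same stochastic dominance from scratch by an iterative replacement argument: it isolates one shared summand $Y_i$ at a time, writes the maximum in the form $\max\{\max_{k\in I}\{X_k+|S_k|^{-1}Y_i\},W\}$, and shows via a conditioning argument (Sublemmas \ref{fosd_g} and \ref{fosd_k}) that substituting independent copies of $Y_i$ across the different sample averages weakly increases the maximum in the FOSD order, then iterates over all $Y_i$. Your approach is substantially shorter because it imports an off-the-shelf dependence tool tailored to exactly this situation (positive dependence generated by co-monotone aggregation of independent inputs); the paper's argument is self-contained and its intermediate sublemma is reused elsewhere (e.g., in the proofs of Proposition \ref{lem_abs} and Proposition \ref{prop:Effective}). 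Since both arguments establish the same FOSD statement for the maximum, which is what those later steps actually invoke, your proof would substitute without loss.
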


\begin{proof}
Throughout  we use  $X\succeq Y$  to mean that the distribution of $X$ first-order stochastically dominates the distribution of $Y$.  

\begin{sublemma}
	Let $X_1$,..., $X_{Q}$,$W$ be  a sequence of real-valued random variables (not necessarily i.i.d.). Let $a_1>a_2>...>a_{Q-1}>a_{Q}>0$ be  a sequence of positive constants. Further, let $Y_1,...,Y_{Q}$ be i.i.d.\ random variables, independent of $(X_1, \dots, X_Q,W)$. Define
\begin{align*}
 M_C&=\max_{i\in\{1,...,Q\}}\{X_i+a_iY_1\}\\
 M_I&=\max_{i\in\{1,...,Q\}}\{X_i+a_iY_i\}
 \end{align*}
 Then $M_I \succeq M_C$ and $\max\{M_I,W\} \succeq \max\{M_C,W\}$.
 \label{fosd_g}
\end{sublemma}

\begin{proof}
	
For $q\in\{1,...,Q\}$ define:	
 \begin{align*}
     M_C^{q} & =\max\left\{ \max_{i\in\{1,...,q-1\}}\{X_i+a_iY_1\}, X_q + a_q Y_1 \right\} \\
 \widetilde{M}_C^{q} & =\max \left\{\max_{i\in\{1,...,q-1\}}\{X_i+a_iY_1\},X_q+a_qY_q\right\}
 \end{align*}
 so that $M_C^q$ is the maximum of the first $q$ terms in $M_C$, and $\widetilde{M}_C^q$ replaces $Y_1$ in the $q$-th term of $M_C^q$ with $Y_q$. We first demonstrate an analogue of the desired conclusions for $M_C^q$ and $\widetilde{M}_C^q$.

\begin{sublemma}
$\widetilde{M}_C^q \succeq M_C^q$ and 
$\max\{\widetilde{M}_C^q,W\} \succeq \max\{M_C^q,W\}$.
\label{fosd_k}
\end{sublemma}

\begin{proof}

Without loss of generality set $a_q=1$. We'll first  show that $\widetilde{M}_C^q\succeq M_C^{q}$. To establish first-order stochastic dominance, we need to show that for all $t\in \R$ it holds that 
 \[\P(M_C^{q}\le t)-\P(\widetilde{M}_C^q\le t)\ge 0\]
 For each $i\in\{1,...,q-1\}$ define the event  \[B_i:=\{X_q+Y_1> X_i+a_iY_1\}\equiv \left\{Y_1<\frac{1}{a_i-1}(X_q-X_i)\right\}.\]
 Further let 
 \[B=\bigcap_{i=1}^q B_i=\left\{Y_1<\min_{i\in\{1,...,q-1\}}\frac{1}{a_i-1}(X_q-X_i)\right\}\]
 be the event that $X_q + Y_1$ achieves the maximum among $\{X_i+a_iY_1\}_{i=1}^q$. We'll show that the FOSD rankings in Sublemma \ref{fosd_k} hold both on event $B$ and also on its complement $B^c$.
 
 Define
 \[\widetilde{B}:=\left(Y_{q}<\min_{i\in\{1,...,q-1\}}\left\{\frac{1}{a_i-1}(X_q-X_i)\right\}\right)\]
 to be the event that $X_q + Y_q$ achieves the maximum among $\{X_i + a_i Y_q\}_{i=1}^q$.
 Then
 \begin{align*} \widetilde{M}_C^q|B& \succeq (X_q+Y_{q})|B \\
 &\stackrel{d}{=}X_q|B+Y_{q} && \text{since }Y_q \indep (X_1, \dots, X_q, Y_1)\\
 & \succeq X_q|B+Y_{q}|\Tilde{B} && \text{since } Y_q \succeq Y_q \mid \widetilde{B}\\
 & \stackrel{d}{=} X_q|B+Y_1|B && \text{since } Y_1 \mid B \stackrel{d}{=} Y_q \mid \widetilde{B}\\
 & \stackrel{d}{=} (X_q+Y_1)|B \stackrel{d}{=} M
 _C^q|B
 \end{align*}
 Thus $\widetilde{M}_C^q|B\succeq M_C^q|B$.

Now consider the event $B^c$,  on which $X_q+Y_1$ does not achieve the maximum among $\{X_i+a_iY_1\}_{i=1}^q$. Then either $X_1 + Y_q \leq \max\{X_i + a_i Y_1\}_{i=1}^{q-1}$, in which case $\widetilde{M}_C^q = M_C^q$, or $X_1 + Y_q > \max\{X_i + a_i Y_1\}_{i=1}^{q-1}$, in which case $\widetilde{M}_C^q > M_C^q$. So
 \[\widetilde{M}_C^q|B^c\succeq\maxx{X_1+a_1Y_1,...,X_{q-1}+a_{q-1}Y_1}|B^c\stackrel{d}{=} M_C^q|B^c.\]
and hence $\widetilde{M}_C^q|B^c\succeq M_C^q|B^c$.

 Now we show that $\max\{\widetilde{M}_C^q,W\} \succeq \max\{M_C^q,W\}$. For any realization $w$ of $W$, let $X^w_i$ denote the conditional random variable $X_i|W=w$. Define $M_C^{q,w}$ and $\widetilde{M}_C^{q,w}$ identically to $M_C^q$ and $\widetilde{M}_C^q$, replacing each $X_i$ by $X_i^w$. Then by 
 independence of $W$ and $(Y_1, \dots, Y_q)$, the distribution of $\max\{M_C^{q,w},w\}$ is identical to that of $\max\{M_C^q,W\}|W=w$, and the distribution of $\max\{\widetilde{M}_C^{q,w},w\}$ is identical to that of $\max\{\widetilde{M}_C^q,W\}|(W=w)$.
 
Applying the first part of this sublemma to $M^{q,w}_C$ and $\widetilde{M}_C^{q,w}$, we conclude that $M^{q,w}_I \succeq M_C^{q,w}$. Since $\max\{.,w\}$ is an increasing convex function, it preserves the first-order stochastic dominance relation and hence $\max\{\widetilde{M}_C^q,W\}|(W=w) \succeq \max\{M^q_C,W\}|(W=w)$. This argument holds pointwise for all $w$ so $\max\{\widetilde{M}_C^q,W\} \succeq \max\{M_C^q,W\}$ as desired. 
\end{proof}

We now complete the proof that $\max\{M_C,W\}\succeq \max\{M_I,W\}$. From similar (omitted) arguments it follows that $M_I \succeq M_C$. For each $ q\in \{ 1, \dots, Q-1\}$ define
\[\widehat{M}_C^q = \max \left\{ \max\{ X_i + a_i Y_1 \}_{i=1}^q, \max\{ X_i + a_i Y_i \}_{i=q+1}^Q, W\right\}\]
observing that $\max\{M_I,W\} = \widehat{M}_C^1$ and that $\widehat{M}_C^Q \succeq \max\{M_C,W\}$ (by Sublemma \ref{fosd_k}). Moreover, for each $q \in \{1, \dots, Q-1\}$,
\begin{align*}
    \widehat{M}_C^q  & = \max \left\{ M_C^q, W^q \right\} \\
\widehat{M}_C^{q-1} & =  \max\{ \widetilde{M}_C^q,  W^q \}
\end{align*} where 
$W^q = \max\left\{\max\{X_i + a_i Y_i\}_{i=q}^Q,W\right\}$
is independent of $(Y_1, \dots, Y_{q-1})$. 
So applying Sublemma \ref{fosd_k}, $\widehat{M}_C^{q-1} \succeq \widehat{M}_C^q$ as desired.  \end{proof}

Finally, we use Sublemma \ref{fosd_g} to establish Lemma \ref{prop:Ind}, i.e.,  the expected value of context weakly increases if we make the $Y$'s within different disclosures independent. We will prove this iteratively.  For arbitrary $n \in \mathbb{N}$, define the random variable
\[M = \max \{ Z_1, \dots, Z_{K_n}\} = \max \left\{ \frac{\sum_{j \in S_1} Y_j}{\vert S_1 \vert}, \dots, \frac{\sum_{j \in S_{K_n}} Y_j}{\vert S_{K_n} \vert} \right\}.\]
Fix any $Y_i$. We will show that replacing  $Y_i$ across different sample averages with independent copies of this random variable leads to a FOSD increase in the distribution of $M$. 

Let $I = \{ k : i \in S_k\}$ be the set of indices of sample averages which contain $Y_i$. Then we can rewrite the previous display as
\begin{align*}
\max\left\{ \, \max_{k \in I} \frac{\sum_{j \in S_k} Y_k}{\vert S_k \vert}, \, \max_{k \notin I}
\frac{\sum_{j \in S_k} Y_k}{\vert S_k \vert}\right\}
\end{align*}
or
\begin{equation} \label{eq:Correlated}\max\left\{ \max_{k \in I} \left\{ X_k + \frac{1}{\vert S_k \vert} Y_i \right\}, W\right\}
\end{equation}
where  $X_k \equiv \frac{1}{\vert S_k \vert} \sum_{j \in S_k, j \neq i} Y_j$  for each $k \in I$, and $W \equiv \max_{k \notin I}
\frac{\sum_{j \in S_k} Y_k}{\vert S_k \vert}$. Because $(Y_1, \dots, Y_{K_n})$ are mutually independent, $Y_i$ is independent of each $X_k$ and $W$. So applying Lemma \ref{fosd_g}, the random variable in (\ref{eq:Correlated}) has a distribution that is first-order stochastically dominated by the distribution of 
\[\max\left\{ \max_{k \in I} \left\{ X_k + \frac{1}{\vert S_k \vert} Y_i^k \right\}, W\right\}\]
as desired. Since $Y_i$ is arbitrary, this concludes the proof.

\end{proof}

\subsubsection{Replacing $Z_k^{ind}$ with i.i.d.\ Variables $Z_k^{iid}$} \label{proof:Step2}

The variables $Z_1^{ind}, \dots, Z_{K_n}^{ind}$ are sample averages of unequal sizes ranging between $2^{n-s-h_n}$ and $2^{n-s}$ elements. We  next show that replacing each of these variables with a sample average of $2^{n-s-h_n}$ elements (the smallest size) weakly increases the value of context.  

\begin{definition} For each $1\leq k \leq K_n$ define
 \begin{equation} \label{def:Ziid} Z_k^{iid}=\frac{\sum_{j=1}^{2^{n-s- h_n}} Y_j^k}{2^{n-s-h_n}}
 \end{equation}
to be the analogue of $Z_k^{ind}$ with $2^{n-s-h_n}$ elements instead of $\vert S_k \vert \geq 2^{n-s-h_n}$, so that the variables $Z_1^{iid}, \dots, Z_{K_n}^{iid}$ are iid. 
\end{definition}
\begin{lemma} Let 
\[V^{iid}_n \equiv \E\left[\maxx{Z^{iid}_1, \dots, Z^{iid}_{K_n}}\right].\]
Then $V^{ind}_n \le V^{iid}_n$ for all $n \in \mathbb{Z}_+$.
\end{lemma}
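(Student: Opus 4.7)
The plan is to couple the collections $(Z_k^{ind})_{k=1}^{K_n}$ and $(Z_k^{iid})_{k=1}^{K_n}$ in such a way that $Z_k^{iid}$ is conditionally a mean-preserving spread of $Z_k^{ind}$, and then apply Jensen's inequality to the convex function $\mathbf{z} \mapsto \max_k z_k$.

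First, I would set up the coupling. For each $k \in \{1,\dots,K_n\}$, let $(Y_j^k)_{j=1}^{|S_k|}$ be iid draws from $F$, with the sequences taken mutually independent across distinct $k$. Define $Z_k^{ind} = \frac{1}{|S_k|}\sum_{j=1}^{|S_k|} Y_j^k$, matching its original definition (\ref{def:Zind}). Then define $\widetilde{Z}_k^{iid} = \frac{1}{m_n}\sum_{j=1}^{m_n} Y_j^k$ where $m_n = 2^{n-s-h_n}$, using the first $m_n$ terms of the same sequence (which is legitimate since $|S_k|\ge m_n$). Because the underlying sequences are independent across $k$, the joint distribution of $(\widetilde{Z}_k^{iid})_{k=1}^{K_n}$ is iid across $k$ and matches the marginal of each $Z_k^{iid}$, so $(\widetilde{Z}_k^{iid})_{k=1}^{K_n} \stackrel{d}{=} (Z_k^{iid})_{k=1}^{K_n}$.

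Second, I would verify the martingale property. By exchangeability of the iid vector $(Y_j^k)_{j=1}^{|S_k|}$, $\mathbb{E}[Y_j^k \mid Z_k^{ind}] = Z_k^{ind}$ for every $j$, so averaging over $j \le m_n$ gives $\mathbb{E}[\widetilde{Z}_k^{iid} \mid Z_k^{ind}] = Z_k^{ind}$. Since the sequences are independent across $k$, conditioning on the full vector $(Z_\ell^{ind})_{\ell=1}^{K_n}$ adds no information beyond $Z_k^{ind}$ itself, hence $\mathbb{E}\bigl[\widetilde{Z}_k^{iid} \,\big|\, (Z_\ell^{ind})_{\ell=1}^{K_n}\bigr] = Z_k^{ind}$ for every $k$.

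Third, I would apply Jensen's inequality to the convex function $\mathbf{z}\mapsto \max_k z_k$ on $\mathbb{R}^{K_n}$, conditional on $(Z_\ell^{ind})_{\ell=1}^{K_n}$:
\[
\mathbb{E}\Bigl[\max_{1\le k\le K_n} \widetilde{Z}_k^{iid} \,\Big|\, (Z_\ell^{ind})_{\ell=1}^{K_n}\Bigr] \;\ge\; \max_{1\le k\le K_n} \mathbb{E}\bigl[\widetilde{Z}_k^{iid} \,\big|\, (Z_\ell^{ind})_{\ell=1}^{K_n}\bigr] \;=\; \max_{1\le k\le K_n} Z_k^{ind}.
\]
Taking unconditional expectations and using the distributional equality from Step 1 yields $V_n^{iid} = \mathbb{E}[\max_k \widetilde{Z}_k^{iid}] \ge \mathbb{E}[\max_k Z_k^{ind}] = V_n^{ind}$, which is the claim.

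There is no real obstacle here: the argument is essentially a textbook mean-preserving-spread plus convex-order comparison. The only subtlety is to be careful that the coupling preserves the correct marginals on both sides (which it does because the sequences are independent across $k$, so the dependence within each $k$ between $Z_k^{ind}$ and $\widetilde{Z}_k^{iid}$ does not spill over into cross-$k$ dependence). Unlike the proof of Lemma \ref{prop:Ind}, no iterative swap along a FOSD chain is needed because here all the variables involved are already mutually independent across $k$.
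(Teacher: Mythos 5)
Your proof is correct and follows essentially the same route as the paper's: the paper's own argument establishes the identity $\mathbb{E}[\overline{Y}_{n'}\mid\overline{Y}_n]=\overline{Y}_n$ (so the shorter average is a mean-preserving spread of the longer one) and then invokes independence across $k$ together with convexity of the maximum via Jensen's inequality. You have merely made explicit the coupling and the conditional-Jensen step that the paper leaves implicit, which is a faithful and slightly more careful writeup of the identical idea.
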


\begin{proof}
We use the following result.

\begin{sublemma} Suppose $Y_1, Y_2, \dots, Y_n$ are independent and identically distributed random variables, and define $\overline{Y}_n = \frac1n \sum_{i=1}^n Y_i$ to be their sample average. Let $n' < n$ and define $\overline{Y}_{n'} = \frac{1}{n'} \sum_{i=1}^{n'} Y_i$. Then the distribution of $\overline{Y}_{n'}$ is a mean preserving spread of the distribution of $\overline{Y}_n$.
\end{sublemma}

\begin{proof} First observe that $\mathbb{E}[Y_j \mid \overline{Y}_n] = \overline{Y}_n$ for any $j = 1,\dots,n$, since 
\[\overline{Y}_n = \mathbb{E}[\overline{Y}_n \mid \overline{Y}_n] = \frac1n \sum_{i=1}^n \mathbb{E}[Y_i \mid \overline{Y}_n] = \mathbb{E}[Y_j \mid \overline{Y}_n]\]
where the final equality follows by assumption that the $Y_i$'s are iid. Then
\begin{align*}
\mathbb{E}[\overline{Y}_{n'} \mid \overline{Y}_n] &= \frac{1}{n'} \sum_{i=1}^{n'} \mathbb{E}[Y_i \mid \overline{Y}_n] = \frac{1}{n'} \sum_{i=1}^{n'} \overline{Y}_n = \overline{Y}_n
\end{align*}
and the distribution of $\overline{Y}_{n'}$ is a mean-preserving spread of the distribution of $\overline{Y}_n$ as desired. 
\end{proof}

This lemma implies that each $Z_k^{iid}$ second-order stochastically dominates $Z_k^{ind}$ (since $\vert S_k \vert \geq 2^{n-s - h_n}$ for all $k$). The desired result then follows by Jensen's inequality, since the entries of $(Z_1^{ind}, \dots, Z_K^{ind})$ are (by construction) independent and the maximum is a convex function.
\end{proof}

\subsubsection{ Asymptotic Normality} 
\label{proof:Step3}

\begin{lemma} Let 
\[V^N_n\equiv \E\left[\maxx{Z_1^N, \dots, Z_{K_n}^N}\right]\]
where $Z^N_k\sim \mathcal{N}\left(\mu,\frac{1}{2^{n-s-h_n}}\right)$. Then 
	$\lim_{n \rightarrow \infty} \vert V^{iid}_n - V^N_n \vert = 0.$
\end{lemma}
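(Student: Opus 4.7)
The plan is to invoke the Gaussian approximation theorem of \citet{chernozhukov2013gaussian} cited earlier in the excerpt, to bound the Kolmogorov distance between $\max_k Z_k^{iid}$ and $\max_k Z_k^N$, and then convert that distributional bound into convergence of expectations. Set $M := 2^{n-s-h_n}$, so each $Z_k^{iid} = \frac{1}{M}\sum_{j=1}^M Y_j^k$ is a sample average of $M$ i.i.d.\ bounded draws from $F$, and each $Z_k^N$ is Gaussian with the matching variance. Let
\[\rho_n := \sup_{t\in\R}\left\vert \P\!\left(\max_{1\le k\le K_n}Z_k^{iid} \le t\right)-\P\!\left(\max_{1\le k\le K_n}Z_k^N \le t\right)\right\vert.\]

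First, view $(Z_1^{iid},\dots,Z_{K_n}^{iid})$ as $\tfrac{1}{M}\sum_{j=1}^M W_j$, where $W_j := (Y_j^1,\dots,Y_j^{K_n}) \in \R^{K_n}$ has mutually independent coordinates (the $Z_k^{iid}$ are constructed from disjoint samples) and is supported in $[-\overline{y},\overline{y}]^{K_n}$. This matches the setting handled by CCK: the maximum coordinate of a normalized sum of i.i.d.\ bounded high-dimensional vectors. The CCK bound, of order $[(\log K_n)^7/M]^{1/6}$ up to constants depending only on the envelope $\overline{y}$, vanishes because $\log K_n \le (n-s)\log 2$ grows polynomially in $n$ while $M = 2^{(1-\alpha_h)(n-s)}$ grows exponentially (using $\alpha_h < \alpha_b \le 1$). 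Hence $\rho_n \to 0$.

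Second, to pass from $\rho_n \to 0$ to $|V_n^{iid} - V_n^N| \to 0$, use the layer-cake identity
\[V_n^{iid} - V_n^N = \int_{-\infty}^{\infty}\!\Big[\P\!\left(\max_k Z_k^N > t\right) - \P\!\left(\max_k Z_k^{iid} > t\right)\Big]\,dt\]
and split at $\pm\overline{y}$. On $[-\overline{y},\overline{y}]$ the integrand is bounded by $\rho_n$, contributing at most $2\overline{y}\,\rho_n \to 0$. Outside $[-\overline{y},\overline{y}]$, the $Z_k^{iid}$ term vanishes because $|Z_k^{iid}| \le \overline{y}$ deterministically, and the remaining Gaussian tail is controlled via the union bound:
\[\int_{\overline{y}}^{\infty}\!\P\!\left(\max_k Z_k^N > t\right)\,dt \le K_n \int_{\overline{y}}^{\infty}\P(Z_1^N > t)\,dt,\]
which decays super-exponentially in $n$ since $Z_1^N$ has variance $\asymp 1/M$, easily dominating $K_n \le 2^{n-s}$; the analogous bound handles $(-\infty,-\overline{y}]$.

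The main obstacle is matching the precise hypotheses of the CCK theorem to this setting---specifically, checking that the moment/envelope conditions on the $W_j$'s translate into the quantitative Kolmogorov bound in the high-dimensional regime where both $K_n$ and $M$ are exponential in $n$, so that $(\log K_n)^7/M \to 0$ with the correct constants. Once the CCK bound is in hand, the layer-cake truncation is routine.
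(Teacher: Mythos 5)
Your proposal is correct and follows essentially the same route as the paper: both apply the Chernozhukov--Chetverikov--Kato Gaussian approximation for maxima to bound the Kolmogorov distance between $\max_k Z_k^{iid}$ and $\max_k Z_k^N$, verifying that $(\log K_n)^7 / 2^{n-s-h_n} \rightarrow 0$ because $K_n \le 2^{n-s}$ while the sample size $2^{n-s-h_n}$ grows exponentially, and then convert the distributional convergence into convergence of expectations. The only (immaterial) difference is in that last conversion step: the paper invokes uniform integrability of the bounded sequence $\max_k Z_k^{iid}$, whereas you carry out an explicit layer-cake computation with a union bound on the Gaussian tails outside $[-\overline{y},\overline{y}]$ --- if anything, your version is slightly more explicit about controlling the unbounded Gaussian maxima.
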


\begin{proof}
Without loss of generality, let $Var(Y_j^k)=1$.\footnote{If $Var(Y_j^k)=0$, the statement of Theorem \ref{thm:Main} holds trivially.} First observe that
\[\sqrt{2^{n-s-h_n}} \cdot V^{iid}_n = \mathbb{E}\left[\maxx{\widetilde{Z}^{iid}_1, \dots, \widetilde{Z}^{iid}_{K_n}}\right]\]
where each
\[\widetilde{Z}^{iid}_k = \frac{1}{\sqrt{2^{n-s-h_n}}} \sum_{i=1}^{2^{n-s-h_n}} Y^k_i.\]
Similarly we can write
\[\sqrt{2^{n-s-h_n}} \cdot V^N_n = \mathbb{E}\left[\maxx{\widetilde{Z}^N_1, \dots, \widetilde{Z}^N_{K_n}}\right]\]
where each
\[\widetilde{Z}^N \sim_{iid} \mathcal{N}\left(\mu, 1\right).\]
 When the assumptions for 
Corollary 2.1 from \cite{chernozhukov2013gaussian} are met (to be verified momentarily), we can conclude that
\[\rho\left(\maxx{\widetilde{Z}^{iid}_1, \dots, \widetilde{Z}^{iid}_{K_n}},\maxx{\widetilde{Z}^N_1, \dots, \widetilde{Z}^N_{K_n}}\right) \rightarrow 0
\]
where $\rho$ denotes Kolmogorov distance. Thus also 
\begin{equation} \label{conv:KD} \rho(M^{iid}_n, M^N_n) \rightarrow 0
\end{equation}
where
\begin{align*}
    M^{iid}_n & = \frac{1}{\sqrt{2^{n-s-h_n}} } \maxx{\widetilde{Z}^{iid}_1, \dots, \widetilde{Z}^{iid}_{K_n}} \\
    M^N_n & = \frac{1}{\sqrt{2^{n-s-h_n}} } \maxx{\widetilde{Z}^N_1, \dots, \widetilde{Z}^N_{K_n}}
\end{align*}

By assumption, each $Y_i^k$ is supported on $\left[-\overline{y},\overline{y}\right]$ for some finite $\overline{y}$. This implies  $\vert M_n^{iid}\vert \leq \overline{y}$ for all $n$, so the sequence $(M_n^{iid})_n$ is uniformly integrable. The convergence in (\ref{conv:KD}) thus implies
\[\lim_{n \rightarrow \infty} \left\vert \mathbb{E}\left[M_n^{iid}\right] - \mathbb{E}\left[M_n^N\right] \right\vert = \lim_{n \rightarrow \infty} \vert V^{iid}_n - V^N_n \vert = 0 \]
as desired. 
 
It remains to verify that the conditions of Corollary 2.1 from \cite{chernozhukov2013gaussian} are met. This follows from the assumption that $Y_j^k$'s are uniformly bounded, and the observation that
\[\frac{\log(K_n\cdot2^{n-s-h_n})^7}{2^{(1-c)(n-s-h_n)}}\xrightarrow{n\rightarrow\infty}0\]
for any $c\in (0,1)$, since $K_n = \sum_{j=0}^{h_n} {n-s \choose j} \le 2^{n-s}$ by the Binomial Theorem and $\alpha_h<1$.
\end{proof}

\subsubsection{Upper Bound for Expected Maximum of Gaussians} \label{proof:Step4} Finally by \citet{Berman1964}, which provides an upper bound for the expected maximum of independent Gaussian random variables
\[V^N_n\le \frac{1}{2^{n-s-h_n}} \cdot 2\sqrt{\log(K_n)}\le\frac{1}{2^{n(1-\alpha_h)-s}} \cdot 2\sqrt{n}\]
where the final expression converges to zero as $n \rightarrow \infty$ by assumption that $\alpha_h < 1$. Since clearly also $\lim_{n \rightarrow \infty} \mathbb{E}[\max_{1 \leq k \leq K_n} Z_k - \mu] \geq 0$, this concludes the proof of Proposition \ref{prop:Z}.

\subsection{Proof of Proposition \ref{lem_abs}}

\emph{Statement of the proposition: $\lim_{n \rightarrow \infty} \E[\max_{1\le k\le K_n} \vert Z_k-\mu \vert ]=0.$}
\bigskip

In an abuse of notation, let $Z_k \equiv Z_k - \mu$ denote de-meaned sample average. By rewriting the max within the expectation we obtain
	\begin{align*}\E\left[\max_{1\le k\le K_n}|Z_k|\right] & =\E\left[\max\left\{\max_{1\le k\le K_n}Z_k,-\min_{1\le k\le K_n}Z_k\right\}\right] \\
 & \le \E\left[\max\left\{\max_{1\le k\le K_n}\left\{Z_k\right\},0\right\}\right]+\E\left[\max\left\{-\min_{1\le k\le K_n}\left\{Z_k \right\},0\right\}\right] \end{align*}
 We will show that each term of this final expression converges to zero. Observe that
 \begin{equation} \label{eq:FirstTerm}
 \E\left[\max\left\{\max_{1\le k\le K_n}\left\{Z_k\right\},0\right\}\right]  = \mathbb{P}\left(\max_{1\le k\le K_n}Z_k \geq 0\right) \cdot \mathbb{E}\left[ \max_{1\le k\le K_n}Z_k \mid \max_{1\le k\le K_n}Z_k \geq 0 \right] 
 \end{equation}
Moreover,
\begin{align*}
 \mathbb{E}\left[ \max_{1 \leq k \leq K_n } Z_k \right] & = \mathbb{P}\left(\max_{1\le k\le K_n}Z_k \geq 0\right) \cdot \mathbb{E}\left[ \max_{1\le k\le K_n}Z_k \mid \max_{1\le k\le K_n}Z_k \geq 0 \right]  \\
 & \quad \quad +  \mathbb{P}\left(\max_{1\le k\le K_n}Z_k < 0\right) \cdot \mathbb{E}\left[ \max_{1\le k\le K_n}Z_k \mid \max_{1\le k\le K_n}Z_k < 0 \right]
\end{align*}
so 
\begin{align} \label{eq:ThirdTerm} 
 \mathbb{P}\left(\max_{1\le k\le K_n}Z_k \geq 0\right) & \cdot \mathbb{E}\left[ \max_{1\le k\le K_n}Z_k \mid \max_{1\le k\le K_n}Z_k \geq 0 \right]   =  \nonumber \\ 
 & = \mathbb{E}\left[ \max_{1 \leq k \leq K_n } Z_k \right] - \mathbb{P}\left(\max_{1\le k\le K_n}Z_k < 0\right) \cdot \mathbb{E}\left[ \max_{1\le k\le K_n}Z_k \mid \max_{1\le k\le K_n}Z_k < 0 \right]
\end{align}
From Lemma \ref{prop:Z}, \begin{equation} \label{eq:SecondTerm}\lim_{n \rightarrow\infty} \mathbb{E}\left[ \max_{1 \leq k \leq K_n } Z_k \right] = 0.
\end{equation}
Moreover, we showed in Section \ref{proof:Step1} that  the distribution of $(Z^{ind}_1, \dots, Z^{ind}_{K_n})$ first-order-stochastically-dominates that of $(Z_1, \dots, Z_{K_n})$, so
\begin{align*}
 \mathbb{P}\left(\max_{1\le k\le K_n}Z_k < 0\right)  \leq   \mathbb{P}\left(\max_{1\le k\le K_n}Z^{ind}_k < 0\right) \leq \prod_{1 \leq k \leq K_n} \mathbb{P}(Z_k^{ind} < 0)
\end{align*}
which converges to zero as $n$ grows large since each $\mathbb{P}(Z_k^{ind}<0) < 1$. Finally, 
\begin{equation} \label{eq:FifthTerm}\mathbb{E}\left[ \max_{1\le k\le K_n}Z_k \mid \max_{1\le k\le K_n}Z_k < 0 \right]  \in \left[-\overline{Y},  \overline{Y}\right]
\end{equation}
uniformly across $n$. 
Putting together (\ref{eq:FirstTerm}) - (\ref{eq:FifthTerm}) we have that
\[ \lim_{n \rightarrow \infty} \E\left[\max\left\{\max_{1\le k\le K_n}\left\{Z_k\right\},0\right\}\right] = 0\]
as desired. The argument that \[\lim_{n \rightarrow \infty} \E\left[\max\left\{-\min_{1\le k\le K_n}\left\{Z_k \right\},0\right\} \right] = 0\] follows identically, observing that Proposition \ref{prop:Z} is satisfied for $\widetilde{Y} \equiv -Y$, and that 
\[-\min_{1 \leq k \leq K_n} Z_k = \max_{1 \leq k \leq K_n}-\frac{\sum_{j\in S_k}Y_j}{|S_k|} =  \max_{1 \leq k \leq K_n}\frac{\sum_{j\in S_k}\widetilde{Y}_j}{|S_k|}.\]

\subsection{Concluding the proof of Theorem \ref{thm:MainIID}} \label{app:ConcludeProofMain}

	  Recall that $Z^n_\varnothing \equiv \frac{1}{2^{n-s}}\sum_{j=1}^{2^{n-s}} Y_j$ denotes the (random) posterior expectation when the agent chooses not to disclose any nonstandard covariates. Clearly $V(n) \geq 0$ (since the agent can always choose to disclose nothing). Also	
	\begin{align}
		V(n) & = \mathbb{E}\left[ \max_{1 \leq k \leq K_n} u(Z_k,Y)\right] -\E
	\left[u(Z^n_\varnothing,Y)
		\right] \nonumber \\
		& \le\E\left[\max_{1\le k\le K_n}|u(Z_k,Y)-u(Z^n_\varnothing,Y)|\right] 
	\label{eq_b0}
	\end{align}
Each absolute difference $|u(Z_k,Y)-u(Z^n_\varnothing,Y)|$ can be bounded from above using the triangle inequality	
	\begin{equation}
    |u(Z_k,Y)-u(Z^n_\varnothing,Y)|\le |u(Z_k,Y)-u(\mu,Y)|+ |u(\mu,Y)-u(Z^n_\varnothing,Y)| \label{eq_b1}
 \end{equation}
	
	Since $u$ is by assumption Lipschitz continuous in the first argument, there is a constant $B$ such that 
\begin{equation}
		|u(z_k,y)-u(\mu,y)|\le B|z_k-\mu|
	\label{eq_b2}
	\end{equation}
	and
\begin{equation}
		|u(\mu,y)-u(z_\varnothing,y)|\le B|z_\varnothing-\mu|\label{eq_b3}
	\end{equation}
 for any realizations $z_k$ and $z_\varnothing$ of $Z_k$ and $Z^n_\varnothing$. Combining equations \ref{eq_b0}-\ref{eq_b3} 
	 we get 
	 \[V(n)\le B\left(\E\left[\max_{1\le k\le K_n}|Z_{k}-\mu|\right]+\E\left[|Z^n_\varnothing-\mu|\right]\right) \]
	 
	Clearly $\E[Z^n_\varnothing]=\mu$. Moreover, by assumption that each $Y$ is uniformly bounded above and below, the sequence $(Z^n_\varnothing)$ is uniformly integrable. It follows from the Law of Large Numbers that
	 \[\lim_{n \rightarrow \infty} \E[|Z^n_\varnothing-\mu|]=0\]
  Finally, $\lim_{n \rightarrow \infty} \E\left[\max_{1\le k\le K_n}|Z_{k}-\mu|\right]=0$ follows directly from Lemma \ref{lem_abs}. So the RHS of \ref{eq_b1} converges to zero, implying $V(n) \rightarrow 0$ as desired. 

	\subsection{Theorem \ref{thm:MainIID} implies Theorem \ref{thm:MainGeneral}} \label{app:IIDtoExchange}

In an abuse of notation, let $P^n \sim F$ mean that $Y_{\bold{x}_n} \sim_{iid} F$ across all covariate vectors $\bold{x}_n$. We have already shown in Theorem \ref{thm:Main} that  
$\lim_{n \rightarrow \infty} \mathbb{E}_{P^n \sim F} (v_n(P))=0$ for any distribution $F$.  Now suppose instead that Assumption \ref{assp:Exchangeability} is satisfied. By de Finetti's theorem, there exists a set $\Theta$, family of conditional measures $(\pi_\theta)_{\theta \in \Theta}$, and measure $\nu \in \Delta(\Theta)$ such that
\begin{align*}
    V(n,\bold{x}) = \int_\Theta \mathbb{E}_{P^n \sim F_\theta}(v_n(P,\bold{x}_n)) d\nu(\theta)
\end{align*}
where the inner expectation converges to zero for every $\theta$ by Theorem \ref{thm:MainIID}. By assumption that $u$ is Lipschitz continuous on a compact domain, there exist $\underline{u}$ and $\overline{u}$ such that $u(\hat{y},y) \in [\underline{u},\overline{u}]$ for all $(\hat{y},y)$. So $\mathbb{E}_{P^n \sim F_\theta}(v_n(P,\bold{x}_n))$ is pointwise bounded above by $ \overline{u}-\underline{u}$, and the Dominated Convergence Theorem implies $\lim_{n \rightarrow \infty} V(n,\bold{x})=0$, as desired.

\subsection{Proof of Theorem \ref{prop:PreferHA}}
\label{app:GenAccuracy}
Throughout the proof we set $s=0$, $\mu=0$ and $\sigma^2=\E(Y_i^2)=1$ without loss of generality. We'll start by demonstrating that the stated results hold asymptotically (i.e., for large enough $n$) and subsequently prove that the bound in (\ref{eq:N}) is sufficient.

\emph{(a)} As before let $B_n \subseteq \{1, \dots, 2^{n}\}$ index those $2^{n-b_n}$ covariate vectors that agree with the agent's covariate vector for all covariates in $B$. Then the black box evaluator's posterior expectation is the sample average
\[Z_B^n = \frac{1}{2^{n-b_n}} \sum_{j \in B_n} Y_j.\] We will show  that 	\begin{align*}
\Delta(n)& \equiv\E[\phi(Z_B^n)]-\E\left[\max_{1\le k\le K_n} \phi(Z_k)\right]\\
		&=\E[\phi(Z_B^n)-\phi(0)] - \E\left[\max_{1\le k\le K_n} \phi(Z_k)-\phi(0)\right]>0
	\end{align*}
for large enough $n$.
	
We start by analyzing the first difference $\E[\phi(Z_B^n)-\phi(0)]$. Using Taylor's expansion we get
\[\E[\phi(Z_B^n)-\phi(0)]= \E[\phi'(0)Z_B^n]+\E\left[\frac{\phi''(\Tilde{Z})}{2}(Z_B^n)^2\right]\]
for some $\Tilde{Z}\in [0,Z_B^n]$. Note that $\E[Z_B^n]=\E[Y]=0$. Moreover, $\phi''(\Tilde{Z})\ge c_1>0$ for some $c_1$, since $\phi$ is strictly convex. Thus
\begin{equation} \label{eq:LB}
\E[\phi(Z_B^n)-\phi(0)]\ge  c_1\E[(Z_B^n)^2]=\frac{c_1}{2^{(1-\alpha_b)n}}
\end{equation}

Next turn to $\E[\max_{1\le k\le K_n} \phi(Z_k)-\phi(0)]$.
	For each term inside the maximum we have that 
	\begin{equation} \label{eq:UB}
 \phi(Z_k)-\phi(0)\le c_2|Z_k|
 \end{equation}
	where the latter inequality follows from the fact that $\phi'$ is continuous on a compact set, and hence bounded by some $c_2\ge 0$. Thus 
	\[\E\left[\max_{1\le k\le K_n}\phi(Z_k)-\phi(0)\right]\le c_2\E\left[\max\{|Z_k|\}\right]\]
	
	From our proof of Proposition \ref{lem_abs} it follows that 
	\[\E[\max\{|Z_k|\}]\le \frac{1}{2^{(1-\alpha_h)n-1}}\sqrt{\log(K_n)}\]
	And, thus	\[\E\left[\max_{1\le k\le K_n}\phi(Z_k)-\phi(0)\right]\le c_2\frac{1}{2^{(1-\alpha_h)n-1}}\sqrt{\log(K_n)}\]

	Combining the bounds from steps 1 and 2 we get
		\[\Delta(n) \ge c_1\frac{1}{2^{(1-\alpha_b)n}}-2c_2\frac{1}{2^{(1-\alpha_h)n}}\sqrt{\log(K_n)} \]
	The RHS is positive for all large $n$ if and only if 
	\[\frac{2^{(1-\alpha_h)n}}{2^{(1-\alpha_b)n}}\xrightarrow{n\rightarrow\infty}\infty\]
since $\sqrt{\log(K_n)}$ has sub-exponential but non-constant asymptotics. This condition is satisfied if and only if $\alpha_b>\alpha_h$.

\emph{(b)} Since $-\phi$ is convex, the above arguments apply to show that 
\[\E[\phi(Z_B^n)-\phi(0)]\le -\frac{c_1}{2^{(1-\alpha_b)n}}\]
for some $c_1>0$, while
\begin{align*}
\E\left[\min_{1\le k\le K_n}\phi(Z_k)-\phi(0)\right] &= -\E\left[\max_{1\le k\le K_n}-\phi(Z_k)-(-\phi(0))\right] \\
&\ge -c_2\frac{1}{2^{(1-\alpha_h)n-1}}\sqrt{\log(K_n)}
\end{align*}
for some $c_2>0$. The desired conclusion follows.

\bigskip
Finally, we show that the bound in (\ref{eq:N}) is sufficient for the comparison in part $(a)$ of the result (with identical computations applying to part $(b)$). Suppose $\phi(.)$ is strictly convex and denote $C=\frac{2c_2}{c_1}$, where $c_1,c_2>0$ are the constants used above, respectively reflecting $\phi$'s lowest degree of convexity ($c_1=\inf_{y\in[-\overline{y},\overline{y}]}|\phi''(y)|$) and largest growth rate ($c_2=\sup_{y\in[-\overline{y},\overline{y}]}|\phi'(y)|$). Following the proof of part (a), the Black Box is preferred if
\[2^{(\alpha_b-\alpha_h)n}>C\sqrt{\log(K_n)}\]
Since $K_n\le 2^n$, this inequality is satisfied if
\[(\alpha_b-\alpha_r)n-\frac{1}{2}\log_2(n)>\log_2(C)\]
The above inequality implicitly defines a  threshold on the sufficient number of covariates $N(C)$, exceeding which the Black Box is preferred. 
\subsection{Result Extending Theorem \ref{prop:PreferHA} Part (a)}
\label{app:sqLoss}
Consider a model in which the evaluator chooses an action $a$ given the realization of the agent's covariates, and  the evaluator and agent share the payoff function $-(a-y)^2$. The following result shows that the conclusion of Part (a) of Theorem \ref{prop:PreferHA} extends for non-binary types $y$.
\begin{proposition} There exists an $N$ sufficiently large such that the agent prefers the black box evaluator for all $n\geq N$.  
\end{proposition}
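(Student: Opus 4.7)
Fix the agent's covariate vector $\bold{x}_n$, and abbreviate $\hat y_A \equiv \hat y^f_{\bold{x}_n}(A)$ and $Y \equiv f(\bold{x}_n)$; both are random through the prior on $f$. Let $\mu = \E[Y]$.

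The plan is to reduce the comparison to two exponentially-decaying quantities and to show that the Human side decays strictly faster. First I would establish the MSE-analogue of the Pythagorean identity: for any $A\subseteq\mathcal{N}$,
\[\E\bigl[(\hat y_A - Y)^2\bigr] = \E\bigl[(\hat y_\varnothing - Y)^2\bigr] - \E\bigl[(\hat y_A - \hat y_\varnothing)^2\bigr].\]
By expanding the square this reduces to $\E[(\hat y_A - \hat y_\varnothing)(\hat y_\varnothing - Y)] = -\E[(\hat y_A - \hat y_\varnothing)^2]$, which follows by direct covariance calculation in the iid model of Theorem \ref{thm:MainIID} and extends to the exchangeable case of Assumption \ref{assp:ExchangeabilityGeneral} by conditioning on the de Finetti parameter (cf.\ Section \ref{app:IIDtoExchange}). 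Applied with $A = B$ this evaluates the Black Box payoff; for Human, the pointwise identity
\[\min_H (\hat y_H - Y)^2 = (\hat y_\varnothing - Y)^2 - \max_H\!\bigl((\hat y_\varnothing - Y)^2 - (\hat y_H - Y)^2\bigr)\]
gives the analogue with a $\max_H$ term, so Black Box is preferred iff
\[\E\Bigl[\max_H\!\bigl((\hat y_\varnothing - Y)^2 - (\hat y_H - Y)^2\bigr)\Bigr] \;<\; \E\bigl[(\hat y_B - \hat y_\varnothing)^2\bigr]. \qquad (\star)\]

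To bound the left-hand side of $(\star)$, I would factor $(\hat y_\varnothing - Y)^2 - (\hat y_H - Y)^2 = (\hat y_\varnothing - \hat y_H)(\hat y_\varnothing + \hat y_H - 2Y)$ and bound the second factor by $4\overline{y}$ using $\vert \hat y_A\vert, \vert Y\vert \le \overline{y}$. Since $H = \varnothing$ is feasible, the triangle inequality gives
\[\E\Bigl[\max_H \vert \hat y_H - \hat y_\varnothing\vert\Bigr] \;\le\; 2\,\E\Bigl[\max_H \vert \hat y_H - \mu\vert\Bigr] \;=\; O\bigl(2^{-(1-\alpha_h)n}\sqrt{n}\bigr)\]
by Proposition \ref{lem_abs}, so the left-hand side of $(\star)$ is $O\!\bigl(2^{-(1-\alpha_h)n}\sqrt{n}\bigr)$. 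For the right-hand side, the iid covariance computation yields $\E[(\hat y_B - \hat y_\varnothing)^2] = \sigma^2\bigl(2^{-(n-s-b_n)} - 2^{-(n-s)}\bigr) = \Theta\bigl(2^{-(1-\alpha_b)n}\bigr)$; under Assumption \ref{assp:ExchangeabilityGeneral} the same rate holds via de Finetti, provided the mixing distribution puts positive weight on non-degenerate $F_\theta$.

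Since $\alpha_h < \alpha_b$, the exponential rate on the left of $(\star)$ is strictly faster than on the right, so $(\star)$ holds for every $n$ above some threshold $N$. The main obstacle I expect is extracting the lower bound $\E[(\hat y_B - \hat y_\varnothing)^2] = \Omega(2^{-(1-\alpha_b)n})$ under general exchangeability rather than just iid; via de Finetti this reduces to a mild non-degeneracy condition on the prior, automatically satisfied whenever the evaluators make non-constant predictions.
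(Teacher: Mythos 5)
Your proposal is correct and follows essentially the same route as the paper's proof: your Pythagorean identity is the paper's Law-of-Total-Variance step giving the Black Box a gain of order $2^{-(1-\alpha_b)n}$, your factorization with the $4\overline{y}$ bound is the paper's Lipschitz bound $u(z_k,y)-u(0,y)\le c_2|z_k|$, and both arguments then invoke Proposition \ref{lem_abs} and conclude from $\alpha_b>\alpha_h$. The only cosmetic differences are your choice of $\hat y_\varnothing$ rather than the prior mean $\mu$ as the common baseline (negligible, since they differ by $O(2^{-n/2})$) and your explicit flagging of the non-degeneracy condition $\sigma^2>0$, which the paper handles implicitly via its normalization $\E(Y_i^2)=1$.
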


\begin{proof} 
Throughout the proof  set $s=0$, $\mathbb{E}[Y]=0$ and $\sigma^2=\E(Y_i^2)=1$ without loss.  We will show  that 	\begin{align*}
\E[u(Z_B^n,y)]&-\E\left[\max_{1\le k\le K_n} u(Z_k,y)\right]\\
		&= \E[u(Z_B^n,y)-u(0,y)] - \E\left[\max_{1\le k\le K_n} u(Z_k,y)-u(0,y)\right]>0
	\end{align*}
for large enough $n$.

Let $x_B = (x_i)_{i \in B}$ denote the covariates that Black Box observes, and as before let $Z_B^n = \mathbb{E}[y \mid x_B]$ denote Black Box's (random) posterior expectation.  The optimal action choice $a=Z_B^n$ yields expected payoff $\Var(y \mid x_B)$.  By the Law of Total Variance,
$\mathbb{E}[-\Var(y\mid x_B)] = \Var(Z_B^n)-\Var(Y)$. Since additionally $\mathbb{E}[u(0,y)] = \Var(y)$, we obtain  
    \[\E[u(Z_B^n,y)-u(0,y)] = \mathbb{E}\left[(Z_B^n)^2\right] = \frac{1}{2^{(1-\alpha_B)n}}.\]
		
	Now turn to $\E\left[\max_{1\le k\le K_n} u(Z_k,y)-u(0,y)\right]$. By Lipschitz continuity of $u$, there is a constant $c_2$ such that 
	$u(z_k,y)-u(0,y)\le c_2|z_k|$ holds 
 pointwise for each realization of $(z_k,y)$. So
	\[\E\left[\max_{1\le k\le K_n}u(Z_k,Y)-u(0,Y)\right]\le c_2\E[\max\{|Z_k|\}]\]
The remainder of the proof proceeds identically to the proof of Theorem \ref{prop:PreferHA}.
\end{proof}

\section{Proofs for Results in Sections \ref{sec:Extensions} and \ref{sec:ExtendBreak}}

\subsection{Proof of Corollary \ref{corr:Max}}

We continue in the general setting outlined in the proof of Theorem \ref{thm:MainGeneral}. Fix any realization $\bold{x}_{\mathcal{S}}=(x_1, \dots, x_s)$ of the standard covariates. As in the proof of Theorem \ref{thm:Main}, there are $2^{n-s}$ covariate vectors $\bold{x}_n \in \{0,1\}^n$ with positive probability conditional on $\bold{x}_{\mathcal{S}}$. Index these by $j=1, \dots, 2^{n-s}$, and define
\[Y^{\bold{x}_{\mathcal{S}}}_j \equiv \mathbb{E}_{P^n}\left[Y \mid (X_1, \dots, X_n) = \bold{x}^j_n\right]\]
to be the expected type given covariate vector $\bold{x}_n^j$. 
For each covariate vector $\bold{x}_n$ and each disclosure set $D_k \subseteq \{s+1, \dots, n\}$, there is a corresponding set of covariate vectors $S_k$ such that the evaluator's posterior expectation after the agent discloses his covariates in set $D_k$ is
 \[Z_k^{\bold{x}_{\mathcal{S}}} = \frac{\sum_{j\in S_k}Y^{\bold{x}_{\mathcal{S}}}_j}{|S_k|}.\]
Different from the proof of Theorem \ref{thm:Main}, there are now $\overline{K}_n = \sum_{j=0}^{h_n} {n-s \choose j} 2^j$ unique sets $S_k$ (ranging over not only the different possible sets of covariates to disclose but also their values). By the Binomial Theorem,
\[\sum_{j=0}^{h_n} {n-s \choose j} 2^j \leq \sum_{j=0}^{n-s} {n-s \choose j} 2^j = 3^{n-s}.\]
Following the proof of Lemma \ref{prop:Z}, we obtain that
\[\mathbb{E}\left(\max_{1 \leq k \leq \overline{K}_n} \vert Z_k^{\bold{x}_{\mathcal{S}}} - \mu\vert\right)  \leq \frac{1}{2^{n-s-h_n}} C\sqrt{\log(\overline{K}_n)} \leq \frac{1}{2^{n(1-\alpha_h)-s}} C\sqrt{\log(3^{n-s})}\]
which again converges to zero by assumption that $\alpha_h < 1$. Finally  observe that
\begin{align*}
\mathbb{E}\left[ \max_{\bold{x}_{\mathcal{S}} \in \{0,1\}^s} \left(\max_{1 \leq k \leq K_n} \vert Z_k^{\bold{x}_{\mathcal{S}}} - \mu\vert\right) \right] &\leq \mathbb{E}\left[ \sum_{\bold{x}_{\mathcal{S}} \in \{0,1\}^s} \max_{1 \leq k \leq K_n} \vert Z_k^{\bold{x}_{\mathcal{S}}} - \mu \vert \right] \\
& = \sum_{\bold{x}_{\mathcal{S}} \in \{0,1\}^s}  \mathbb{E}\left[ \max_{1 \leq k \leq K_n} \vert Z_k^{\bold{x}_{\mathcal{S}}} - \mu \vert \right].
\end{align*}
Since each $\mathbb{E}\left[ \max_{1 \leq k \leq K_n} \vert Z_k^{\bold{x}_{\mathcal{S}}} \vert \right] \rightarrow 0$ as $n\rightarrow \infty$, the RHS converges to zero. We thus obtain the analogue of Lemma \ref{lem_abs} for the expected maximum value of context, and the remainder of the proof proceeds identically to Theorem \ref{thm:Main}.

\subsection{Proof of Proposition \ref{prop:DisclosureBound}}

Throughout this proof, we set $s=0$ for simplicity of notation.   

Let $(\sigma^*,\mu^*)$ denote a typical PBE, where $\sigma^*$ is the Sender's disclosure strategy and $\mu^*$ is the Receiver's belief function. Fixing any such equilibrium, we use $Z_{\mu^*}(d)$ to denote the Receiver's posterior expectation given disclosure $d$.  We first prove that at least one pure-strategy equilibrium always exists.

\begin{proposition}
    For every n and f there exists a pure-strategy f-context equilibrium.
\end{proposition}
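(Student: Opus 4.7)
The $f$-context disclosure game is finite (with $2^n$ sender types and finitely many feasible disclosures) and the receiver's best response to any belief is pinned down as the posterior mean of $y$, so existence of a PBE reduces to a fixed-point problem on the sender's strategy alone. My plan is to combine a Kakutani-based existence argument for mixed strategies with a purification step that exploits the type-independent structure of the sender's utility.

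The first step is to fix the receiver's off-path belief at each disclosure $d$ to concentrate on a type $\bold{x}^*(d)\in\mathcal{E}(d)$ that minimizes $u(f(\cdot))$, so that the off-path evaluation $\sigma_E(d)=f(\bold{x}^*(d))$ is a fixed function of $d$ alone, independent of the sender's strategy. On the compact convex set of mixed sender strategies, the sender's best-response correspondence---putting positive probability only on disclosures maximizing $u(\sigma_E^\sigma(\cdot))$---is then nonempty, convex-valued, and upper hemicontinuous, because the receiver's on-path action varies continuously in the sender's strategy via Bayes' rule. Kakutani's fixed-point theorem therefore delivers a mixed-strategy PBE under this pessimistic off-path rule.

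The second step is to purify. Because the sender's utility depends only on the receiver's action, in any mixed PBE every disclosure in a given type's mixing support yields the same on-path evaluation. I would iteratively assign each mixing type to a single disclosure in its support using a carefully chosen tie-breaking rule, checking at each step that no type has a profitable deviation under the updated on-path evaluations together with the still-fixed pessimistic off-path evaluations.

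The main obstacle is that purifying a mixing type shifts the on-path average at the disclosures in that type's support, potentially breaking the indifference conditions that sustained the mixed PBE and creating new profitable deviations. I would handle this by ordering the purifications so that the induced perturbations to on-path evaluations move monotonically---ensuring that purified types strictly prefer their assigned disclosure while remaining mixing types' indifference conditions are preserved---or, as a fallback, bypass the mixed step altogether and construct a pure-strategy PBE directly via a recursive algorithm that forms pools of types sharing a common disclosure, relying on the pessimistic off-path rule to rule out deviations to disclosures outside the selected pool structure.
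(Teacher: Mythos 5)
There is a genuine gap, and the route you chose is far heavier than necessary. The paper's proof is a one-paragraph construction: let every type play $\sigma^*(\bold{x}_n)=\varnothing$ (the empty disclosure, feasible for everyone), pin down the belief at $\varnothing$ by Bayes' rule, and choose off-path beliefs at every $d\neq\varnothing$ so that $u(Z_{\mu^*}(d))\le u(Z_{\mu^*}(\varnothing))$ --- e.g.\ by setting $Z_{\mu^*}(d)=Z_{\mu^*}(\varnothing)$. No deviation is then profitable and the full-pooling profile is a pure-strategy equilibrium. Your ``fallback'' of building a pooling equilibrium directly is essentially this, but you never carry it out, and the recursive pool-formation machinery you sketch is unnecessary: a single pool containing all types already works.

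The two steps you do develop both have holes. In the Kakutani step, the assertion that ``the receiver's on-path action varies continuously in the sender's strategy via Bayes' rule'' fails exactly where it matters: as the probability that the sender's mixed strategy assigns to a disclosure $d$ tends to zero, the Bayesian posterior at $d$ can converge to different limits along different sequences and then jumps to your fixed pessimistic off-path value at the boundary of the simplex. The best-response correspondence is therefore not upper hemicontinuous, and Kakutani does not apply; this discontinuity of Bayes' rule at null events is the standard obstruction to fixed-point existence proofs for PBE in signaling games, and your construction does not circumvent it. In the purification step, you correctly identify that reassigning a mixing type perturbs the on-path posterior means at every disclosure in its support and can destroy both the indifference of other mixing types and the optimality of already-purified types, but the proposed remedy --- an ordering under which the perturbations ``move monotonically'' --- is asserted rather than shown to exist, and there is no general purification theorem for this class of games you could invoke. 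As written, neither step is complete, so the argument does not establish existence; the direct pooling construction does.
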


\begin{proof}
Consider a candidate equilibrium  $(\sigma^*,\mu^*)$, where $\sigma^*(\mathbf{x}_n)=\varnothing$ for all $\textbf{x}_n\in\{0,1\}^n$ (which is 
 clearly a feasible disclosure for all agents). The Receiver's beliefs at disclosure $\varnothing$ are pinned down by Bayes' rule. For any other disclosure $d\neq \varnothing$, we construct out-of-equilibrium beliefs such that $u(Z_{\mu*}(\varnothing)) \geq u(Z_{\mu^*}(d))$. This is always possible, for example by setting $Z_{\mu*}(\varnothing)=  Z_{\mu^*}(d)$ for every $d$. Then by construction reporting $\varnothing$ is a best response for any $\textbf{x}_n$, so we are done.
\end{proof}

  Consider any function $f$ and any pure-strategy equilibrium $(\sigma^*,\mu^*)$ of the $f$-context disclosure game.  Let  $d_1, \dots, d_N$ index the  disclosures that have positive probability under $\sigma^*$ (i.e., all $d \in \mathcal{D}$ such that $\sigma^*(\bold{x}_n)=d$ for some $\bold{x}_n$). For each such disclosure $d_i$,
\[Z_{\mu^*}(d_i) = \frac{1}{\vert \{x : \sigma^*(x)=d_i\}\vert} \sum_{x : \sigma^*(x)=d_i} f(x)\]
is the evaluator's posterior expectation upon observing disclosure $d_i$. Given the evaluator's payoff function, the optimal action for the evaluator is precisely $Z_{\mu^*}(d_i)$.
Let 
\begin{equation} \label{eq:DisclosureBest}
d^*=\left(H^*,(\mathcal{X}^*_i)_{i \in H^*}\right):=\arg\max_{1 \leq i \leq N} u(Z_{\mu^*}(d_i))
\end{equation}
be the disclosure that yields the highest  payoff to the Sender.  Then it must be that $\sigma^*(\bold{x}_n)=d^*$ for every covariate vector $\bold{x}_n$ for which disclosure $d^*$ is feasible. Otherwise $d^*$ would be a profitable deviation. Hence the evaluator's posterior expectation in this equilibrium is the same as it would have been given disclosure of $d^*$ in our main model. So 
\[u(Z_{\mu^*}(d^*)) \leq \max_{\bold{x}_n \in \{0,1\}^n}v(f,\bold{x}_n).\]
Since the payoff received by an agent with any other covariate vector cannot exceed $u(Z_{\mu^*}(d^*))$ (by (\ref{eq:DisclosureBest})), we have the desired result.

\subsection{Result for Mixed Strategy Equilibria}

In this part we restrict to equilibria $(\sigma^*,\mu^*)$ with the property that
$\argmax_{\hat{y} \in A_{(\sigma^*,\mu^*)}}  u(\hat{y})$ is unique on the set $A_{(\sigma^*,\mu^*)}$ of  posterior expectations with positive probability in this equilibrium.  Call these equilibria \emph{generic}. (A sufficient condition for all equilibria to be generic  is if $u$ is strictly monotone.)

For each $n$ and $f$, let $v^D(f,\bold{x}_n)$ denote the highest payoff that an agent with covariate vector $\bold{x}_n$ receives in any generic 
equilibrium (potentially mixed) of the $f$-context disclosure game. Further define
\[v^D_f(n) = \max_{\bold{x}_n} v^D(f,\bold{x}_n)\]
and
\[V^{\mathcal{D}}(n)=\E[v^D_f(n)]\]
where the expectation is with respect to the realization of $f$. 	

 \begin{proposition}
     Suppose Assumption \ref{assp:Exchangeability}   holds and $u(.)$ is twice continuously differentiable. Then $\lim_{n \rightarrow \infty} V^{\mathcal{D}}(n) = 0$.
 \end{proposition}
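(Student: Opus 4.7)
The plan is to extend the argument of Proposition \ref{prop:DisclosureBound} to generic mixed-strategy equilibria by leveraging genericity to collapse the Bayes-updated posterior at an optimal disclosure to a simple conditional expectation over a union of hypercubes, and then bounding the deviation of that expectation from $\mu$ using the concentration machinery already developed for Corollary \ref{corr:Max}. Fix any generic equilibrium $(\sigma^*, \mu^*)$ and let $\hat{y}^*$ denote the unique payoff-maximizing posterior on $A_{(\sigma^*,\mu^*)}$, and set $D^* := \{d : Z_{\mu^*}(d) = \hat{y}^*\}$ and $S^* := \bigcup_{d \in D^*} C_d$. The first structural step is to show that every $\bold{x}_n \in S^*$ places all of its probability mass on $D^*$: by genericity any feasible $d' \notin D^*$ satisfies $u(Z_{\mu^*}(d')) < u(\hat{y}^*)$, hence mixing on such a $d'$ is strictly dominated by any $d \in D^*$ feasible for $\bold{x}_n$ (which exists by construction of $S^*$). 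Consequently the event ``disclosure lies in $D^*$'' coincides with $\{\bold{X}_n \in S^*\}$, and Bayes' rule yields
\[\hat{y}^* \;=\; \mathbb{E}\bigl[f(\bold{X}_n) \mid \bold{X}_n \in S^*\bigr] \;=\; \frac{1}{|S^*|}\sum_{\bold{x}_n \in S^*} f(\bold{x}_n),\]
a simple average that generalizes the pure-strategy identity from Proposition \ref{prop:DisclosureBound}.

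Because $u(\hat{y}^*)$ is the highest equilibrium payoff, Lipschitz continuity of $u$ on its compact domain (a consequence of $u \in C^2$) gives $v^D_f(n) \le L\,|\hat{y}^* - \mu|$ for some constant $L$, so it suffices to show $\mathbb{E}\bigl[|\hat{y}^* - \mu|\bigr] \to 0$ with the expectation also ranging over the random, equilibrium-determined choice of $S^*$. Each set $S$ in the admissible class $\mathcal{S}_n$ of unions of hypercubes $\{C_d : d \in \mathcal{D}_n\}$ has cardinality at least $2^{n-h_n-s}$, so for a single fixed $S$ the centered sample average $\bar f(S) - \mu$ is sub-Gaussian with parameter of order $2^{-(n-h_n-s)/2}$; the remaining challenge is to control the maximum when $S^*$ is chosen adversarially as a function of $f$.

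The main obstacle is that $|\mathcal{S}_n|$ can be as large as $2^{\bar{K}_n}$ with $\bar{K}_n \le 3^{n-s}$, rendering a naive union bound insufficient since $\sqrt{\log|\mathcal{S}_n|/|S^*|}$ grows rather than vanishes in $n$. The remedy is to exploit the equilibrium Bayes-consistency conditions $Z_{\mu^*}(d) = \hat{y}^*$ holding simultaneously for every $d \in D^*$, together with feasibility of mixing probabilities in $[0,1]$, which sharply restrict which pairs $(D^*, S^*)$ can actually arise for any given realization of $f$. I would implement this by expanding $|S^*|\,(\hat{y}^* - \mu)$ via inclusion-exclusion into a signed combination of unnormalized sums $\sum_{\bold{x}_n \in C_{d_T}}(f(\bold{x}_n) - \mu)$ over intersection hypercubes $C_{d_T} := \bigcap_{d \in T} C_d$, $T \subseteq D^*$, and bounding each piece via the i.i.d.\ reduction plus Gaussian-approximation steps of Theorem \ref{thm:Main} applied to the enlarged (but still manageable) family of intersection disclosures. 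The delicate part, and the main technical obstacle, is to ensure the combinatorial coefficients arising from inclusion-exclusion do not amplify the error beyond what concentration can control; once this is in hand, the bound $v_f^D(n) \le L\,|\hat{y}^* - \mu|$ together with the analogue of Corollary \ref{corr:Max} on intersection disclosures yields $V^{\mathcal{D}}(n) \to 0$.
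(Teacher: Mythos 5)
Your structural reduction is the same as the paper's: you isolate the unique top posterior $\hat y^*$ (genericity), show that every agent for whom it is feasible attains it with probability one, and derive the pooling identity $\hat y^* = \frac{1}{|S^*|}\sum_{\mathbf{x}_n\in S^*} f(\mathbf{x}_n)$ over the union of cells $S^*=\bigcup_{d\in D^*}C_d$ --- this is exactly the paper's step of comparing the two Bayes-plausibility equations for the pool $\mathcal{X}^*$ and concluding $Z^*_{(1)}=\frac{1}{|\mathcal{X}^*|}\sum_{x\in\mathcal{X}^*}Y_x$. The gap is in how you finish. You correctly note that $S^*$ is a random, $f$-dependent union of cells, reject the naive union bound over all (up to $2^{\overline K_n}$, $\overline K_n\le 3^{n-s}$) such unions, and then propose an inclusion--exclusion expansion over intersection hypercubes $C_{d_T}$ whose central difficulty --- controlling the signed combinatorial coefficients --- you explicitly leave unresolved. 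As written the argument does not close: the ``delicate part'' you flag is precisely the step that would have to be proved, and it is not clear it can be, since the number of subsets $T\subseteq D^*$ is itself doubly exponential in $n$ and the alternating coefficients are not uniformly bounded, so the concentration gains from Theorem \ref{thm:Main} applied to each intersection cell can in principle be swamped.

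The paper's own proof avoids this machinery entirely. Having obtained the pooling identity, it observes that $C_d\subseteq\mathcal{X}^*$ for every on-path $d\in\mathcal{D}(\mathcal{X}^*)$, hence $|\mathcal{X}^*|\ge 2^{n-h_n}\rightarrow\infty$ because $\alpha_h<1$, and then invokes the law of large numbers for the exchangeable (conditionally i.i.d., via de Finetti) $Y_x$'s together with uniform boundedness to conclude $\mathbb{E}[Z^*_{(1)}]\rightarrow\mathbb{E}[Y]$; no maximum over candidate sets $S^*$ is taken anywhere. Your instinct that the $f$-dependence of $S^*$ deserves attention is legitimate --- the paper's one-line LLN assertion does not explicitly engage with the selection issue you raise --- but you have replaced a terse step with an elaborate one that you do not complete, rather than supplying a working argument. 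To repair your write-up, either follow the paper and argue convergence of the sample average directly from $|\mathcal{X}^*|\rightarrow\infty$ and exchangeability, or find a genuine uniform-concentration bound over the relevant class of unions; the inclusion--exclusion sketch as it stands is neither.
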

	
	\begin{proof}
		
		Fix $n$, $f$, and a context equilibrium $(\sigma^*,\mu^*)$ of the $f$-context disclosure game. Let $\mathcal{Z}^*\subseteq [-\overline{y},\overline{y}]$ be the compact set of all  equilibrium posterior expectations that are realized  with positive probability in this equilibrium. Further, denote 
		\[Z_{(1)}^*=\arg \max_{z\in \mathcal{Z}^*}u(z)\]
		 to be the most-preferred achievable posterior expectation, which is unique by assumption of genericity of the equilibrium. 
		
		Since $Z_{(1)}^*$ is the best attainable posterior expectation, an agent achieves  $Z_{(1)}^*$ in equilibrium if and only if it is feasible. (Otherwise, the agent can profitably deviate to the feasible disclosure that induces this posterior expectation.) 
  
  Let $\mathcal{X}^* \subseteq \{0,1\}^n$
		denote the set of agents who have a feasible disclosure that achieves $Z_{(1)}^*$. Let $\mathcal{D}(\mathcal{X}^*)$  be the set of disclosures that agents in $\mathcal{X}^*$ send with positive probability in equilibrium. By the logic above, $\mathcal{D}(\mathcal{X}^*)\cap \mathcal{D}(\mathcal{X}\setminus \mathcal{X}^*)=\varnothing$. Using the structure of this equilibrium we can write
  \begin{equation}
  \E[Y]=Z_{(1)}^*p_{\mathcal{X}^*}+(1-p_{\mathcal{X}^*})\E[Y|X \notin \mathcal{X}^*]
  \label{eqn:bp_eq}
  \end{equation}
  where $p_{\mathcal{X}^*}$ is the ex-ante probability that the agent's covariate vector belongs to $\mathcal{X}^*$, and $\E[Y|X \notin \mathcal{X}^*]$ is the expectation of the agent's type given that his covariate vector does not belong to $\mathcal{X}^*$.
Here  we utilize the fact that the evaluator's posterior expectation is constant at $Z_{(1)}^*$ across all agents with covariate vectors in $\mathcal{X}^*$.\footnote{In general this does not have to be the case. We rule this out in the definition of the equilibrium.} 
		
		Now, consider the following alternative ``strategy'' $\sigma_0$, which relaxes the feasibility constraint: For any $\bold{x}\in \mathcal{X}\setminus \mathcal{X}^*$  let $\sigma_0(\textbf{x})\equiv \sigma^*(\textbf{x})$, i.e., the disclosures are the same as in the original equilibrium. Further choose some arbitrary disclosure $d_0\in \mathcal{D}(\mathcal{X}^*)$ and  let $\sigma_0(\textbf{x})=d_0$  for all $\bold{x}\in \mathcal{X}^*$. The Receiver's posterior expectation following observation of disclosure $d_0$ is
		\[Z_0=\frac{\sum_{x\in \mathcal{X}^*}Y_x}{|\mathcal{X}^*|}\]
		and, analogous to (\ref{eqn:bp_eq}), we can write
   \begin{equation}
\E[Y]=Z_{0}p_{\mathcal{X}^*}+(1-p_{\mathcal{X}^*})\E[Y|X \notin \mathcal{X}^*]
  \label{eqn:bp_str}
  \end{equation}
Combining equations (\ref{eqn:bp_eq}) and (\ref{eqn:bp_str}) we conclude:
		\[Z_{(1)}^*=\frac{\sum_{x\in \mathcal{X}^*}Y_x}{|\mathcal{X}^*|}\]		
		which almost surely converges to $\E[Y]$ so long as $|\mathcal{X}^*|\xrightarrow{n\rightarrow\infty}\infty$. Since the $Y_x$'s are uniformly bounded, this also implies $\mathbb{E}[Z^*_{(1)}] \rightarrow \mathbb{E}[Y]$, as desired. We now  demonstrate that indeed  $|\mathcal{X}^*|\xrightarrow{n\rightarrow\infty}\infty$.	
  
  For any disclosure $d$ denote by $C_{d}\subseteq \{0,1\}^n$ the set of all covariate vectors $\textbf{x}$ given which $d$ is feasible. 
  Since $Z_{(1)}^*$ is achieved by all agents for whom $Z_{(1)}^*$ is feasible, it must be that for every disclosure $d\in \mathcal{D}(\mathcal{X}^*)$ 
  we have  $C_{d}\subseteq \mathcal{X}^*$. Then for any $d \in \mathcal{D}(\mathcal{X}^*)$,
		\[|\mathcal{X}^*|\ge |C_{d}|\xrightarrow{n\rightarrow\infty}\infty.\] where the limit follows by assumption that $\alpha_h<1$. 	This completes the proof.	
	\end{proof}

\subsection{Proof of Proposition \ref{prop:Effective}}
We again continue in the general setting outlined in the proof of Theorem \ref{thm:MainGeneral}, and adopt the conventions that $\mathbb{E}(Y)=\mu$ while $\Var(Y) = 1$. We prove the result for a weakening of Assumptions \ref{assp:ExchangeabilityRelevant} and \ref{assp:ConstantVarYRelevant} to the following.

\begin{assumption} \label{assp:Irrelevant} Fix any realization of the standard covariates $\bold{x}_{\mathcal{S}} \in \{0,1\}^s$. There is an infinitely exchangeable sequence $(\widetilde{Y}_1, \widetilde{Y}_2, \dots)$ such that for every $n\in \mathbb{N}$, the sequence 
\[(Y_{\bold{x}_{R_n}, \bold{x}_{-R_n}}: (x_i)_{i \in R_n \backslash \mathcal{S}} \in \{0,1\}^{r_n-s})\]
has the same distribution as $(\widetilde{Y}_1, \dots, \widetilde{Y}_{2^{r_n}})$. 
\end{assumption}

Recalling that $r_n$ is the number of relevant covariates, there are $2^{r_n}$ distinct expected conditional types, which we can enumerate as $Y_1, \dots, Y_{2^{r_n}}$. If disclosure $k$ involves disclosing $k_r$ relevant covariates, then there is a set $S_k$ of size $2^{r_n-k_r}$ such that the evaluator's posterior expectation can be written
\[Z_k = \frac{1}{2^{n-h_n}} \sum_{j \in S_k} 2^{n-r_n - (h_n-k_r)} Y_j = \frac{1}{2^{r_n-k_r}} \sum_{j \in S_k} Y_j.\]
As in Step 1 of the proof of Theorem \ref{thm:Main} (Section \ref{proof:Step1}), replace each $Y_j$ with a variable $Y_j^k \stackrel{d}{=} Y_j$ which is independent across disclosure sets. This yields the random variables
\[Z_k^{ind} = \frac{1}{2^{r_n-k_r}} \sum_{j \in S_k} Y^k_j.\]
As in the proof of Proposition \ref{prop:Ind}, it follows from Lemma \ref{fosd_g} that 
\[\mathbb{E}[\max\{Z_1, \dots, Z_{K_n}\}] \leq \mathbb{E}[\max\{Z_1^{ind}, \dots, Z_{K_n}^{ind}\}].\]

Next define 
\[Z^{iid}_k = \frac{1}{2^{r_n-h_n}} \sum_{j=1}^{2^{r_n-h_n}} Y_j^k\]
and note that these are identically and independently distributed with shared variance
\[Var(Z^{iid}_k) = \frac{1}{2^{r_n-h_n}}.\]
Following the arguments in Step 2 of the proof of Theorem \ref{thm:Main} (Section \ref{proof:Step2}), we get
\[\mathbb{E}[\max\{Z^{ind}_1, \dots, Z^{ind}_{K_n}\}] \leq \mathbb{E}[\max\{Z^{iid}_1, \dots, Z^{iid}_{K_n}\}].\]
where as before $K_n=\sum_{j=0}^{h_n}\binom{n}{j}$. Further, by the argument given in Step 3 of the proof of Theorem \ref{thm:Main} (Section \ref{proof:Step3}),
\[\lim_{n \rightarrow \infty} \vert V^{iid}_n - V^N_n \vert = 0\]
where
\[V^N_n\equiv \E\left[\maxx{Z_1^N, \dots, Z_{K_n}^N}\right]\]
and $Z_k\sim \mathcal{N}\left(\mu,\frac{1}{2^{r_n-h_n}}\right)$. Again applying the bound from \citet{Berman1964}, we have
\[V^N_n\le \frac{1}{2^{r_n-h_n}} C\sqrt{\log(K_n)}\le\frac{1}{2^{n(\alpha_r-\alpha_h)}} C\sqrt{n}.\]
By assumption that $\alpha_r > \alpha_h$, the right-hand expression converges to zero as $n$ grows large, concluding the proof.

\subsection{Supporting Materials for Section \ref{sec:NonConstant}}
\label{app:accum}

We show here that both main results generalize to the setting with noise. Instead of repeating the proof step by step, we emphasize what changes must be made in order for the proofs to translate. To be consistent with our previous notation, we again operate  with $\Tilde y_{i,n}=Y_i+\varepsilon_n$ where $i$ enumerates covariate vectors. In all subsequent notation a tilde will indicate an object that includes noise $\varepsilon_n$, and objects without one are the same as in the main text. 
Observe that after adding the noise term $\varepsilon_n$,  the key objects from the proofs transform in the following way: $Z_k$ is replaced by
$\Tilde{Z}_k=Z_k+\varepsilon_n$, and 
$\E[\max_{k}Z_k]$ is replaced by $\E[\max_{k}\Tilde{Z}_k]$. Since $\E[\max_{k}\Tilde{Z}_k]=\E[\max_{k}Z_k]$, the proof of Theorem \ref{thm:Main} translates directly.

To demonstrate Theorem \ref{prop:PreferHA}, we will show how to adjust the proof of part (a) with part (b) following analogously. The first change is that (\ref{eq:LB}) becomes
\[\E[\phi(\Tilde Z_B^n)-\phi(0)]\ge c_1\E[(Z_B^n)^2]-c_1 \sigma_{\varepsilon,n}^2\]
The inequality in (\ref{eq:UB}) is also modified to
\[\phi(\Tilde{Z}_k)-\phi(0)\le c_2|Z_k|+c_2|\varepsilon_n|\]
for every disclosure $k$. Thus we obtain
\[\Delta(n)\ge c_1\frac{1}{2^{(1-\alpha_b)n}}-2c_2\frac{1}{2^{(1-\alpha_h)n}}\sqrt{\log(K_n)}+c_2\E[|\varepsilon_n|]-c_1 \sigma_{\varepsilon,n}^2\]

We will show that the ratio $\frac{\E[|\varepsilon_n|]}{\sigma_{\varepsilon,n}^2}$ grows arbitrary large with $n$, thus asymptotically exceeding $\frac{c_1}{c_2}$. Fixing some $d>0$ and applying Markov inequality, we obtain
\begin{equation}
    \frac{\E[|\varepsilon_n|]}{\sigma_{\varepsilon,n}^2}\ge d\cdot \P\left(\frac{|\varepsilon_n|}{\sqrt{\sigma_{\varepsilon,n}^2}}\ge d\sqrt{\sigma_{\varepsilon,n}^2}\right)
    \label{ineq:Markov}
\end{equation}
If we denote the CDF of $\frac{\varepsilon_n}{\sqrt{\sigma_{\varepsilon,n}^2}}$ as $G_n$, the RHS of the above inequality can be rewritten as $d\cdot \left(1+G_n(-d\sqrt{\sigma_{\varepsilon,n}^2}))-G_n(d\sqrt{\sigma_{\varepsilon,n}^2})\right)$. As $n$ grows large, the term in brackets tends to $1-2g_n(0)d\sqrt{\sigma_{\varepsilon,n}^2}+o\left(\sqrt{\sigma_{\varepsilon,n}^2}\right)$. We will omit the $o(\cdot)$ term until the end of the proof.

Fix an arbitrary $\delta>0$ and let $d=\frac{c_1}{c_2}+2\delta$. Further, fix $N$ such that $\sqrt{\sigma_{\varepsilon,n}^2}\le \frac{1}{2gd}(1-\frac{c_1+c_2\delta}{c_1+2c_2\delta})$, where $g=\max_n g_n(0)<\infty$. Then since $\sigma_{\varepsilon,n}^2$ is decreasing and $g_n(0)\le g$ we have that for all $n\ge N$
\[2g_n(0)d\sqrt{\sigma_{\varepsilon,n}^2}\le 2gd\frac{1}{2gd}\left(1-\frac{c_1+c_2\delta}{c_1+2c_2\delta}\right)\]
Combining this inequality with (\ref{ineq:Markov}) we get
\[\frac{\E[|\varepsilon_n|]}{\sigma_{\varepsilon,n}^2}\ge \left(\frac{c_1}{c_2}+2\delta\right)\left(1-\left(1-\frac{c_1+c_2\delta}{c_1+2c_2\delta}\right)\right)+o\left(\sqrt{\sigma_{\varepsilon,n}^2}\right)=\left(\frac{c_1}{c_2}+\delta\right)+o\left(\sqrt{\sigma_{\varepsilon,n}^2}\right)\]
for all $n \geq N$. Since $\delta$ is an arbitrary positive number, this concludes the proof.

\subsection{Proof of Proposition \ref{prop:non-bayesian}}

Throughout the proof we assume $u(x)\equiv x$ and $s=0$.  In addition, for simplicity of notation, we enumerate feasible disclosures by $k$
 and denote the corresponding posteriors  (as random variables) as 
$Z_k^n:=\rho_f(d_k)$.  To upper bound the value of context, we apply a result from \cite{arnold1979bounds}: 
	\begin{equation}
 \begin{split}
	&\left|\E\left[\max_{k\in\{1,...,K_n\}}Z_k^n-\E\left[\frac{\sum_{i=1}^{K_n}Z_i^n}{K_n}\right]\right]\right|\le 
	  \\
&   
	\sqrt{\left(1-\frac{1}{K_n}\right)\sum_{i=1}^{K_n}Var(Z_i^n)+
	  \frac{1}{K_n}\sum_{i=1}^{K_n}\left(\sqrt{K_n}\left(\E[Z_i^n]-\frac{\sum_{i=1}^{K_n}\E[Z_i^n]}{K_n}\right)\right)^2} 
   \label{eq:non-bayes_upper}
   \end{split}
   \end{equation}
By Assumption \ref{assp:Unbiased}, inequality \ref{eq:non-bayes_upper} simplifies to 
\[\left|\E\left[\max_{k\in\{1,...,K_n\}}Z_k^n\right]-\mu \right|\le 
	\sqrt{\left(1-\frac{1}{K_n}\right)\sum_{i=1}^{K_n}Var(Z_i^n)}\]
		 Finally, Assumption \ref{assp:Concentrate} implies that  $Var(Z_k^n)=o(\frac{1}{K_n})$ for every disclosure $k$. Hence
   
\[\left|\E\left[\max_{k\in\{1,...,K_n\}}Z_k^n\right]-\mu \right|\le 
	\sqrt{\left(1-\frac{1}{K_n}\right)K_no(K_n^{-1})}\]
 which yields the desired result after taking a limit in $n$. The argument for the lower bound  follows the same line of reasoning and is thus omitted.

\end{document}